\newtheorem{thm}{Theorem}
\newtheorem{prop}{Proposition}
\numberwithin{equation}{section}
\newtheorem{remark}{Remark}
\DeclareMathAccent{\wtilde}{\mathord}{largesymbols}{"65}
\DeclareMathAccent{\what}{\mathord}{largesymbols}{"62}
\def\m@th{\mathsurround=0pt}
\mathchardef\bracell="0365
\def\upbrall{$\m@th\bracell$}
\def\undertilde#1{\mathop{\vtop{\ialign{##\crcr
    $\hfil\displaystyle{#1}\hfil$\crcr
     \noalign
     {\kern1.5pt\nointerlineskip}
     \upbrall\crcr\noalign{\kern1pt
   }}}}\limits}
\newcommand{\wh}{\widehat}
\newcommand{\wt}{\widetilde}
\newcommand{\Ps}{\boldsymbol{\psi}}
\newcommand{\Ph}{\boldsymbol{\phi}}
\newcommand{\Rho}{\boldsymbol{\rho}}
\newcommand{\Om}{\boldsymbol{\Omega}}
\newcommand{\Ome}{\boldsymbol{\omega}}
\newcommand{\Ga}{\boldsymbol{\Gamma}}
\newcommand{\al}{\alpha}
\newcommand{\be}{\beta}
\newcommand{\nn}{\nonumber}
\newcommand{\st}{\hbox{\tiny\it{T}}}
\newcommand{\ba}{\boldsymbol{a}}
\newcommand{\bb}{\boldsymbol{b}}
\newcommand{\bc}{\boldsymbol{c}}
\newcommand{\bd}{\boldsymbol{d}}
\newcommand{\bA}{\boldsymbol{A}}
\newcommand{\bB}{\boldsymbol{B}}
\newcommand{\bXi}{\boldsymbol{\Xi}}
\newcommand{\bP}{\boldsymbol{P}}
\newcommand{\bQ}{\boldsymbol{Q}}
\newcommand{\bG}{\boldsymbol{G}}
\newcommand{\bM}{\boldsymbol{M}}
\newcommand{\bK}{\boldsymbol{K}}
\newcommand{\bL}{\boldsymbol{L}}
\newcommand{\bT}{\boldsymbol{T}}
\newcommand{\bI}{\boldsymbol{I}}
\def \c#1{\accentset{\circ}{#1}}
\def \dc#1{\underaccent{\circ}{#1}}
\def \d#1{\accentset{\bullet}{#1}}
\def \dd#1{\underaccent{\bullet}{#1}}
\newcommand{\wth}[1]{\what{\wtilde{#1}}}
\newcommand{\wtd}[1]{\d{\wtilde{#1}}}
\newcommand{\whd}[1]{\d{\what{#1}}}
\newcommand{\wthd}[1]{\d{\what{\wtilde{#1}}}}
\def \dth#1{\underaccent{\what}{\underaccent{\wtilde}{#1}}}
\def \dh#1{\underaccent{\what}{#1}}
\def \dt#1{\underaccent{\wtilde}{#1}}
\def \dhc#1{\underaccent{\circ}{\underaccent{\what}{#1}}}
\def \dtc#1{\underaccent{\circ}{\underaccent{\wtilde}{#1}}}
\def \dthc#1{\underaccent{\circ}{\underaccent{\what}{\underaccent{\wtilde}{#1}}}}
\def \dhd#1{\underaccent{\bullet}{\underaccent{\what}{#1}}}
\def \dtd#1{\underaccent{\bullet}{\underaccent{\wtilde}{#1}}}
\def \dthd#1{\underaccent{\bullet}{\underaccent{\what}{\underaccent{\wtilde}{#1}}}}
\begin{document}
	
\title{Bilinear structures of the fourth-order lattice Gel'fand-Dikii equations}	

\author{Song-lin Zhao$^{1}$,~~Han Wang$^{1}$,~~Da-jun Zhang$^{2,3}$\footnote{
Corresponding author. Email: djzhang@staff.shu.edu.cn}\\
{\small $^{1}$School of Mathematical Sciences, Zhejiang University of Technology, Hangzhou 310023, China}\\
{\small $^{2}$Department  of Mathematics, Shanghai University, Shanghai 200444, China}\\
{\small $^{3}$Newtouch Center for Mathematics of Shanghai University,  Shanghai 200444,  China}
}

\maketitle

\begin{abstract}

In this paper we derive bilinear forms and present their solutions in Casoratians
for several fourth-order lattice Gel'fand-Dikii (lattice GD-4)  equations.
These equations were recently formulated from the  direct linearization approach
and exhibit the multidimensionally consistent property in multi-component form.
Based on the obtained soliton solutions,
we are able to extend these equations by introducing a parameter $\delta$.
These $\delta$-extended lattice GD-4  type equations are still consistent around the cube,
and their bilinear forms together with Casoratian solutions are provided.

\begin{description}
\item[Keywords:] fourth-order lattice Gel'fand-Dikii  equations;
bilinearization, Casoratian solution
\end{description}

\end{abstract}


\section{Introduction}\label{sec-1}

Compared with the continuous and differential-difference case,
the discrete Gel'fand-Dikii (GD) type equations play important roles
in the research of discrete integrable systems.
This is because it is too hard to derive them as reductions from the lattice Kadomtsev-Petviashvili (KP) equations.
This makes the difficulty in particular in understanding the high order lattice GD-type equations
(e.g. the lattice Boussinesq (BSQ)-type)
from the lattice KP equations.
The discretization (in both space and time) of the GD hierarchy was proposed for the first time in \cite{NPCQ}
from the perspectives of direct linearization approach,
where the resulting lattice hierarchy takes
the (regular) quadrilateral  lattice potential KdV (lpKdV) equation and
the $3\times 3$ stencil lattice potential BSQ (lpBSQ) equation as two simplest members
in the second order and third order, respectively.
They have Miura-related counterparts,
namely, the lattice potential modified KdV (lpmKdV) equation \cite{NC-lKdV}
and lattice potential modified BSQ (lpmBSQ) equation \cite{NPCQ}.
In addition, there are also lattice Schwarzian KdV (lSKdV) equation \cite{NC-lKdV}
and lattice Schwarzian BSQ (lSBSQ) equation \cite{Nij-Schwarzian-1997}.
The lattice KdV type equations can be formulated from
the  lpKdV  equation together with its eigenfunction equations \cite{WZZZ-JPA-2024,SZZ-MPAG-2025,SZZ-PA-2026},
and they are all included in the
Adler-Bobenko-Suris (ABS) list \cite{ABS-2003} which are consistent-around-cube (CAC).
The lattice BSQ type equations are also CAC in three-component form \cite{Hie,Tongas,Walker,ZZN-SAPM}.
For the latter equations one can refer to the review paper \cite{HZ-BSQ-book} and references therein.
Recently, in \cite{Tela}, Tela et al. utilized the direct linearization approach to present
some lattice fourth-order GD (GD-4) type equations in quadrilateral form,
which are related to a quartic discrete dispersion relation (see Eq.\eqref{e-curve-4}).
These lattice equations can be viewed as higher-order members in the lattice GD hierarchy,
while represent the regular, modified and Schwarzian type equations,
but appear as much more complicated forms of five-component difference equations.
Compared with the lattice BSQ case (see \cite{Hie,HZ-BSQ-book,ZZN-SAPM}) that are all multidimensionally consistent,
the lattice GD-4 type equations have something special \cite{Tela}.
More concretely,
the number of equations in a quadrilateral lattice GD-4 system
has to be bigger than the number of dependent variables,
so that the evolution and the multidimensional consistency property of the system
are guaranteed simultaneously.
These lattice GD-4 type equations are expected to play more roles in understanding  discrete integrable systems.
More recently, in \cite{Kamp-quadrilinear},
periodic reductions and Laurent property for a lattice GD-4 equation were investigated, where
a Somos-like integer sequence for the associated quadrilinear equations was established.

It is  hard to study  the nonlinear lattice GD type equations from reductions of the lattice KP equations.
In fact,  it is also not clear enough how their bilinear forms are related to
the bilinear lattice KP equation, namely, the Hirota equation or the Hirota-Miwa equation.
A recent progress is due to \cite{WZM-PD-2024}
which builds the connections between the bilinear forms  of some ABS equations
and the Hirota-Miwa equation.
However, for the lattice BSQ-type equations, how their bilinear forms
(see \cite{HZ-BSQ-book}) are obtained as reductions from the Hirota-Miwa equation is not known yet.

The purpose of this paper is to explore the lattice GD-4 type equations obtained in \cite{Tela}
from a bilinear perspective.
For the lattice equations in the GD hierarchy,
some ABS equations have been bilinearized and solutions are presented in Casoratians \cite{HZ-Bi}.
The study was  extended to the lattice BSQ type equations
in a series of papers \cite{HieZh-BSQ-2010,HieZh-BSQ--2011,Nong-Phd-2014}
and a comprehensive review can be found in \cite{HZ-BSQ-book}.
So far the bilinearization of the lattice GD-4 type equations is still unrevealed.
In this paper we will study the lattice  GD-4 (A-2), (B-2) and (C-3) equations obtained in \cite{Tela}
and present their bilinear forms and Casoratian solutions.
Note that these names of equations follow those given by Hietarinta
for the lattice BSQ type equations \cite{Hie}.

The paper is organized as follows.
Section \ref{sec-2} concerns some preliminaries, involving the lattice GD-4 type equations,
discrete bilinear forms and Casoratians.
Section \ref{sec-3} is contributed to the bilinearization
of the lattice GD-4 type equations.
Then, in Section \ref{sec-4}, for each  lattice GD-4 type equation, we provide its
soliton solutions in Casoratian form.
Moreover, a parameter-extension of each equation
is obtained, which allows us to have more types of exact solutions, such as solitons, Jordan-block solutions
and quasi-rational solutions.
We provide enough proving details in this section for the GD-4 (B-2) equation
and sketch the results and proofs for the other equations.
Conclusions are given in Section \ref{sec-6}.
There are four appendices where some formulae for the shifts of Casoratians are listed.

\section{Preliminaries}\label{sec-2}

The present section deals with some preliminaries,
including the lattice GD-4 type equations in Hietarinta's form,
basic concepts and properties on discrete bilinear forms,
as well as definition and properties for Casoratians.

\subsection{Lattice GD-4 type equations}\label{sec-2.1}

The lattice GD-4 type equations derived in \cite{Tela}
are determined by a quartic dispersion relation
\begin{align}
\label{e-curve-4}
G(\omega,k):=\omega^4-k^4+\alpha_3(\omega^3-k^3)+\alpha_2(\omega^2-k^2)+\alpha_1(\omega-k)=0,
\end{align}
where $\{\alpha_j\}$ are complex coefficients.
Note that in the direct linearization scheme,  the lattice KdV type
and BSQ type equations are governed by
a quadratic relation ($\omega^2-k^2+\al_1(\omega-k)=0$) and cubic relation
($\omega^3-k^3+\al_2(\omega^2-k^2)+\al_1(\omega-k)=0$) \cite{ZZN-SAPM}, respectively.

These lattice GD-4 type equations  are listed as follows.\footnote{The names (A-2), (B-2) and (C-3)
correspond to Hietarinta's naming for lattice BSQ type equations \cite{Hie} (also see \cite{ZZN-SAPM}).}

\noindent\textbf{GD-4 (A-2):}
\begin{subequations}
\label{eq:A2-111}
\begin{align}
\label{eq:A2-ex_a}
& \wt{y}=z\wt{x}-x, \quad \wh{y}=z\wh{x}-x, \\
\label{eq:A2-ex_b}
& \wt{\eta}\wh{x}-\wh{\eta}\wt{x}=y(\wt{x}-\wh{x}), \\
\label{eq:A2-ex_bb}
& \wh{\xi}-\wt{\xi}=(\wh{z}-\wt{z})\wh{\wt{z}}, \\
\label{eq:A2-ex_eta-th}
&\wh{\wt{\eta}}=\frac{(\wt{y}\wh{z}-\wh{y}\wt{z})z-\xi(\wt{y}-\wh{y})}{z(\wt{z}-\wh{z})},\\
\label{eq:A2-ex_c}
& \eta=-x\wh{\wt{\xi}}+y\wh{\wt{z}}+\alpha_3(y-x\wh{\wt{z}})-\alpha_2 x
+\frac{G(-p,-a)\,\wt{x} -G(-q,-a)\,\wh{x}}{\wh{z}-\wt{z}}.
\end{align}
\end{subequations}

\noindent\textbf{Alternative GD-4 (A-2):}
\begin{subequations}
\label{eq:A2-222}
\begin{align}
\label{eq:A2-ex-N4-1 a}
& y=\wt{z} x-\wt{x}, \quad y=\wh{z}x-\wh{x}, \\
\label{eq:A2-ex-N4-1 b}
& \wt{\eta}\wh{x}-\wh{\eta}\wt{x}=\wh{\wt{y}}(\wt{x}-\wh{x}), \\
\label{eq:A2-ex-N4-1 bb}
& \wh{\xi}-\wt{\xi}=(\wh{z}-\wt{z})z, \\
\label{eq:alt-A2-ex_eta}
& \wh{\wt{\xi}}=\frac{-\eta(\wh{x}-\wt{x})+y(\wh{y}-\wt{y})+ \wt{x}\wh{y}
- \wh{x}\wt{y}}{x(\wh{x}-\wt{x})},\\
\label{eq:A2-ex-N4-1 c}
&\wh{\wt{\eta}}=-\wh{\wt{x}}\xi+\wh{\wt{y}}z-\alpha_3(\wh{\wt{y}}-\wh{\wt{x}}z)
-\alpha_2 \wh{\wt{x}}
-\frac{G(-p,-b)\wh{x}-G(-q,-b)\wt{x}}{\wh{z}-\wt{z}}.
\end{align}
\end{subequations}

\noindent\textbf{GD-4 (B-2):}
\begin{subequations}
\label{eq:B222}
\begin{align}
\label{eq:BSQ-ex_a}
& \wt{z}=x\wt{x}-y, \quad \wh{z}=x\wh{x}-y, \\
\label{eq:BSQ-ex_b}
& \wh{\xi}-\wt{\xi}=(\wh{x}-\wt{x})\wh{\wt{y}}, \\
\label{eq:BSQ-ex_bb}
&  \wh{\eta}-\wt{\eta}=(\wh{x}-\wt{x})z, \\
\label{eq:B2-ex_eta-th}
& \wh{\wt{\eta}}=\xi + \frac{(\wh{x}\wt{y}-\wt{x}\wh{y})x-y(\wt{y}-\wh{y})}{\wt{x}-\wh{x}},\\
\label{eq:BSQ-ex_c}
& \eta=\wh{\wt{\xi}}-\wh{\wt{y}}x+\wh{\wt{x}}z+\alpha_3(\wh{\wt{y}}-x\wh{\wt{x}}+z)
+\alpha_2(\wh{\wt{x}}-x)+\alpha_1+\frac{G(-p,-q)}{\wh{x}-\wt{x}}.
\end{align}
\end{subequations}

\noindent\textbf{GD-4 (C-3):}
\begin{subequations}
\label{eq:xyz-MSBSQ}
\begin{align}
\label{eq:xyz-MSBSQ-a}
& x-\wt{x}=\wt{y}z, \quad x-\wh{x}=\wh{y}z, \\
\label{eq:xyz-MSBSQ-b}
& (\wt{y}_1+y)\wh{y}=(\wh{y}_1+y)\wt{y}, \quad (\wh{z}_1-\wh{\wt{z}})\wt{z}
=(\wt{z}_1-\wh{\wt{z}})\wh{z}, \\
\label{eq:xyz-MSBSQ-c}
& y \wh{\wt{z}}_1-y_1\wh{\wt{z}}=
z\frac{G(-p,-b)\wh{z}\wt{y}-G(-q,-b)\wt{z}\wh{y}}{\wh{z}-\wt{z}}-\alpha_3y\wh{\wt{z}}
+G(-a,-b)\wh{\wt{x}}.
\end{align}
\end{subequations}

\noindent\textbf{Alternative GD-4 (C-3):}
\begin{subequations}
\label{eq:xyz-MSBSQ-1}
\begin{align}
\label{eq:xyz-MSBSQ-1a}
& x-\wt{x}=\wt{y}z, \quad x-\wh{x}=\wh{y}z, \\
\label{eq:xyz-MSBSQ-1b}
& (\wt{y}_1+y)\wh{y}=(\wh{y}_1+y)\wt{y},
\quad (\wh{z}_1-\wh{\wt{z}})\wt{z}=(\wt{z}_1-\wh{\wt{z}})\wh{z}, \\
\label{eq:xyz-MSBSQ-1c}
& y\wh{\wt{z}}_1-y_1 \wh{\wt{z}}=z\frac{G(-p,-a)\wh{z}\wt{y}-G(-q,-a)\wt{z}\wh{y}}{\wh{z}-\wt{z}}
-\alpha_3y\wh{\wt{z}}+G(-a,-b)x.
\end{align}
\end{subequations}
The notations adopted in \eqref{eq:A2-111}-\eqref{eq:xyz-MSBSQ-1} are as follows.
All dependent variables $x, y, z, \xi, \eta, y_1$ and $z_1$ are functions
defined on $\mathbb{Z}^2$ with discrete coordinates $(n,m)\in\mathbb{Z}^2$;
$p$ and $q$ respectively serve as spacing parameters in $n$- and $m$-direction;
$a$ and $b$  are parameters.
Conventional shorthands for shifts in the two directions are employed, which are, e.g.
\[x=x_{n, m},~~ \wt{x}=x_{n+1,m},~~ \wh{x}=x_{n,m+1},~~ \wh{\wt{x}}=x_{n+1,m+1}.\]
In addition, the backward shifts are denoted as $\dt{x}=x_{n-1,m}$, $\dh{x}=x_{n,m-1}$, etc.

Among these equations, the
GD-4 (A-2) equation \eqref{eq:A2-111} and its alternative form \eqref{eq:A2-222},
and the GD-4 (B-2) equation \eqref{eq:B222},
are respectively composed by  six equations but with five dependent variables.
The involvement of equations \eqref{eq:A2-ex_eta-th},
\eqref{eq:alt-A2-ex_eta} and \eqref{eq:B2-ex_eta-th} enable the corresponding systems
to allow up-right evolution with staircase initial values \cite{Tela}.
Although some equations (e.g. \eqref{eq:A2-ex_b}, \eqref{eq:A2-ex-N4-1 bb}
and \eqref{eq:BSQ-ex_bb}) are not necessary to the systems where they are to get up-right evolutions,
they are indispensable for their systems to be multidimensionally consistent \cite{Tela}.
The equation \eqref{eq:A2-222} is named as the alternative
form of \eqref{eq:A2-111} since it is related to \eqref{eq:A2-111} by reflection transformations
\begin{align}
p\rightarrow -p,\quad q\rightarrow -q,\quad n\rightarrow -n,\quad m\rightarrow -m,\quad
\alpha_3\rightarrow -\alpha_3,\quad \alpha_1\rightarrow -\alpha_1,\quad b\rightarrow -a.
\end{align}
Similar reflection transformation also exists between the GD-4 (C-3) equation \eqref{eq:xyz-MSBSQ}
and its alternative form \eqref{eq:xyz-MSBSQ-1}.
All the equations \eqref{eq:A2-111}-\eqref{eq:xyz-MSBSQ-1}
appeared in a same direct linearization scheme (e.g. Eq.(4.8), (4.10), (4.12), (4.15), (4.16) in \cite{Tela}),
which means they have solutions in terms of Cauchy matrix (cf.\cite{ZZN-SAPM}).
Note that in the GD-4 (B-2) equation \eqref{eq:B222},
there is constant $\alpha_1$ appearing explicitly in the right side of the equation \eqref{eq:BSQ-ex_c},
which can be removed by transformation
$\xi\rightarrow \xi-\frac{\alpha_1}{4}(n+m+2),~~\eta\rightarrow \eta-\frac{\alpha_1}{4}(n+m)$
(cf. \cite{Nong-Phd-2014}).

\subsection{Discrete Hirota's bilinear form and Casoratians}
\label{sec-2.2}

We say that an equation is in discrete Hirota's bilinear form if it can be expressed as \cite{HZ-Bi}
\begin{equation}
\label{eq:HB}
\sum_j\, c_j\, f_{j}(n+\nu_{j}^+,m+\mu_{j}^+)\,
g_{j}(n+\nu_{j}^-,m+\mu_{j}^-)=0,
\end{equation}
where the index sums $\nu_{j}^++\nu_{j}^-=\nu^s ,
\mu_{j}^++\mu_{j}^-=\mu^s$ do not depend on $j$.
Such an equation is gauge invariant, i.e.,
if a set of  functions $\{f_j,g_j\}$ solve \eqref{eq:HB},
then, so do the gauge transformed functions
\begin{equation}
\label{eq:gauge}
f'_j(n,m)=A^nB^m\, f_j(n,m),\quad  g'_j(n,m)=A^nB^m\, g_j(n,m),
\end{equation}
where $A$ and $B$ are independent of $(n,m)$.

A Casoratian (or Casorati determinant) is a discrete version of a Wronskian, defined as
\begin{align}
\label{eq:NXN}
|\Ps(n,m,l_1),\Ps(n,m,l_2),...,\Ps(n,m,l_N)|=|l_1,l_2,...,l_N|,
\end{align}
with the basic column vector $\Ps(n,m,l)=(\psi_1(n,m,l),\psi_2(n,m,l),\ldots,\psi_N(n,m,l))^{\st}$.
Following the shorthand notation given in \cite{FN}, some $N$-th order Casoratians that we
often use are denoted by
\begin{align}
& |0,1,...,N-1|=|\wh{N-1}|, \quad |0,1,...,N-2,N|=|\wh{N-2},N|, \nn \\
& |0,1,...,N-3,N-1,N|=|\wh{N-3},N-1,N|, \label{shorthand}
\end{align}
etc.

In the procedure of the verification of a solution in Casoratian form for a bilinear equation,
one needs some determinantal identities to simplify high order shifts.
Finally, the bilinear equation to be verified is reduced to the Pl\"ucker relation
(Laplace expansion of a zero-valued determinant).
Let us go through these determinantal identities.

\begin{prop}\label{thm-2-3-1}\cite{Zhang-KdV-2006}
Let  $\bXi\in \mathbb{C}^{N\times N}$ and denote its column vectors as $\{\bXi_j\}$;
let $\Om=(\Omega_{i,j})_{N\times N}$ be an operator matrix (i.e. $\Omega_{i,j}$ are operators),
and denote its column vectors as  $\{\Om_j\}$. The following relation holds
\begin{align}\label{id-w-2}
\sum^N_{j=1} |\Om_j*\bXi|
=\sum^N_{j=1}|(\Om^{\st})_{j}*\bXi^{\st}|,
\end{align}
where
\begin{align*}
|\bA_j * \bXi|=|\bXi_1,\cdots,\bXi_{j-1},~\bA_j \circ\bXi_j,~\bXi_{j+1},\cdots, \bXi_{N}|,
\end{align*}
and $\bA_j \circ\bXi_j$ stands for
\begin{align*}
\bA_j \circ \bB_j=(A_{1,j}B_{1,j},~A_{2,j}B_{2,j},\cdots, A_{N,j}B_{N,j})^{\st},
\end{align*}
in which $\bA_j =(A_{1,j},~A_{2,j},\cdots, A_{N,j})^{\st},~ \bB_j=(B_{1,j},~B_{2,j},\cdots, B_{N,j})^{\st}$
are $N$-th order vectors.
\end{prop}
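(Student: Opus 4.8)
The plan is to strip away the operator dressing and reduce \eqref{id-w-2} to an elementary statement about cofactor expansions. First I would record that, by the definition of $\circ$, the column $\Om_j\circ\bXi_j$ has $i$-th entry the ordinary function $\Phi_{i,j}:=\Omega_{i,j}\Xi_{i,j}$ obtained by letting the operator act on the single entry $\Xi_{i,j}$; collecting these into a matrix $\Phi=(\Phi_{i,j})_{N\times N}$, the determinant $|\Om_j*\bXi|$ is simply $\det$ of $\bXi$ with its $j$-th column overwritten by the $j$-th column of $\Phi$. Since the $j$-th column of $\Om^{\st}$ is the $j$-th row of $\Om$ and the $j$-th column of $\bXi^{\st}$ is the $j$-th row of $\bXi$, the entry list of $(\Om^{\st})_j\circ(\bXi^{\st})_j$ is $(\Phi_{j,1},\dots,\Phi_{j,N})$, which is exactly the $j$-th column of $\Phi^{\st}$. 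Hence \eqref{id-w-2} is equivalent to the operator-free claim that, for arbitrary $N\times N$ matrices $X$ (playing the role of $\bXi$) and $\Phi$,
\begin{equation}
\sum_{j=1}^N \det\big(X;\,j\!\to\!\Phi\big)=\sum_{j=1}^N \det\big(X^{\st};\,j\!\to\!\Phi^{\st}\big),
\end{equation}
where $\det(X;\,j\!\to\!\Phi)$ denotes $\det X$ after its $j$-th column is replaced by that of $\Phi$. This reduction also disposes of any worry about non-commutativity, since every operator acts on a single matrix entry \emph{before} the determinant is formed, so both sides are honest determinants of functions.

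Next I would expand each summand along its replaced column by the Laplace rule. Writing $\mathrm{cof}_{i,j}(X)=(-1)^{i+j}M_{i,j}(X)$ for the $(i,j)$ cofactor, with $M_{i,j}$ the minor obtained by deleting row $i$ and column $j$, expansion of the $j$-th term of the left-hand side along column $j$ gives $\sum_i \Phi_{i,j}\,\mathrm{cof}_{i,j}(X)$, so summing over $j$ yields $\sum_{i,j}\Phi_{i,j}\,\mathrm{cof}_{i,j}(X)$. The identical expansion applied to the right-hand side gives $\sum_{i,j}(\Phi^{\st})_{i,j}\,\mathrm{cof}_{i,j}(X^{\st})=\sum_{i,j}\Phi_{j,i}\,\mathrm{cof}_{i,j}(X^{\st})$, which after relabelling $i\leftrightarrow j$ reads $\sum_{i,j}\Phi_{i,j}\,\mathrm{cof}_{j,i}(X^{\st})$. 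Both sides are therefore linear forms in the free entries $\Phi_{i,j}$, and it remains only to match their coefficients.

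The one genuine point is thus the cofactor-transpose identity $\mathrm{cof}_{i,j}(X)=\mathrm{cof}_{j,i}(X^{\st})$. I would prove it by noting that deleting row $i$ and column $j$ from $X$ and then transposing produces exactly the matrix obtained by deleting row $j$ and column $i$ from $X^{\st}$, so $M_{i,j}(X)$ and $M_{j,i}(X^{\st})$ are determinants of mutually transposed matrices and hence equal, while the signs agree because $(-1)^{i+j}=(-1)^{j+i}$; equivalently this is the familiar relation $\mathrm{adj}(X^{\st})=\mathrm{adj}(X)^{\st}$. Substituting back matches the coefficient of every $\Phi_{i,j}$ on the two sides and closes the argument. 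I expect the only place demanding care to be the index bookkeeping in this last step, in particular keeping the transposes on $\Om$ and on $\bXi$ aligned, rather than any analytic difficulty: at the level of the Leibniz expansion the whole identity is just its invariance under $\sigma\mapsto\sigma^{-1}$, which preserves both $\mathrm{sgn}(\sigma)$ and the distinguished entry carrying the operator.
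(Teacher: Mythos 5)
Your proposal is correct and complete: reducing to the operator-free matrix $\Phi_{i,j}=\Omega_{i,j}\Xi_{i,j}$, expanding each summand by Laplace along the replaced column so that both sides become $\sum_{i,j}\Phi_{i,j}\,\mathrm{cof}_{i,j}(\bXi)$, and closing with the cofactor-transpose identity $\mathrm{cof}_{i,j}(\bXi)=\mathrm{cof}_{j,i}(\bXi^{\st})$ is exactly the standard argument. Note that the paper itself offers no proof of this proposition (it is quoted as a preliminary from \cite{Zhang-KdV-2006}), and your argument coincides with the cofactor-expansion proof in that reference, so there is nothing further to reconcile.
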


\begin{prop}\label{cor-2-3-1}\cite{HJN-book-2016}
Let $\bP\in \mathbb{C}^{N\times(N-1)}$, $\bQ \in \mathbb{C}^{N\times(N-k+1)}$ be the remained of $\bP$
after removing its arbitrary $k-2$ columns where
$3\leq k < N$, and $\ba_i,~ i=1, 2, \cdots, k$, be $N$-th order column vectors. Then one has
\begin{align}
\label{plu-r-2}
\sum^{k}_{i=1}(-1)^{i-1} |\bP, \ba_i| \cdot
     |\bQ, \ba_1, \cdots, \ba_{i-1},\ba_{i+1}, \cdots,\ba_{k}|=0,
\quad k \geq 3.
\end{align}
\end{prop}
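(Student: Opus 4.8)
The plan is to prove \eqref{plu-r-2} by recognizing the alternating sum as the Laplace expansion of a single $2N\times 2N$ determinant that vanishes for structural reasons. The essential hypothesis to exploit is that $\bQ$ arises from $\bP$ by deleting $k-2$ columns, so that \emph{every column of $\bQ$ is literally a column of $\bP$}; this is exactly what forces the auxiliary determinant to be degenerate, and it is the feature that distinguishes \eqref{plu-r-2} from a generic Plücker relation with two unrelated blocks.

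Concretely, I would set $A=(\ba_1,\ldots,\ba_k)\in\mathbb{C}^{N\times k}$ and consider
\[
\mathcal{D}=\begin{pmatrix}\bP & A & \mathbf{0}\\ \mathbf{0} & A & \bQ\end{pmatrix}\in\mathbb{C}^{2N\times 2N},
\]
whose $2N$ columns split into the $N-1$ columns $\binom{\bP_j}{\mathbf 0}$, the $k$ columns $\binom{\ba_i}{\ba_i}$, and the $N-k+1$ columns $\binom{\mathbf 0}{\bQ_l}$. First I would show $\det\mathcal{D}=0$. Since each $\bQ_l$ equals some column $\binom{\bQ_l}{\mathbf 0}$ sitting in the top-left $\bP$-block, adding that copy to the column $\binom{\mathbf 0}{\bQ_l}$ turns the lower-right block into $\binom{\bQ}{\bQ}$ without changing the determinant; subtracting the lower $N$ rows from the upper $N$ rows then clears the middle and right blocks of the top row-band, leaving the upper $N$ rows supported on the first $N-1$ columns only. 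Those $N$ rows are therefore linearly dependent, so $\det\mathcal{D}=0$.

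Next I would compute the same determinant by a Laplace expansion along the top $N$ rows. Because the $\bQ$-columns are zero in the top band, any size-$N$ column subset $S$ that meets a $\bQ$-column has vanishing top minor; because the $\bP$-columns are zero in the bottom band, the complementary minor forces all $N-1$ of the $\bP$-columns into $S$. Hence the only surviving subsets are $S=\{\bP_1,\ldots,\bP_{N-1},\ba_i\}$ for $i=1,\ldots,k$, whose two complementary minors are precisely $|\bP,\ba_i|$ and $|\bQ,\ba_1,\ldots,\ba_{i-1},\ba_{i+1},\ldots,\ba_k|$. Collecting the contributions gives $\det\mathcal{D}=\sum_{i=1}^{k}\varepsilon_i\,|\bP,\ba_i|\,|\bQ,\ba_1,\ldots,\ba_{i-1},\ba_{i+1},\ldots,\ba_k|$, and comparison with the previous paragraph yields \eqref{plu-r-2} once the signs $\varepsilon_i$ are pinned down.

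The main obstacle, and essentially the only delicate point, is the sign bookkeeping $\varepsilon_i=(-1)^{i-1}$ up to a global factor independent of $i$. I expect this to emerge by combining the Laplace sign $(-1)^{\sum(\text{column positions of }S)}$, in which $\ba_i$ occupies position $N+i-1$ while the $\bP$-columns occupy fixed positions $1,\ldots,N-1$, with the constant reordering sign needed to bring the block $\bQ$ in front of the remaining $\ba$'s in the lower minor; both the Laplace base sign and the reordering sign are independent of $i$, so only the alternating factor survives. Since $\det\mathcal{D}=0$, that common constant drops out and the stated identity follows. The hypotheses $3\le k<N$ are exactly what make the deletion $\bP\mapsto\bQ$ and all of the above block sizes well defined.
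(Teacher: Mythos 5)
The paper does not prove this proposition at all---it is quoted directly from the cited reference \cite{HJN-book-2016}---so there is no internal proof to compare against; judged on its own merits, your argument is correct and is in fact the classical proof of this Pl\"ucker relation (Laplace expansion of a vanishing $2N\times 2N$ ``doubled'' determinant), i.e.\ the same route taken in the cited literature. Your two key steps both hold up: the vanishing of $\det\mathcal{D}$ via column additions $\binom{\mathbf 0}{\bQ_l}\mapsto\binom{\bQ_l}{\bQ_l}$ followed by subtracting the lower row band from the upper one is valid precisely because each column of $\bQ$ is a column of $\bP$ (and you rightly flag that this hypothesis is essential---the identity genuinely fails for a $\bQ$ unrelated to $\bP$); and the sign bookkeeping is sound, since the Laplace sign depends on $i$ only through the column position $N+i-1$ of $\ba_i$, giving the alternating factor $(-1)^{i-1}$, while the block-reordering sign $(-1)^{(N-k+1)(k-1)}$ and the base Laplace sign are $i$-independent constants that drop out once $\det\mathcal{D}=0$ is known.
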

When $k=3$, \eqref{plu-r-2} yields the usual  identity used in Wronskian/Casoratian verification \cite{FN}:
\begin{align}
\label{plu-r-1}
|\bM, \ba,\bb||\bM, \bc, \bd|-|\bM, \ba, \bc||\bM, \bb, \bd|
+|\bM, \ba, \bd||\bM, \bb, \bc|=0,
\end{align}
where we have taken $\bP=(\bM, \ba)$, $\bQ=\bM$,
$\ba_1=\bb$, $\ba_2=\bc$ and $\ba_3=\bd$.

In order to construct   solutions in Casoratian form for the lattice GD-4 type equations,
we extend the basic column vector $\Ps(n,m,l)$ with two more lattice variables
$\al, \be$ corresponding to the spacing parameters $a, b$, respectively,
and thus we introduce an extension of the basic ingredients
\begin{equation}
\label{eq:psij}
\psi_s{(n,m,\alpha,\beta,l)}=\sum_{j=1}^4\rho_{j,s}^{(0)}
(-\omega_j(k_s))^l(p-\omega_j(k_s))^n(q-\omega_j(k_s))^m(a-\omega_j(k_s))^\alpha(b-\omega_j(k_s))^\beta,
\end{equation}
where
$\rho_{j,s}^{(0)}, k_s\in \mathbb{C}$, and
$\{\omega_j(k)\}$ represent the four roots of the quartic dispersion relation curve \eqref{e-curve-4}
with $\omega_4(k)\equiv k$.
In addition, following the operations given in \cite{NZSZ,ZhaZ-JNMP-2019}
we also introduce the following basic elements
\begin{equation}
\label{psi-jd}
\phi_s{(n,m,\alpha,\beta,l)}=\sum_{j=1}^4\rho_{j,s}^{(0)}
(\delta-\omega_j(k_s))^l(p-\omega_j(k_s))^n(q-\omega_j(k_s))^m(a-\omega_j(k_s))^\alpha(b-\omega_j(k_s))^\beta,
\end{equation}
which will be used to derive the quasi-rational solutions of the lattice GD-4 type equations.

Since $\alpha$, $\beta$ and $l$ play the same role as the lattice variables $n$ and $m$ in \eqref{eq:psij} and
\eqref{psi-jd}, we employ
new notations to represent the elementary shifts in these three directions, i.e.,
\begin{align}
\label{shifts}
f(n,m,\alpha+1,\beta,l)=\c{f}, \quad  f(n,m,\alpha,\beta+1,l)=\d{f}, \quad f(n,m,\alpha,\beta,l+1)=\Bar{f}.
\end{align}

In order to get quasi-rational solutions,
we will make use of the $N$-th order matrix in the following form
\begin{equation}
\mathcal{A}=\left(\begin{array}{cccccc}
a_0 & 0    & 0   & \cdots & 0   & 0 \\
a_1 & a_0  & 0   & \cdots & 0   & 0 \\
a_2 & a_1  & a_0 & \cdots & 0   & 0 \\
\vdots &\vdots &\cdots &\vdots &\vdots &\vdots \\
a_{N-1} & a_{N-2} & a_{N-3}  & \cdots &  a_1   & a_0
\end{array}\right)_{N\times N}
\label{A}
\end{equation}
with scalar elements $\{a_j\}$.
This is called a $N$-th order lower triangular Toeplitz matrix.
All such matrices (of same order) compose a commutative set $\mathcal{H}$ with respect to matrix multiplication.
Such kind of matrices play useful roles in the expression of exact solution for multiple-pole equations.
For more properties of such matrices one can refer to Refs.\cite{Zhang-KdV-2006,ZZSZ}.

With these preliminaries in hand,
in the follow-up sections we will demonstrate the bilinear forms and Casoratian solutions of the
lattice GD-4 type equations.
For the sake of simplicity, we introduce the following notations:
\begin{subequations}\label{2.18}
\begin{align}
& H(p,q)=p^2+pq+q^2-\alpha_3(p+q)+\alpha_2,\\
\label{pa}
& p_a=\frac{G(-p,-a)}{p-a}, \quad p_b=\frac{G(-p,-b)}{p-b}, \quad
q_a=\frac{G(-q,-a)}{q-a}, \quad q_b=\frac{G(-q,-b)}{q-b}, \\
&\sigma_1=p-\delta,\quad \sigma_2=q-\delta,\quad \sigma_3=a-\delta,\quad \sigma_4=b-\delta, \quad \sigma^{\pm}_{ij}=\sigma_i\pm\sigma_j, \\
& \sigma^+_{ij\kappa}=\sigma_i+\sigma_j+\sigma_\kappa, \quad \sigma^-_{ij\kappa}=(\sigma_i-\sigma_j)(\sigma_i-\sigma_\kappa)(\sigma_j-\sigma_\kappa), \\
& \sigma^+_{ij\kappa\ell}=\sigma_i+\sigma_j+\sigma_\kappa+\sigma_\ell, \\
& \sigma^-_{ij\kappa\ell}=(\sigma_i-\sigma_j)(\sigma_i-\sigma_\kappa)(\sigma_i-\sigma_\ell)
(\sigma_j -\sigma_\kappa)
(\sigma_j-\sigma_\ell)(\sigma_\kappa-\sigma_\ell), \\
& \epsilon_1=4\delta-\alpha_3, \quad \epsilon_2=6\delta^2-3\delta\alpha_3+\alpha_2, \quad \epsilon_3=4\delta^3-3\delta^2\alpha_3+2\delta\alpha_2-\alpha_1, \\
\label{eq:Pt}
& P(\sigma_i,\sigma_j)=\sigma_i^4-\sigma_j^4+\epsilon_1(\sigma_i^3-\sigma_j^3)
+\epsilon_2(\sigma_i^2-\sigma_j^2)+\epsilon_3(\sigma_i-\sigma_j), \\
\label{eq:Qt}
& Q(\sigma_i,\sigma_j)=\sigma^{+^2}_{ij}-\sigma_i\sigma_j+\epsilon_1\sigma^+_{ij}+\epsilon_2,
\quad S(\sigma_i)=\sigma_i^4+\epsilon_1\sigma_i^3+\epsilon_2\sigma_i^2+\epsilon_3\sigma_i, \\
\label{eq:Rt}
& R(\sigma_i,\sigma_j)=\sigma^+_{ij}(\sigma_i^2+\sigma_j^2)+\epsilon_1(\sigma^{+^2}_{ij}-\sigma_i\sigma_j)
+\epsilon_2\sigma^+_{ij}+\epsilon_3,
\end{align}
\end{subequations}
with $i<j<\kappa<\ell$. It is clear that
\begin{align}
\label{sig-S}
\sigma^-_{ij}R(\sigma_i,\sigma_j)=P(\sigma_i,\sigma_j)=S(\sigma_i)-S(\sigma_j).
\end{align}

\section{Bilinear structures of the lattice GD-4 type equations}\label{sec-3}

In this section we show how the lattice GD-4 type equations are bilinearized and present their bilinear forms.

\subsection{GD-4 (B-2) equation}\label{sec-3-1}

To bilinearize the GD-4 (B-2) equation \eqref{eq:B222}, we consider
its deformation (Eq.(4.6) in \cite{Tela}), which reads
\begin{subequations}
\label{eq:B2o}
\begin{align}
\label{eq:GD4-B2o-ex_a}
& B_{21}\equiv pu_{0}+u_{1,0}-(p+u_{0})\wt{u}_{0}+\wt{u}_{0,1}=0,\\
\label{eq:GD4-B2o-ex_b}
& B_{22}\equiv qu_{0}+u_{1,0}-(q+u_{0})\wh{u}_{0}+\wh{u}_{0,1}=0,\\
\label{eq:GD4-B2o-ex_c}
& B_{23}\equiv\wh{u}_{2,0}-\wt{u}_{2,0}+p\wh{u}_{1,0}-q\wt{u}_{1,0}
-(p-q+\wh{u}_0-\wt{u}_0)\wth{u}_{1,0}=0, \\
\label{eq:GD4-B2o-ex_d}
& B_{24}\equiv\wh{u}_{0,2}-\wt{u}_{0,2}+p\wt{u}_{0,1}-q\wh{u}_{0,1}-(p-q+\wh{u}_0-\wt{u}_0)u_{0,1}=0, \\
\label{eq:GD4-B2o-ex_e}
& B_{25}\equiv 2\wth{u}_{0,2}-(p+q)(u_{1,0}+\wth{u}_{0,1})+(p-\wth{u}_{0})\wh{u}_{0,1}
+ (q-\wth{u}_{0})\wt{u}_{0,1}\nn \\
&\qquad\quad +(p+u_0)\wt{u}_{1,0}+(q+u_0)\wh{u}_{1,0}-2u_{2,0}=0, \\
\label{eq:GD4-B2o-ex_f}
& B_{26}\equiv\frac{G(-p,-q)}{p-q}-\frac{G(-p,-q)}{p-q+\wh{u}_0-\wt{u}_0}
+H(p,q)(u_0-\wth{u}_0) \nn \\
& \qquad\quad -(p+q-\alpha_3)(u_0\wh{\wt{u}}_0-\wth{u}_{1,0}-u_{0,1})
+u_0\wth{u}_{1,0}-\wh{\wt{u}}_0u_{0,1}
-\wth{u}_{2,0}+u_{0,2}=0.
\end{align}
\end{subequations}
We also call the above system the GD-4 (B-2) equation if not making
confusion with \eqref{eq:B222}.
This system is connected with Eq.\eqref{eq:B222} via a set of point transformations \cite{Tela}:
\begin{subequations}
\label{B2-tr}
\begin{align}
& u_0=x-x_{0}, \quad u_{1,0}=y-x_0u_0-y_0, \quad u_{0,1}=z-x_0u_0-z_0, \\
& u_{2,0}=\xi-x_0y+xz_0+\mu, \quad u_{0,2}=\eta-x_0z+xy_0+\nu,
\end{align}
\end{subequations}
where
\begin{subequations}
\label{eq:xyz0-ex-N4}
\begin{align}
& x_0=-pn-qm-c_1, \\
& y_0= \frac{1}{2}\left(x_0^2+(np^2+mq^2+c_2)\right) +c_3,\\
& z_0= \frac{1}{2}\left(x_0^2-(np^2+mq^2+c_2)\right) -c_3,\\
& \mu_0=np^{3}+mq^{3}+c_4,\\
& \mu=-\frac{1}{6}(x^{3}_{0} -3(y_{0}-z_{0})x_{0}-2 \mu_0),\\
& \nu=-\frac{1}{6}( x^{3}_{0} +3(y_{0}-z_{0})x_{0}-2\mu_0),
\end{align}
\end{subequations}
and $\{c_{i}\}$ are constants.

To render the GD-4 (B-2) equation \eqref{eq:B2o} in bilinear form,
we adopt the following rational transformations\footnote{
These $u_0=u_{0,0}$ and $u_{i,j}$ are functions defined in the direct linearization (DL) scheme \cite{Tela}.
$f$ serves as the  tau function which is inherent in this scheme.
Therefore the form \eqref{eq:B2o} in DL variables is more natural and direct to be used in the bilinearization.
The auxiliary function $\theta$ in \eqref{eq:B2-h1}
is actually implied from a DL variable  $u_{1,1}=\theta/f$ (see \cite{Tela}).
Combining  \eqref{B2-tr} and \eqref{eq:tr-GD-4},
one can express Hietarinta's variables $x, y,\cdots$ in terms of $f, g, \cdots$,
but \eqref{eq:xyz0'-ex-N4} will be involved as background solutions.
}
\begin{align}
\label{eq:tr-GD-4}
u_{0}=\frac{g}{f}, \quad u_{1,0}=\frac{h}{f}, \quad u_{0,1}=\frac{s}{f}, \quad
u_{2,0}=\frac{\mu}{f},\quad u_{0,2}=\frac{\nu}{f},
\end{align}
under which the equation \eqref{eq:B2o} is bilinearized as
\begin{subequations}
\label{eq:B2-h1}
\begin{align}
\label{eq:GD4-B12-ex_a}
& \mathcal{H}_{11}\equiv \wt{f}(pg+h)-\wt{g}(pf+g)+f\wt{s}=0,\\
\label{eq:GD4-B12-ex_b}
& \mathcal{H}_{12}\equiv \wh{f}(qg+h)-\wh{g}(qf+g)+f\wh{s}=0, \\
\label{eq:GD4-B12-ex_c}
& \mathcal{H}_{13}\equiv\wt{f}(ph+\mu)-\wt{h}(pf+g)+f\wt{\theta}=0, \\
\label{eq:GD4-B12-ex_d}
& \mathcal{H}_{14}\equiv\wh{f}(qh+\mu)-\wh{h}(qf+g)+f\wh{\theta}=0, \\
\label{eq:GD4-B12-ex_e}
& \mathcal{H}_{15}\equiv\wt{f}(ps+\theta)-f(p\wt{s}-\wt{\nu})-\wt{g}s=0, \\
\label{eq:GD4-B12-ex_f}
& \mathcal{H}_{16}\equiv \wh{f}(q s+\theta)-f(p\wh{s}-\wh{\nu})-\wh{g}s=0, \\
\label{eq:GD4-B12-ex_g}
& \mathcal{H}_{17}\equiv(p-q)(\wt{f}\wh{f}-f\wth{f})+\wt{f}\wh{g}-\wh{f}\wt{g}=0, \\
\label{eq:GD4-B12-ex_h}
& \mathcal{H}_{18}\equiv\frac{G(-p,-q)}{p-q}(f\wth{f}-\wt{f}\wh{f})+H(p,q)(g\wth{f}-f\wth{g}) \nn \\
& \qquad\quad -(p+q-\alpha_3)(g\wth{g}-f\wth{h}-\wth{f}s)+g\wth{h}-s\wth{g}-f\wth{\mu}+\nu\wth{f}=0,
\end{align}
\end{subequations}
where $\theta$ is an auxiliary variable.
One can  verify that the nonlinear system \eqref{eq:B2o}
and the above bilinear forms \eqref{eq:B2-h1} are related by
\begin{subequations}
\begin{align*}
& B_{21}=\frac{\mathcal{H}_{11}}{f\wt{f}},\quad
B_{22}=\frac{\mathcal{H}_{12}}{f\wh{f}}, \quad
B_{23}=\frac{\wh{\mathcal{H}}_{13}}{\wh{f}\wth{f}}-\frac{\wt{\mathcal{H}}_{14}}{\wt{f}\wth{f}}, \quad
B_{24}=\frac{\mathcal{H}_{16}}{f\wh{f}}-\frac{\mathcal{H}_{15}}{f\wt{f}}, \\
& B_{25}=\frac{\wh{\mathcal{H}}_{15}}{\wh{f}\wth{f}}+
\frac{\wt{\mathcal{H}}_{16}}{\wt{f}\wth{f}}
-\frac{\mathcal{H}_{13}}{f\wt{f}}
-\frac{\mathcal{H}_{14}}{f\wh{f}}, \quad
B_{26}=\frac{1}{f\wth{f}}\mathcal{H}_{18}+\frac{G(-p,-q)\wt{f}\wh{f}}{(p-q)f\wth{f}\mathcal{H}_{17}
+(p-q)^2(f\wth{f})^2} \mathcal{H}_{17}.
\end{align*}
\end{subequations}

\subsection{GD-4 (A-2) equation and its alternative form}\label{sec-3-2}

\subsubsection{GD-4 (A-2) equation}\label{sec-3-2-1}

For the bilinearization of the GD-4 (A-2) equation \eqref{eq:A2-111},
we make use of  the following  deformation   (Eq.(4.2) in \cite{Tela}):
\begin{subequations}
\label{eq:A21o}
\begin{align}
\label{eq:GD4-A21-ex_a}
& A_{11}\equiv \wt{s}_a-(p+u_0)\wt{v}_a+(p-a) v_a=0,\\
\label{eq:GD4-A21-ex_b}
& A_{12}\equiv\wh{s}_a-(q+u_0)\wh{v}_a+(q-a) v_a=0,\\
\label{eq:GD4-A21-ex_c}
& A_{13}\equiv(\wt{r}_a-p\wt{s}_a)/\wt{v}_a+(p-a)s_a/\wt{v}_a
-(\wh{r}_a-q\wh{s}_a)/\wh{v}_a-(q-a)s_a/\wh{v}_a=0, \\
\label{eq:GD4-A21-ex_d}
& A_{14}\equiv 2\wh{\wt{r}}_a-(p+q)\wh{\wt{s}}_a+(p-a)\wh{s}_a+(q-a)\wt{s}_a \nn \\
& \qquad\quad -\wh{\wt{v}}_a(p\wt{u}_0+q\wh{u}_0-(p+q-\wt{u}_0-\wh{u}_0)u_0-2u_{1,0})=0, \\
\label{eq:GD4-A21-ex_e}
& A_{15}\equiv\wh{u}_{1,0}-\wt{u}_{1,0}-(p-q+\wh{u}_0-\wt{u}_0)\wh{\wt{u}}_0+p\wh{u}_0-q\wt{u}_0=0, \\
\label{eq:GD4-A21-ex_f}
& A_{16}\equiv\big[H(p,q)+\wh{\wt{u}}_{1,0}-(p+q-\alpha_3)\wh{\wt{u}}_{0}
+\big((p+q-\alpha_3-\wh{\wt{u}}_{0})s_a+r_a\big)/v_a\big] \nn \\
& \qquad\quad \cdot (p-q+\wh{u}_0-\wt{u}_0)-(p_a\wt{v}_a-q_a\wh{v}_a)/v_a=0,
\end{align}
\end{subequations}
where $p_a$ and $q_a$ are defined as \eqref{pa}.
It is connected with \eqref{eq:A2-111} by the transformation \cite{Tela}:
\begin{subequations}
\label{A2-tr}
\begin{align}
& v_a=x/x_{a}, \quad u_0=z-z_0, \quad s_a=(y-v_ay_a)/x_a, \\
& u_{1,0}=\xi-z_0u_0-\xi_0, \quad r_a =(\eta-z_0y+\xi_0x)/x_a,
\end{align}
\end{subequations}
where
\begin{subequations}
\label{eq:xyz-0-A2}
\begin{align}
& x_a=(p-a)^{-n}(q-a)^{-m}c_0, \quad z_0=-pn-qm-c_1, \quad y_a=x_az_0, \\
& \xi_0= \left(z_0^2+(np^2+mq^2+c_2)\right)/2+c_3,
\end{align}
\end{subequations}
and $\{c_i\}$ are constants.
Consider the following transformation
\begin{align}
\label{A21-tran}
& u_{0}=\frac{g}{f}, \quad u_{1,0}=\frac{h}{f}, \quad
v_a=\frac{\dc{f}}{f}, \quad s_a=\frac{a\dc{f}+\dc{g}}{f}, \quad r_a=\frac{a^2\dc{f}+a\dc{g}+\dc{s}}{f},
\end{align}
where the notation $\dc{f}$ is defined in \eqref{shifts},
denoting a shift in the auxiliary $\alpha$-direction.
With this transformation,
the GD-4 (A-2) equation \eqref{eq:A21o} can be converted into the bilinear form
\begin{subequations}
\label{eq:A21-h}
\begin{align}
\label{eq:GD4-A211-ex_a}
& \mathcal{H}_{21}\equiv f(a\wt{\dc{f}}+\wt{\dc{g}})+(p-a)\wt{f}\dc{f}-(pf+g)\wt{\dc{f}}=0,\\
\label{eq:GD4-A211-ex_b}
& \mathcal{H}_{22}\equiv f(a\wh{\dc{f}}+\wh{\dc{g}})+(q-a)\wh{f}\dc{f}-(qf+g)\wh{\dc{f}}=0,\\
\label{eq:GD4-A211-ex_c}
& \mathcal{H}_{23}\equiv f(a^2\wt{\dc{f}}+a\wt{\dc{g}}+\wt{\dc{s}})-pf(a\wt{\dc{f}}+\wt{\dc{g}})
+(p-a)\wt{f}(a\dc{f}+\dc{g})-\wt{\dc{f}}s=0, \\
\label{eq:GD4-A211-ex_d}
& \mathcal{H}_{24}\equiv f(a^2\wh{\dc{f}}+a\wh{\dc{g}}+\wh{\dc{s}})-qf(a\wh{\dc{f}}
+\wh{\dc{g}})+(q-a)\wh{f}(a\dc{f}+\dc{g})-\wh{\dc{f}}s=0,
\\
\label{eq:GD4-A211-ex_e}
& \mathcal{H}_{25}\equiv \wt{f}(pg+h)-\wt{g}(pf+g)+f\wt{s}=0,\\
\label{eq:GD4-A211-ex_f}
& \mathcal{H}_{26}\equiv \wh{f}(qg+h)-\wh{g}(qf+g)+f\wh{s}=0,\\
\label{eq:GD4-A211-ex_g}
& \mathcal{H}_{27}\equiv (p-q)(\wt{f}\wh{f}-f\wth{f})+\wt{f}\wh{g}-\wh{f}\wt{g}=0,\\
\label{eq:GD4-A211-ex_h}
& \mathcal{H}_{28}\equiv\big[H(p,q)\wth{f}\dc{f}+\wth{h}\dc{f}-(p+q-\alpha_3)\wth{g}\dc{f}
+((p+q-\alpha_3)\wth{f}
-\wth{g})(a\dc{f}+\dc{g})\nn\\
&\quad\quad\quad +\wth{f}(a^2\dc{f}+a\dc{g}+\dc{s})\big](p-q)-p_a\wh{f}\wt{\dc{f}}+q_a\wt{f}\wh{\dc{f}}=0.
\end{align}
\end{subequations}
It is related to  the nonlinear equation \eqref{eq:A21o} by
\begin{align*}
& A_{11}=\frac{\mathcal{H}_{21}}{f\wt{f}},\quad A_{12}=\frac{\mathcal{H}_{22}}{f\wh
f},\quad A_{13}=\frac{\mathcal{H}_{23}}{f\wt{f}\wt{v}_a}-\frac{\mathcal{H}_{24}}{f\wh{f}\wh{v}_a},
\quad A_{14}=\frac{\wh{\mathcal{H}}_{23}}{\wh{f}\wth{f}}+\frac{\wt{\mathcal{H}}_{24}}{\wt{f}\wth{f}}
+ \frac{\wth{v}_a\mathcal{H}_{25}}{f\wt{f}}+\frac{\wth{v}_a\mathcal{H}_{26}}{f\wh{f}}, \nn \\
& A_{15}=\frac{\wh{\mathcal{H}}_{25}}{\wh{f}\wth{f}}-\frac{\wt{\mathcal{H}}_{26}}{\wt{f}\wth{f}}, \quad
A_{16}=\frac{\mathcal{H}_{27}\mathcal{H}_{28}+(p_a\wh{f}\wt{\dc{f}}
-q_a\wt{f}\wh{\dc{f}})\mathcal{H}_{27}}
{(p-q)\dc{f}\wt{f}\wh{f}\wth{f}}+\frac{f\mathcal{H}_{28}}{\dc{f}\wt{f}\wh{f}}.
\end{align*}

\subsubsection{Alternative GD-4 (A-2) equation}\label{sec-3-2-2}

For the alternative GD-4 (A-2) equation \eqref{eq:A2-222},
we start from the following deformation of \eqref{eq:A2-222}:
\begin{subequations}
\label{eq:A22o}
\begin{align}
\label{eq:A22-a}
& A_{21}=t_b-(p-b)\wt{w}_b+(p-\wt{u}_0)w_b=0,\\
\label{eq:A22-b}
& A_{22}=t_b-(q-b)\wh{w}_b+(q-\wh{u}_0)w_b=0, \\
\label{eq:A22-c}
& A_{23}=(\wh{z}_b+p\wh{t}_b)/\wh{w}_b-(p-b){\wth{t}}_b/\wh{w}_b
-(\wt{z}_b+q\wt{t}_b)/\wt{w}_b+(q-b)\wth{t}_b/\wt{w}_b=0, \\
\label{eq:A22-d}
& A_{24}=\wh{u}_{0,1}-\wt{u}_{0,1}-(p-q+\wh{u}_0-\wt{u}_0)u_0+p\wt{u}_0-q\wh{u}_0=0, \\
\label{eq:A22-e}
& A_{25}=2z_b-(p-b)\wt{t}_b-(q-b)\wh{t}_b+(p+q){t}_b\nn\\
&\qquad \quad -w_b((p+q+\wt{u}_0+\wh{u}_0)\wth{u}_0-2\wth{u}_{0,1}-q\wt{u}_0-p\wh{u}_0)=0, \\
\label{eq:A22-f}
&A_{26}=\big[H(p,q)+u_{0,1}+(p+q-\alpha_3)u_{0}
- \big((p+q-\alpha_3+u_{0})\wth{t}_b-\wth{z}_b\big)/\wth{w}_b\big] \nn \\
&\qquad\quad \cdot (p-q+\wh{u}_0-\wt{u}_0)-(p_b\wh{w}_b-q_b\wt{w}_b)/\wth{w}_b=0,
\end{align}
\end{subequations}
where $p_b$ and $q_b$ are defined as in \eqref{pa}.
This system was already presented in \cite{Tela} and is connected with  \eqref{eq:A2-222}
via the transformation \cite{Tela}
\begin{subequations}
\label{eq:xyz-A2-1}
\begin{align}
& w_b=x/x_b, \quad u_0=z-z_0, \quad t_b=(y-w_by_b)/x_b, \\
& u_{0,1}=\xi-z_0u_0-\xi_0, \quad z_b=(\eta-z_0y+x\xi_0)/x_b,
\end{align}
\end{subequations}
with
\begin{subequations}
\label{eq:xyz-0-A2-1}
\begin{align}
& x_b =(-p+b)^{n}(-q+b)^{m}c_0, \quad z_0=-pn-qm-c_1, \quad y_b=x_bz_0,\\
& \xi_0 = \left(z_0^2-(np^2+mq^2+c_2)\right)/2-c_3,
\end{align}
\end{subequations}
and constants $\{c_i\}$.
We need to make use the auxiliary $\beta$-direction whose shift is denoted by a ``dot'' as defined in \eqref{shifts}.
Through the  transformation
\begin{align}
\label{A22-tran}
u_{0}=\frac{g}{f}, \quad u_{0,1}=\frac{s}{f}, \quad w_b=\frac{\d{f}}{f},
\quad t_b=\frac{\d{g}-b\d{f}}{f}, \quad z_b=\frac{\d{h}-b\d{g}+b^2\d{f}}{f},
\end{align}
the GD-4 (A-2) system \eqref{eq:A22o} can be written as
\begin{subequations}
\begin{align*}
& A_{21}=\frac{\mathcal{H}_{31}}{f\wt{f}},\quad A_{22}=\frac{\mathcal{H}_{32}}{f\wh f},
\quad A_{23}=\frac{\wh{\mathcal{H}}_{33}}{\wh{f}\wth{f}\wh{w}_a}
-\frac{\wt{\mathcal{H}}_{34}}{\wt{f}\wth{f}\wt{w}_a},\\
&A_{24}=\frac{\mathcal{H}_{36}}{f\wh{f}}-\frac{\mathcal{H}_{35}}{f\wt{f}},
\quad A_{25}=\frac{\mathcal{H}_{33}}{f\wt{f}}+\frac{\mathcal{H}_{34}}{f\wh{f}}
+\frac{w_b\wt{\mathcal{H}}_{36}}{\wt{f}\wth{f}}+\frac{w_b\wh{\mathcal{H}}_{35}}{\wh{f}\wth{f}},\nn\\
&A_{26}=\frac{\mathcal{H}_{37} \mathcal{H}_{38}+(p_b\wt{f}\whd{f}
- q_b\wh{f}\wtd{f})\mathcal{H}_{37}}{(p-q)f\wt{f}\wh{f}\wthd{f}}
+\frac{\wth{{f}} \mathcal{H}_{38}}{\wt{f}\wh{f}\wthd{f}},
\end{align*}
\end{subequations}
where
\begin{subequations}
\label{eq:A2'-b0}
\begin{align}
\label{eq:A22-b0-ex_a}
\mathcal{H}_{31}\equiv&  \wt{f}(p\d{f}+\d{g})-(p-b)f\wtd{f}-(b\wt{f}+\wt{g})\d{f}=0,\\
\label{eq:A22-b0-ex_b}
\mathcal{H}_{32}\equiv&  \wh{f}(q\d{f}+\d{g})-(q-b)f\whd{f}-(b\wh{f}+\wh{g})\d{f}=0,
\\
\label{eq:A22-b0-ex_c}
\mathcal{H}_{33}\equiv& \wt{f}(b^2\d{f}-b\d{g}+\d{h})+p\wt{f}(\d{g}-b\d{f})-(p-b)f(\wtd{g}-b\wtd{f})-\d{f}\wt{h}=0,\\
\label{eq:A22-b0-ex_d}
\mathcal{H}_{34}\equiv &  \wh{f}(b^2\d{f}-b\d{g}+\d{h})+q\wh{f}(\d{g}-b\d{f})-(q-b)f(\whd{g}-b\whd{f})-\d{f}\wh{h}=0,\\
\label{eq:A22-b0-ex_e}
\mathcal{H}_{35}\equiv &  \wt{f}(pg+h)-\wt{g}(pf+g)+f\wt{s}=0,\\
\label{eq:A22-b0-ex_f}
\mathcal{H}_{36}\equiv &  \wh{f}(qg+h)-\wh{g}(qf+g)+f\wh{s}=0,\\
\label{eq:A22-b0-ex_g}
\mathcal{H}_{37}\equiv &  (p-q)(\wt{f}\wh{f}-f\wth{f})+\wt{f}\wh{g}-\wh{f}\wt{g}=0,\\
\label{eq:A22-b0-ex_h}
\mathcal{H}_{38}\equiv& \big[H(p,q)f\wthd{f}+s\wthd{f}+(p+q-\alpha_3)g\wthd{f}
-((p+q-\alpha_3)f+g)(\wthd{g}-b\wthd{f})\nn\\
&+f(\wthd{h}-b\wthd{g}+b^2\wthd{f})\big](p-q)-p_b\wt{f}\whd{f}+q_b\wh{f}\wtd{f}=0,
\end{align}
\end{subequations}
which compose the bilinear form of the alternative GD-4 (A-2) equation \eqref{eq:A22o}.

\subsection{GD-4 (C-3) equation}\label{sec-3-3}

We consider the following deformation of  the GD-4 (C-3) equation \eqref{eq:xyz-MSBSQ}: \cite{Tela}
\begin{subequations}
\label{eq:C3o2}
\begin{align}
\label{eq:C3o2-a}
& C_{31}=(p-a)S_{a,b}-(p-b)\wt{S}_{a,b}-\wt{v}_a w_b=0,\\
\label{eq:C3o2-b}
& C_{32}=(q-a)S_{a,b}-(q-b)\wh{S}_{a,b}-\wh{v}_a w_b=0,\\
\label{eq:C3o2-c}
& C_{33}=\wt{s}_a/\wt{v}_a-\wh{s}_a/\wh{v}_a
-(p-q)+v_a((p-a)/\wt{v}_a-(q-a)/\wh{v}_a)=0, \\
\label{eq:C3o2-d}
& C_{34}=\wt{t}_b/\wt{w}_b-\wh{t}_b/\wh{w}_b
-(p-q)+\wh{\wt{w}}_b((p-b)/\wh{w}_b-(q-b)/\wt{w}_b)=0, \\
\label{eq:C3o2-e}
&C_{35}=v_{a}\wh{\wt{t}}_{b}-s_a\wh{\wt{w}}_{b}+w_b\frac{\frac{p_b}{p-a}\wh{w}_b\wt{v}_a
-\frac{q_b}{q-a}\wt{w}_b\wh{v}_a}{(p-b)\wt{w}_b-(q-b)\wh{w}_b}-(p+q-\alpha_3)v_a\wh{\wt{w}}_{b}\nn\\
&\quad\qquad-\frac{G(-a,-b)}{(p-a)(q-a)}\wh{\wt{S}}_{a,b}=0,
\end{align}
\end{subequations}
which is related to \eqref{eq:xyz-MSBSQ} through the transformation
\begin{subequations}
\label{eq:MSBSQ-tran}
\begin{align}
& S_{a,b}= \big( (p-a)/(p-b)\big)^n \big( (q-a)/(q-b)\big)^m x, \\
& v_a=(p-a)^n(q-a)^m y, \quad w_b=(p-b)^{-n}(q-b)^{-m} z, \\
& s_a=(p-a)^n(q-a)^m(y_1-z_0y), \\
& t_b=(p-b)^{-n}(q-b)^{-m}(z_1-z_0z),
\end{align}
\end{subequations}
where $z_0=-pn-qm-c_{1}$ with a constant $c_1$.
To bilinearize \eqref{eq:C3o2}, we employ both $\alpha$ and $\beta$ auxiliary independent variables.
By the following transformation
\begin{equation}
\label{C3-tran}
v_a=\frac{\dc{f}}{f},\quad w_b=\frac{\d{f}}{f}, \quad s_a=\frac{a\dc{f}+\dc{g}}{f},\quad t_b=\frac{\d{g}-b\d{f}}{f},
\quad  S_{a,b}=\frac{1}{b-a}\frac{\d{\dc{f}}}{f},
\end{equation}
from \eqref{eq:C3o2} we have its bilinear form
\begin{subequations}
\label{eq:C3-b0}
\begin{align}
\label{eq:C3-b0-ex_a}
& \mathcal{H}_{41}\equiv (p-a)\wt{f}\d{\dc{f}}-(p-b)f\wtd{\dc{f}}-(b-a)\d{f}\wt{\dc{f}}=0,\\
\label{eq:C3-b0-ex_b}
& \mathcal{H}_{42}\equiv (q-a)\wh{f}\d{\dc{f}}-(q-b)f\whd{\dc{f}}-(b-a)\d{f}\wh{\dc{f}}=0,\\
\label{eq:C3-b0-ex_c}
& \mathcal{H}_{43}\equiv f(a\wt{\dc{f}}+\wt{\dc{g}})+(p-a)\wt{f}\dc{f}-(pf+g)\wt{\dc{f}}=0,\\
\label{eq:C3-b0-ex_d}
& \mathcal{H}_{44}\equiv f(a\wh{\dc{f}}+\wh{\dc{g}})+(q-a)\wh{f}\dc{f}-(qf+g)\wh{\dc{f}}=0,\\
\label{eq:C3-b0-ex_e}
& \mathcal{H}_{45}\equiv \wt{f}(p\d{f}+\d{g})-(p-b)f\wtd{f}-(b\wt{f}+\wt{g})\d{f}=0,\\
\label{eq:C3-b0-ex_f}
& \mathcal{H}_{46}\equiv \wh{f}(q\d{f}+\d{g})-(q-b)f\whd{f}-(b\wh{f}+\wh{g})\d{f}=0,\\
\label{eq:C3-b0-ex_g}
& \mathcal{H}_{47}\equiv (p-b)\wtd{f}\wh{f}-(q-b)\whd{f}\wt{f}-(p-q)\wth{f}\d{f}=0,\\
\label{eq:C3-b0-ex_h}
& \mathcal{H}_{48}\equiv (a\dc{f}+\dc{g})\wthd{f}-\dc{f}(\wthd{g}-b\wthd{f})-\frac{p_b\whd{f}\wt{\dc{f}}}{(p-q)(p-a)}
+\frac{q_b\wtd{f}\wh{\dc{f}}}{(p-q)(q-a)}\nn\\
&\qquad\quad +(p+q-\alpha_3)\dc{f}\wthd{f}-\frac{a_b}{(p-a)(q-a)}f\wthd{\dc{f}}=0,
\end{align}
\end{subequations}
where  $a_b=p_b|_{p=a}$  and the connection with \eqref{eq:C3o2} is
\begin{subequations}
\begin{align*}
& C_{31}=\frac{\mathcal{H}_{41}}{(b-a)f\wt{f}},\quad C_{32}=\frac{\mathcal{H}_{42}}{(b-a)f\wh{f}}, \quad
C_{33}=\frac{\mathcal{H}_{43}}{f\wt{f}\wt{v}_a}-\frac{\mathcal{H}_{44}}{f\wh{f}\wh{v}_a}, \quad
C_{34}=\frac{\wt{\mathcal{H}}_{46}}{\wt{f}\wth{f}\wt{w}_b}
-\frac{\wh{\mathcal{H}}_{45}}{\wh{f}\wth{f}\wh{w}_b},\\
&C_{35}=-\frac{1}{f\wth{f}}\mathcal{H}_{48}-\frac{1}{(p-q)(p-a)(q-a)f\wth{f}}
\frac{(q-a)p_b\whd{f}\wt{\dc{f}}-(p-a)q_b\wtd{f}\wh{\dc{f}}}{\mathcal{H}_{47}
+ (p-q)\wth{f}\d{f}}\mathcal{H}_{47}.
\end{align*}
\end{subequations}

\section{Casoratian solutions and $\delta$-extended GD-4 type equations}\label{sec-4}

In this section, first, we present soliton solutions for the bilinear equations we derived in Sec.\ref{sec-3}.
Based on these soliton solutions,
we are able to extend the bilinear equations by introducing a parameter $\delta$,
and then recover the nonlinear $\delta$-extended lattice GD-4  type equations.
These equation are still consistent around the cube.
In addition, the parameter $\delta$ allows us to obtain quasi-rational solutions for the lattice GD-4  type equations.
We will study these equations one by one.
We start from the GD-4 (B-2) equation and
as an demonstrative example we will present details of the investigation,
while for GD-4 (A-2)  and (C-3) equations, we only list main results and sketch their proofs.

\subsection{GD-4 (B-2) equation}\label{sec-4-1}

\subsubsection{Soliton solutions in Casoratian}

We work on the bilinear GD-4 (B-2) system \eqref{eq:B2-h1}.
Consider scalar functions \eqref{eq:psij}, i.e.
\begin{equation}
\label{eq:psij-sec-4}
\psi_s{(n,m,\alpha,\beta,l)}=\sum_{j=1}^4\rho_{j,s}^{(0)}
(-\omega_j(k_s))^l(p-\omega_j(k_s))^n(q-\omega_j(k_s))^m(a-\omega_j(k_s))^\alpha(b-\omega_j(k_s))^\beta,
\end{equation}
and column vector
\begin{equation}\label{Ps}
\Ps(n,m,\alpha,\beta,l)=(\psi_1(n,m,\alpha,\beta,l),\psi_2(n,m,\alpha,\beta,l),\ldots,\psi_N(n,m,\alpha,\beta,l))^{\st}.
\end{equation}
Using such $\Ps(n,m,\alpha,\beta,l)$ as the basic column vector, we compose the following Casoratians
with respect to the shifts of $l$:
\begin{align}\label{ss-B2-f}
& f=|\wh{N-1}|,\quad g=|\wh{N-2},N|,\quad h=|\wh{N-2},N+1|, \quad s=|\wh{N-3},N-1,N|, \nn \\
& \mu=|\wh{N-2},N+2|,\quad \theta=|\wh{N-3},N-1,N+1|,\quad\nu=|\wh{N-4},N-2,N-1,N|.
\end{align}
Here we have employed the compact expression as described in \eqref{shorthand}.
For their explicit forms, for example, we have
\begin{align*}
& f=|\Ps(n,m,\alpha,\beta,l=0), \Ps(n,m,\alpha,\beta,l=1), \cdots, \Ps(n,m,\alpha,\beta,l=N-1)|, \\
& g=|\Ps(n,m,\alpha,\beta,0), \Ps(n,m,\alpha,\beta,1), \cdots, \Ps(n,m,\alpha,\beta,N-2), \Ps(n,m,\alpha,\beta,N)|.
\end{align*}
The number  $N$ represents the order of the Casoratians as well as the number of solitons.
For the above Casoratians  with order $N=1,2,3$, we list them  below respectively:
\begin{align*}
& f=|0|, ~g=|1|, ~h=|2|, ~s=0, ~\mu=|3|, ~\theta=0, ~\nu=0,\\
& f=|0,1|, ~g=|0,2|, ~h=|0,3|, ~s=|1,2|, ~\mu=|0,4|, ~\theta=|1,3|, ~\nu=0,\\
& f=|0,1,2|,~ g=|0,1,3|,~ h=|0,1,4|,~ s=|0,2,3|,~ \mu=|0,1,5|, ~\theta=|0,2,4|, ~\nu=|1,2,3|.
\end{align*}

The bilinear system \eqref{eq:B2-h1} consists of 8 equations but with only 7 involving functions.
However, it admits exact solutions formulated by the Casoratians given \eqref{ss-B2-f}.

\begin{thm}
\label{ss-B2}
The Casoratians defined in \eqref{ss-B2-f} with the entries given by \eqref{eq:psij-sec-4}
provide solutions to the bilinear system \eqref{eq:B2-h1}
\end{thm}

For the proof of this theorem, one can refer to the proof for Theorem \ref{prop Sigm4-B2}
in the following Sec.\ref{sec-3-2},
where a more general case with a parameter $\delta$ is considered.

\subsubsection{$\delta$-extension}\label{sec-4-1-2}

The lattice BSQ (B-2) equation allows a parameter extension \cite{HieZh-BSQ-2010,NZSZ},
which results from introducing a parameter $\delta$.
For the lattice GD-4 (B-2) equation, this can be done by replacing $\psi_s$ \eqref{eq:psij-sec-4}
by the $\delta$-dependent function \eqref{psi-jd},
i.e.
\begin{equation}
\label{psi-jd-sec-4}
\phi_s{(n,m,\alpha,\beta,l)}=\sum_{j=1}^4\rho_{j,s}^{(0)}
(\delta-\omega_j(k_s))^l(p-\omega_j(k_s))^n(q-\omega_j(k_s))^m(a-\omega_j(k_s))^\alpha(b-\omega_j(k_s))^\beta,
\end{equation}
and consider the Casoratians composed by the vector
\begin{equation}\label{Ph}
\Ph(n,m,\alpha,\beta,l)=(\phi_1(n,m,\alpha,\beta,l), \phi_2(n,m,\alpha,\beta,l), \cdots, \phi_N(n,m,\alpha,\beta,l))^{\st}.
\end{equation}
For the clarity and  convenience, we  denote those Casoratians composed by $\Ps$
as $f, g$, etc, while the Casoratians composed by $\Ph$
as $f', g'$, etc, i.e.
\begin{align}
\label{eq:moB2'}
& f'(\Ph)=|\wh{N-1}|,\quad g'(\Ph)=|\wh{N-2},N|,\quad h'(\Ph)=|\wh{N-2},N+1|, \quad s'(\Ph)=|\wh{N-3},N-1,N|, \nn \\
& \theta'(\Ph)=|\wh{N-3},N-1,N+1|, \quad \mu'(\Ph)=|\wh{N-2},N+2|,\quad \nu'(\Ph)=|\wh{N-4},N-2,N-1,N|.
\end{align}
Expanding $(\delta-\omega_j(k_s))^l$ in terms of $\delta$ indicates the following relations
\begin{subequations}
\label{eq:dand realtions}
\begin{align}
\label{fafd}
& f'=f,\quad g'=g+N\delta f,\\
\label{hahd}
& h'=h+(N+1)\delta g+\frac{(N+1)N}{2}\delta^2 f,\quad s'=s+(N-1)\delta g+\frac{N(N-1)}{2}\delta^2 f,\\
\label{vavd}
& \mu'=\mu+(N+2)\delta h+\frac{(N+2)(N+1)}{2}\delta^2 g+\frac{(N+2)(N+1)N}{6}\delta^3 f,\\
\label{wawd}
& \nu'=\nu+(N-2)\delta s+\frac{(N-1)(N-2)}{2}\delta^2 g+\frac{N(N-1)(N-2)}{6}\delta^3 f, \\
\label{theta_theta-d}
& \theta'=\theta+(N+1)\delta s+(N-1)\delta h+(N+1)(N-1)\delta^2 g+\frac{(N+1)N(N-1)}{3}\delta^3 f.
\end{align}
\end{subequations}
Substituting these relations into the bilinear system \eqref{eq:B2-h1} gives rise to
\begin{subequations}
\label{eq:B2-h2}
\begin{align}
\label{eq:GD4-B22-d-1}
& \mathcal{H}_{11}^\delta \equiv \wt{f}'(\sigma_1 g'+h')-\wt{g}'(\sigma_1 f'+g')+f'\wt{s}'=0, \\
\label{eq:GD4-B22-d-2}
& \mathcal{H}_{12}^\delta \equiv \wh{f}'(\sigma_2 g'+h')-\wh{g}'(\sigma_2 f'+g')+f'\wh{s}'=0, \\
\label{eq:GD4-B22-d-3}
& \mathcal{H}_{13}^\delta \equiv\wt{f}'(\sigma_1 h'+\mu')-\wt{h}'(\sigma_1 f'+g')+f'\wt{\theta}'=0, \\
\label{eq:GD4-B22-d-4}
& \mathcal{H}_{14}^\delta \equiv\wh{f}'(\sigma_2 h'+\mu')-\wh{h}'(\sigma_2 f'+g')+f'\wh{\theta}'=0, \\
\label{eq:GD4-B22-d-5}
& \mathcal{H}_{15}^\delta \equiv\wt{f}'(\sigma_1 s'+\theta')-f'(\sigma_1 \wt{s}'-\wt{\nu}')-\wt{g}'s'=0, \\
\label{eq:GD4-B22-d-6}
& \mathcal{H}_{16}^\delta \equiv\wh{f}'(\sigma_2 s'+\theta')-f'(\sigma_2 \wh{s}'-\wh{\nu}')-\wh{g}'s'=0, \\
\label{eq:GD4-B22-d-7}
& \mathcal{H}_{17}^\delta \equiv \wt{f}'\wh{g}'-\wh{f}'\wt{g}'+\sigma^{-}_{12}(\wt{f}'\wh{f}'-f'\wth{f}')=0, \\
\label{eq:GD4-B22-d-8}
& \mathcal{H}_{18}^\delta \equiv R(\sigma_1,\sigma_2)(f'\wth{f}'-\wt{f}'\wh{f}')
+Q(\sigma_1,\sigma_2)(g'\wth{f}'
-f'\wth{g}') \nn \\
& \qquad\quad -(\sigma^{+}_{12}+\epsilon_1)\big(g'\wth{g}'-f'\wth{h}'-\wth{f}'s'\big)+g'\wth{h}'-s'\wth{g}'
-f'\wth{\mu}'+\nu'\wth{f}'=0,
\end{align}
\end{subequations}
where $\{\sigma_j\}$, $\{\epsilon_j\}$, $Q$ and $R$ are defined in \eqref{2.18}.
Introduce new variables (cf.\eqref{eq:tr-GD-4})
\begin{align}
\label{eq:tr-GD-4'}
u_{0}'=\frac{g'}{f'}, \quad u_{1,0}'=\frac{h'}{f'}, \quad u_{0,1}'=\frac{s'}{f'}, \quad
u_{2,0}'=\frac{\mu'}{f'},\quad u_{0,2}'=\frac{\nu'}{f'}.
\end{align}
It then follows from  \eqref{eq:dand realtions} that
\begin{subequations}
\label{B2-B2_delta}
\begin{align}
& u_{0}=u_{0}'-N\delta, \\
& u_{1,0}=u_{1,0}'-(N+1)\delta u_{0}'+\frac{(N+1)N}{2}\delta^2,\\
& u_{0,1}=u_{0,1}'-(N-1)\delta u_{0}'+\frac{N(N-1)}{2}\delta^2,\\
& u_{2,0}=u_{2,0}'-(N+2)\delta u_{1,0}'+\frac{(N+2)(N+1)}{2}\delta^2u_{0}'-\frac{(N+2)(N+1)N}{6}\delta^3, \\
& u_{0,2}=u_{0,2}'-(N-2)\delta u_{0,1}'+\frac{(N-1)(N-2)}{2}\delta^2u_{0}'-\frac{N(N-1)(N-2)}{6}\delta^3,
\end{align}
\end{subequations}
and by them we are led to a $\delta$-extension of the GD-4 (B-2) equation \eqref{eq:B2o}:
\begin{subequations}
\label{eq:B2-2d}
\begin{align}
\label{eq:GD4-B2d-ex_a}
& B_{21}^\delta\equiv\sigma_1 u'_{0}+u'_{1,0}-(\sigma_1+u'_{0})\wt{u}'_{0}+\wt{u}'_{0,1}= 0, \\
\label{eq:GD4-B2-ex_b}
& B_{22}^\delta \equiv \sigma_2 u'_{0}+u'_{1,0}-(\sigma_2+u'_{0})\wh{u}'_{0}+\wh{u}'_{0,1}= 0, \\
\label{eq:GD4-B2-ex_c}
& B_{23}^\delta \equiv \wh{u}'_{2,0}-\wt{u}'_{2,0}+\sigma_1\wh{u}'_{1,0}-\sigma_2\wt{u}'_{1,0}-(\sigma^{-}_{12}
+\wh{u}'_0-\wt{u}'_0)\wth{u}'_{1,0} = 0, \\
\label{eq:GD4-B2-ex_d}
& B_{24}^\delta \equiv \wh{u}'_{0,2}-\wt{u}'_{0,2}+\sigma_1\wt{u}'_{0,1}-\sigma_2\wh{u}'_{0,1}-(\sigma^{-}_{12}
+\wh{u}'_0-\wt{u}'_0)u'_{0,1} = 0, \\
\label{eq:GD4-B2-ex_e}
& B_{25}^\delta \equiv 2\wth{u}'_{0,2} -\sigma^+_{12}(u'_{1,0} + \wth{u}'_{0,1})
+ (\sigma_1-\wth{u}'_{0})\wh{u}'_{0,1}+ (\sigma_2-\wth{u}'_{0})\wt{u}'_{0,1} \nonumber\\
& \qquad\quad   +(\sigma_1+u'_0)\wt{u}'_{1,0} + (\sigma_2 + u'_0)\wh{u}'_{1,0} - 2u'_{2,0} = 0, \\
\label{eq:GD4-B2-ex_f}
& B_{26}^\delta \equiv \frac{R(\sigma_1,\sigma_2)}{\sigma^{-}_{12}
+\wh{u}'_0-\wt{u}'_0}(\wh{u}'_0-\wt{u}'_0)
+ Q(\sigma_1,\sigma_2)(u'_0-\wth{u}'_0) \nn\\
& \qquad\quad
-(\sigma^{+}_{12}+\epsilon_1)(u'_0 \wth{u}'_0 - \wth{u}'_{1,0}
- u'_{0,1})+ u'_0 \wth{u}'_{1,0} - \wth{u}'_0 u'_{0,1} - \wth{u}'_{2,0} + u'_{0,2}= 0.
\end{align}
\end{subequations}
In this case, the analogue of \eqref{eq:B222} is
\begin{subequations}
\label{eq:B222'}
\begin{align}
\label{eq:BSQ-ex_a'}
& \wt{z}'=x'\wt{x}'-y', \quad \wh{z}'=x'\wh{x}'-y', \\
\label{eq:BSQ-ex_b'}
& \wh{\xi}'-\wt{\xi}'=(\wh{x}'-\wt{x}')\wth{y}', \\
\label{eq:BSQ-ex_bb'}
&  \wh{\eta}'-\wt{\eta}'=(\wh{x}'-\wt{x}')z', \\
\label{eq:B2-ex_eta-th'}
& \wth{\eta}'=\xi' + \frac{(\wh{x}'\wt{y}'-\wt{x}'\wh{y}')x'-y'(\wt{y}'-\wh{y}')}{\wt{x}'-\wh{x}'},\\
\label{eq:BSQ-ex_c'}
& \eta'=\wth{\xi}'-\wth{y}'x'+\wth{x}'z'-\epsilon_1(\wth{y}'-x'\wth{x}'+z')
+\epsilon_2(\wth{x}'-x')-\epsilon_3+\frac{P(\sigma_1,\sigma_2)}{\wh{x}'-\wt{x}'},
\end{align}
\end{subequations}
which is connected with \eqref{eq:B2-2d} via the transformations
\begin{subequations}
\label{B2'-tr}
\begin{align}
& u'_0=x'-x'_{0}, \quad u'_{1,0}=y'-x'_0u'_0-y'_0, \quad u'_{0,1}=z'-x'_0u'_0-z'_0, \\
& u'_{2,0}=\xi'-x'_0y'+x'z'_0+\mu', \quad u'_{0,2}=\eta'-x'_0z'+x'y'_0+\nu',
\end{align}
\end{subequations}
where
\begin{subequations}
\label{eq:xyz0'-ex-N4}
\begin{align}
& x'_0=-\sigma_1n-\sigma_2m-c'_1, \\
& y'_0= \frac{1}{2}\left((x'_0)^2+(n\sigma_1^2+m\sigma_2^2+c'_2)\right)+c'_3,\\
& z'_0= \frac{1}{2}\left((x'_0)^2-(n\sigma_1^2+m\sigma_2^2+c'_2)\right)-c'_3,\\
& \mu'_0=n\sigma_1^{3}+m\sigma_2^{3}+c'_4,\\
& \mu'=-\frac{1}{6}((x'_{0})^3 -3(y'_{0}-z'_{0})x'_{0} -2\mu'_0),\\
& \nu'=-\frac{1}{6}((x'_{0})^3 +3(y'_{0}-z'_{0})x'_{0} -2\mu'_0),
\end{align}
\end{subequations}
and $\{c'_{i}\}$ are constants.
Note that the parameter $\delta$ is involved in \eqref{eq:B222'} via $\epsilon_i$ and $P(\sigma_1,\sigma_2)$.
In addition, one can check that this quadrilateral system \eqref{eq:B222'} is CAC
along the line in \cite{Tela} for the GD-4 (B-2) equation \eqref{eq:B222}.

Obviously, the $\delta$-extended GD-4 (B-2) equation \eqref{eq:B2-2d} has solutions
in terms of the Casoratians defined in \eqref{eq:moB2'}.
However, in the following we will show that it allows more general solutions.
Consider the column vector $\Ph$  defined by the  following linear
equation set:
\begin{subequations}
\label{CES-B2}
\begin{align}
&\sigma_1\dt{\Ph}=\Ph-\bar{\dt{\Ph}}, \\
&\sigma_2\dh{\Ph}=\Ph-\bar{\dh{\Ph}}, \\
&\sigma_3\dc{\Ph}=\Ph-\bar{\dc{\Ph}}, \\
\label{sig4-dd}
&\sigma_4\dd{\Ph}=\Ph-\bar{\dd{\Ph}}, \\
\label{bKPh}
&\bK\Ph=\bar{\bar{\bar{\bar{\Ph}}}}-\epsilon_1\bar{\bar{\bar{\Ph}}}
+\epsilon_2\bar{\bar{\Ph}}-\epsilon_3\bar{\Ph},
\end{align}
\end{subequations}
where  $\{\sigma_j\}$ and $\{\epsilon_j\}$ are defined in \eqref{2.18},
and $\bK$ is a $N$-th order constant matrix.
When $\bK$ is diagonal (see later equation \eqref{diag-K}),
the above equation set has a solution \eqref{Ph} composed by \eqref{psi-jd-sec-4}.
Here $\bK$ can be more general, and so are $\Ph$ and the related Casoratians.

\begin{thm}
\label{prop Sigm4-B2}
The $\delta$-extended bilinear GD-4 (B-2) system   \eqref{eq:B2-h2} has Casoratian solutions:
\begin{align}
\label{eq:moB2}
& f'=|\wh{N-1}|,\quad g'=|\wh{N-2},N|,\quad h'=|\wh{N-2},N+1|, \quad s'=|\wh{N-3},N-1,N|, \nn \\
& \theta'=|\wh{N-3},N-1,N+1|, \quad \mu'=|\wh{N-2},N+2|,\quad \nu'=|\wh{N-4},N-2,N-1,N|,
\end{align}
which are composed by $\Ph$ define by \eqref{CES-B2}.
\end{thm}

\begin{proof}
It is notable that $\mathcal{H}_{11}^\delta$, $\mathcal{H}_{12}^\delta$ and
$\mathcal{H}_{17}^\delta$ have appeared in \cite{NZSZ}.
They can be proved to be zero in a similar way.
Thus here we just consider
the verifications of other equations in the system \eqref{eq:B2-h2}.

Let us  focus on  equation $\mathcal{H}_{13}^\delta$ and consider its down-tilde version:
\begin{align}
\label{eq:GD4-B22-delta_3-ut}
\dt{\mathcal{H}}_{13}^\delta\equiv f'(\sigma_1\dt{h}' + \dt{\mu}') - h'(\sigma_1\dt{f}' + \dt{g}') + \dt{f}'\theta',
\end{align}
in which
\[f'=|\wh{N-1}|,~~ g'=|\wh{N-2},N|,~~ h'=|\wh{N-2},N+1|,~~\theta'=|\wh{N-3},N-1,N+1|,~~\mu'=|\wh{N-2},N+2|.\]
In Appendix \ref{gd4:formulae} we have collected some formulae  for the (shifted) Casoratians,
which will be used in the proof.
For $\dt{f}'$, from \eqref{B2:fd} where we take $(i,j)=(1,N-2)$, we have
\begin{equation}\label{f'}
\dt{f}'=- \sigma_1^{-(N-2)}  |\wh{N-2},\dt\Ph(N-2)|.
\end{equation}
Here and after $\Ph(N-j) \doteq\Ph(n,m,\alpha,\beta,N-j)$ without making confusion.
For $\sigma_1\dt{f}' + \dt{g}'$ and $\sigma_1\dt{h}' + \dt{\mu}'$,
from \eqref{eq:fg-mu} with $i=1$, and \eqref{eq:hv-mu} with $i=1$, respectively,
we have
\begin{align*}
& \sigma_1\dt{f}' + \dt{g}'= -\sigma_1^{-(N-2)}  |\wh{N-3}, N-1, \dt\Ph(N-2)|, \\
&\sigma_1\dt{h}' + \dt{\mu}'=-\sigma_1^{-(N-2)} |\wh{N-3}, N+1, \dt\Ph(N-2)|.
\end{align*}
Then, substituting the above formulae into \eqref{eq:GD4-B22-delta_3-ut} yields
\begin{align*}
&-\sigma_1^{N-2}\big(f'(\sigma_1\dt{h}' + \dt{\mu}') - h'(\sigma_1\dt{f}' + \dt{g}')+ \dt{f}'\theta'\big)    \\
=~& |\wh{N-1}| |\wh{N-3}, N+1, \dt{\Ph}(N-2)| - |\wh{N-2}, N+1| |\wh{N-3}, N-1, \dt{\Ph}(N-2)|   \\
&\quad + |\wh{N-2}, \dt{\Ph}(N-2)| |\wh{N-3}, N-1, N+1|,
\end{align*}
which is zero in light of \eqref{plu-r-1} with
\begin{align*}
\bM=(\wh{N-3}), \quad \ba=\Ph(N-2), \quad \bb=\Ph(N-1), \quad \bc=\Ph(N+1), \quad \bd=\dt{\Ph}(N-2).
\end{align*}
Likewise, we can prove $\mathcal{H}_{14}^\delta$.

Next, we look at $\mathcal{H}_{15}^\delta$, the down-tilde shift of which is
\begin{align}
\label{eq:B2-B5-dt}
\dt{\mathcal{H}}_{15}^\delta \equiv f'(\sigma_1\dt{s}'+\dt{\theta}')-\dt{f}'(\sigma_1 s'-\nu')-g'\dt{s}'.
\end{align}
For the involved Casoratians,
in addition to those in \eqref{eq:moB2},
$\dt{f}'$ is presented in \eqref{f'},
by \eqref{eq:gs-mu} with $i=1$ and \eqref{eq:ssita-mu} with $i=1$, respectively, we have
\begin{align*}
&\sigma_1^{N-3}\dt s' = |\wh{N-4}, N-2, N-1, \dt\Ph(N-3)|+|\wh{N-3}, N-1, \dt\Ph(N-2)|, \\
& \sigma_1^{N-3}  (\dt\theta'+\sigma_1 \dt s') = |\wh{N-4}, N-2, N, \dt\Ph(N-3)|+|\wh{N-3}, N, \dt\Ph(N-2)|.
\end{align*}
After substituting them into \eqref{eq:B2-B5-dt}, we can split the resulting equation into two parts, i.e.,
\begin{align*}
& \sigma_1^{N-3}(f'(\sigma_1\dt{s}'+\dt{\theta}')-\dt{f}'(\sigma_1 s'-\nu')-g'\dt{s}') \\
 =~&\Big(|\wh{N-1}||\wh{N-4},N-2,N,\dt{\Ph}(N-3)|-|\wh{N-2},N||\wh{N-4},N-2,N-1,\dt{\Ph}(N-3)| \\
& \quad +|\wh{N-2},\dt{\Ph}(N-3)||\wh{N-4},N-2,N-1,N|\Big)\\
~ & +\Big(|\wh{N-1}||\wh{N-3},N,\dt{\Ph}(N-2)|
 -|\wh{N-2},N||\wh{N-3},N-1,\dt{\Ph}(N-2)|\\
& \quad +|\wh{N-2},\dt{\Ph}(N-2)|||\wh{N-3},N-1,N|\Big),
\end{align*}
of which each part is zero by applying the identity \eqref{plu-r-1} with, respectively,
\begin{align*}
& \bM=(\wh{N-4},N-2),~\ba=\Ph(N-3),~\bb=\Ph(N-1),~\bc=\Ph(N),~\bd=\dt{\Ph}(N-3), \\
& \bM=(\wh{N-3}),~ \ba=\Ph(N-2),~\bb=\Ph(N-1),~\bc=\Ph(N),~\bd=\dt{\Ph}(N-2).
\end{align*}
In a similar manner, we can also prove \eqref{eq:GD4-B22-d-6}.

Finally, we consider the last equation \eqref{eq:GD4-B22-d-8}.
The proof is long.
First, we reformulate $\dth{\mathcal{H}}_{18}^\delta$ in the following manner
\begin{align}
\label{dthH18}
&\sigma^{-}_{12}\sigma_1^{N-2}\sigma_2^{N-2}\dth{\mathcal{H}}_{18}^\delta \nn \\
=~&
\sigma^{-}_{12}\sigma_1^{N-2}\sigma_2^{N-2}\{f'[\sigma_{12}^{+}(\sigma_1^2+\sigma_2^2)\dth{f}'
+ (\sigma^{+^2}_{12}-\sigma_1\sigma_2)\dth{g}'
+\sigma_{12}^{+}\dth{s}'+\dth{\nu}'] \nn\\
& +(\epsilon_1f'-g')[(\sigma^{+^2}_{12}-\sigma_1\sigma_2)\dth{f}'+\sigma_{12}^{+}\dth{g}'
+\dth{s}']+(\epsilon_2f'-\epsilon_1g'+h')\nn\\
& \cdot (\sigma_{12}^{+}\dth{f}'+\dth{g}')+[\epsilon_3f'-\epsilon_2g'+\epsilon_1h'-\mu']\dth{f}'
-R(\sigma_1,\sigma_2)\dt{f}'\dh{f}'\}.
\end{align}
Apart from the Casoratians given in \eqref{eq:moB2}, using the formulae \eqref{eq:f-th}-\eqref{eq:fgws-th}, respectively,
we have
\begin{align*}
& \sigma_1^{N-2}\sigma_2^{N-2}\sigma^-_{12}\dth{f}'= |\wh{N-3}, \dh{\Ph}(N-2),\dt{\Ph}(N-2)|, \\
&\sigma_1^{N-2}\sigma_2^{N-2}\sigma^-_{12}\Bigl(\,\dth{g}'+\sigma^+_{12}\dth{f}'\Bigr)=
\sigma_2^{N-2}\dh{f}'-\sigma_1^{N-2}\dt{f}'+|\wh{N-4}, N-2, \dh{\Ph}(N-2),\dt{\Ph}(N-2)|,
\\
&\sigma_1^{N-2}\sigma_2^{N-2}\sigma^-_{12}\Bigl(\,\dth{s}'+\sigma^+_{12}\dth{g}'
+(\sigma^{+^2}_{12}-\sigma_1\sigma_2)\dth{f}' \Bigr) \nn \\
&~~~ = \sigma_1\sigma_2^{N-2}\dh{f}'-\sigma_1^{N-2}\sigma_2\dt{f}'+|\wh{N-5}, N-3, N-2,\dh{\Ph}(N-2),\dt{\Ph}(N-2)|, \\
&\sigma_1^{N-2}\sigma_2^{N-2}\sigma^-_{12}\Big(\, \dth{\nu}'+\sigma^+_{12}\dth{s}'
+(\sigma^{+^2}_{12}-\sigma_1\sigma_2)\dth{g}'
+\sigma^+_{12}(\sigma_1^2+\sigma_2^2)\dth{f}' \Big) \nn \\
&~~~ =\sigma_1^2 \sigma_2^{N-2} \dh{f}'-\sigma_1^{N-2}\sigma_2^2 \dt{f}'+|\wh{N-6}, N-4, N-3,N-2,\dh{\Ph}(N-2),\dt{\Ph}(N-2)|.
\end{align*}
For the last term in \eqref{dthH18}, we replace its coefficient $\sigma^-_{12}R(\sigma_1,\sigma_2)$ using \eqref{sig-S}:
\[\sigma^-_{12}R(\sigma_1,\sigma_2)=S(\sigma_1)-S(\sigma_2),\]
where $S(\sigma_i)$ is defined in \eqref{eq:Qt}.
Thus we obtain
\begin{align}
\label{h28 hff}
\sigma^{-}_{12}\sigma_1^{N-2}\sigma_2^{N-2}\dth{\mathcal{H}}_{18}^\delta
=~& f'|\wh{N-6},N-4,N-3,N-2,\dh{\Ph}(N-2),\dt{\Ph}(N-2)| \nn \\
&+(\epsilon_1f'-g')|\wh{N-5},N-3,N-2,\dh{\Ph}(N-2),\dt{\Ph}(N-2)| \nn\\
&+(\epsilon_2f'-\epsilon_1g'+h')|\wh{N-4},N-2,\dh{\Ph}(N-2),\dt{\Ph}(N-2)|\nn\\
&+(\epsilon_3f'-\epsilon_2g'+\epsilon_1h'-\mu')|\wh{N-3},\dh{\Ph}(N-2),\dt{\Ph}(N-2)| \nn \\
&+\sigma_1^{N-2}\dt{f}'[\sigma_2^{N+2}\dh{f}'-\sigma_2^2 f'
+\sigma_2 g'-h'+\epsilon_1(\sigma_2^{N+1}\dh{f}'-\sigma_2f'+g') \nn \\
&+\epsilon_2(\sigma_2^N\dh{f}'-f')+\epsilon_3\sigma_2^{N-1}\dh{f}']
-\sigma_2^{N-2}\dh{f}'[\sigma_1^{N+2}\dt{f}'-\sigma_1^2 f' \nn \\
& +\sigma_1 g'-h'+\epsilon_1(\sigma_1^{N+1}\dt{f}'-\sigma_1 f'+g')+\epsilon_2(\sigma_1^N\dt{f}'-f')+\epsilon_3\sigma_1^{N-1}\dt{f}'].
\end{align}
Let us sketch the rest part of the proof.
We replace the term $f'|\wh{N-6},N-4,N-3,N-2,\dh{\Ph}(N-2),\dt{\Ph}(N-2)|$ by
\[\sigma_1^{N-2}\dt{f'}|\wh{N-6},N-4,N-3,N-2,N-1,\dh{\Ph}(N-2)|
-\sigma_2^{N-2}\dh{f'}|\wh{N-6},N-4,N-3,N-2,N-1,\dt{\Ph}(N-2)|\]
through \eqref{eq:B2fN-6}, and proceed in the same manner
for the replacements through \eqref{eq:B2fN-5}-\eqref{eq:vN-3}.
For $\sigma_1^{N-1}\dt{f}'$ and $\sigma_2^{N-1}\dh{f}'$, we use \eqref{B2:fd} for $(i,j)=(1,N-1)$
and $(i,j)=(2,N-1)$, respectively.
After that,  for $\sigma_1^N\dt{f}'-f'$, $\sigma_2^N\dh{f}'-f'$, $\sigma_1^{N+1}\dt{f}'-\sigma_1 f'+g'$,
$\sigma_2^{N+1}\dh{f}'-\sigma_2 f'+g'$, $\sigma_1^{N+2}\dt{f}'-\sigma_1^2 f'+\sigma_2 g'-h'$ and
$\sigma_2^{N+2}\dh{f}'-\sigma_2^2 f' +\sigma_2 g'-h'$, we use \eqref{fd-N} with $i=1,2$,
\eqref{fd-N+1} with $i=1,2$, and \eqref{fd-N+2} with $i=1,2$, respectively.
Finally, using the identity \eqref{Ga4 B2} with $i=1,2$,
we can reduce \eqref{h28 hff} to
\begin{align}
\sigma^{-}_{12}\sigma_1^{N-2}\sigma_2^{N-2}\dth{\mathcal{H}}_{18}^\delta
=\sigma_1^{N-2}\dt{f'}(\text{tr}(\bK))\sigma_2^{N-2}\dh{f'}-\sigma_2^{N-2}\dh{f'}
(\text{tr}(\bK))\sigma_1^{N-2}\dt{f'}=0,
\end{align}
where $\text{tr}(\bK)$ represents the trace of $\bK$.

The proof is thereby completed.

\end{proof}

\subsubsection{Examples of exact solutions}\label{sec-4-1-3}

In the following we will briefly consider some typical examples of
the exact solutions of  the $\delta$-extended GD-4 (B-2) equation \eqref{eq:B2-2d}.
Since the transformation \eqref{B2-B2_delta}  connects
the GD-4 (B-2) equation \eqref{eq:B2o} and its $\delta$-extension  \eqref{eq:B2-2d} is invertible,
these solutions can be easily reduced to those for the GD-4 (B-2) equation \eqref{eq:B2o}.

The  first step is to solve the equation set \eqref{CES-B2}.
From \eqref{CES-B2} we have a formal expression for its solution:
\begin{equation}
\label{Ph}
\Ph_{[\bK]}=\sum_{j=1}^4\Rho_{j}^{(0)}
(\delta\bI-\Ome_j(\bK))^l(p\bI-\Ome_j(\bK))^n(q\bI-\Ome_j(\bK))^m(a\bI
-\Ome_j(\bK))^\alpha(b\bI-\Ome_j(\bK))^\beta,
\end{equation}
in which $\Ome_j(\bK), j=1,2,3,4$ are solutions to the matrix equation
\begin{align}
\label{G-curve-4-ma}
\bG(\Ome,\bK):=\Ome^4-\bK^4+\alpha_3(\Ome^3-\bK^3)+\alpha_2(\Ome^2-\bK^2)+\alpha_1(\Ome-\bK)=0,
\end{align}
where $\bI$ is the $N$-th order identity matrix and $\Rho_{j}^{(0)}, j=1,2,3,4$ are constant column vectors.
Note that $\{\Ome_j(\bK)\}_{j=1}^{4}$ commute with respect to matrix multiplication.

$\bK$ is an arbitrary constant matrix in \eqref{Ph}.
We claim that $\bK$ and its similar matrix $\bL$ yield same solutions to
the $\delta$-extended GD-4 (B-2) equation \eqref{eq:B2-2d}.
In fact, we suppose $\bL$ is any matrix that is similar to $\bK$, i.e.,
\begin{align}
\label{Ga-K}
\bL=\bT\bK\bT^{-1},
\end{align}
where $\bT$ is the transform matrix. Taking \eqref{Ga-K} into \eqref{Ph} and noting that
$\Ome_j(\bL)=\bT\Ome_j(\bK)\bT^{-1}$, we replace $\Rho_{j}^{(0)}\bT^{-1}\rightarrow \Rho_{j}^{(0)}$ and then have
$\Ph_{[\bL]}=\Ph_{[\bK]}\bT^{-1}$.
It is apparent that $\Ph_{[\bL]}$ still satisfies   \eqref{CES-B2}
with $\bK\rightarrow \bL$.
In light of the following relations
\begin{align*}
& f'(\Ph_{[\bL]})=|\bT|f'(\Ph_{[\bK]}),~~ g'(\Ph_{[\bL]})=|\bT|g'(\Ph_{[\bK]}),\\
&
 h'(\Ph_{[\bL]})=|\bT|h'(\Ph_{[\bK]}),~~ s'(\Ph_{[\bL]})=|\bT|s'(\Ph_{[\bK]}), \\
& \mu'(\Ph_{[\bL]})=|\bT|\mu'(\Ph_{[\bK]}),~~
\nu'(\Ph_{[\bL]})=|\bT|\nu'(\Ph_{[\bK]}),~~ \theta'(\Ph_{[\bL]})=|\bT|\theta'(\Ph_{[\bK]}),
\end{align*}
as well as the transformation \eqref{eq:tr-GD-4'}
(or the gauge property of bilinear equations),
one can easily find that
the above Casoratians composed by $\Ph_{[\bK]}$ and by $\Ph_{[\bL]}$
lead to same solutions for the $\delta$-extended GD-4 (B-2) equation \eqref{eq:B2-2d}.
In what follows we can suppose $\bK$ in \eqref{Ph} is its canonical form.

Now let us start from the following column vector:
\begin{equation}
\label{Ph-Ga}
\Ph=\sum_{j=1}^4\Rho_{j}^{(0)}
(\delta\bI-\Ome_j(\Ga))^l(p\bI-\Ome_j(\Ga))^n(q\bI-\Ome_j(\Ga))^m(a\bI-\Ome_j(\Ga))^\alpha
(b\bI-\Ome_j(\Ga))^\beta,
\end{equation}
where  $\bK$ in \eqref{Ph} is already replaced by its  canonical form $\Ga$.
Then various exact solutions, including soliton solutions, Jordan-block solutions and quasi-rational solutions,
for the $\delta$-extended GD-4 (B-2) equation \eqref{eq:B2-2d}
can be derived from different eigenvalue structures of matrix $\Ga$.
Let us list them case by case.

\vskip 6pt

\noindent\textbf{Soliton solutions:} If $\Ga$ is a diagonal matrix
\begin{subequations}\label{diag-K}
\begin{align}
\label{Ga-diag}
\Ga=\text{diag}(\gamma_1, \gamma_2, \dots, \gamma_N)
\end{align}
with
\begin{align}
\label{e in Gamma}
\gamma_s=-(\delta^4-k_s^4)+\alpha_3(\delta^3-k_s^3)-\alpha_2(\delta^2-k_s^2)+\alpha_1(\delta-k_s),
\quad s=1, 2 ,\dots, N,
\end{align}
\end{subequations}
then the basic elements $\{\phi_s{(n,m,\alpha,\beta,l)}\}$
are given by \eqref{psi-jd-sec-4}. The resulting Casoratians \eqref{eq:moB2} yield
the $N$-soliton solutions. In particular, the one-soliton solution reads
\begin{align}
\label{B2-1ss}
u'_{0}=\frac{\tau^{[1]}_1}{\tau_1}, \quad u'_{1,0}=\frac{\tau^{[2]}_1}{\tau_1}, \quad u'_{2,0}=\frac{\tau^{[3]}_1}{\tau_1}, \quad
u'_{0,1}=0,\quad u'_{0,2}=0,
\end{align}
where
\begin{subequations}\label{3.30}
\begin{equation}\label{3.30-a}
\tau_i=\sum_{j=1}^4\rho_{i,j},~~~ \tau^{[\ell]}_i=\sum_{j=1}^4\zeta^{\ell}_{i,j}\rho_{i,j},
~~~\zeta_{i,j}=\delta-\omega_j(k_i)
\end{equation}
and $\rho_{i,j}$ are the plane wave factors defined as
\begin{align}
\label{pwf}
\rho_{i,j}=\rho_{i,j}^{(0)}(p-\omega_j(k_i))^n(q-\omega_j(k_i))^m(a-\omega_j(k_i))^\alpha(b-\omega_j(k_i))^\beta,
~~  \rho_{i,j}^{(0)}\in \mathbb{C}.
\end{align}
\end{subequations}
Besides, the two-soliton solutions can be given by \eqref{eq:moB2} together with
\begin{subequations}
\label{B2d-fghs-2SS}
\begin{align}
\label{B2d-f-2SS}
& f'=\sum_{i=1}^2(-1)^{i+1}\tau_i\tau^{[1]}_{i+1}, \quad
g'=\sum_{i=1}^2(-1)^{i+1}\tau_i\tau^{[2]}_{i+1}, \quad
h'=\sum_{i=1}^2(-1)^{i+1}\tau_i\tau^{[3]}_{i+1}, \\
\label{B2d-s-2SS}
& \mu'=\sum_{i=1}^2(-1)^{i+1}\tau_i\tau^{[4]}_{i+1}, \quad
s'=\sum_{i=1}^2(-1)^{i+1}\tau^{[1]}_{i}\tau^{[2]}_{i+1}, \quad
\theta'=\sum_{i=1}^2(-1)^{i+1}\tau^{[1]}_{i}\tau^{[3]}_{i+1},
\end{align}
\end{subequations}
and $\nu'=0$, where $\zeta_{3,j}=\zeta_{1,j}$ and $\rho_{3,j}=\rho_{1,j}$.
It is worthy to note that all Casoratians are non-zero for $N\geq 3$.

\vskip 6pt
\noindent\textbf{Jordan-block solutions:} If $\Ga$ is a lower triangular Toeplitz matrix of the form
\begin{align}
\label{Ga-Jor}
\Ga=(\gamma_{s,l}(k_1))_{N \times N} , \quad
\gamma_{s,l}(k_1) = \left\{
\begin{aligned}
& \frac{\partial_{k_{1}}^{s-l}\gamma_1}{(s-l)!}, &&  s \geq l, \\
& 0,&& s < l,
\end{aligned}
\right.
\end{align}
then the basic column vector $\Ph$ can be expressed as
\begin{align}
\label{Ph-Jor}
\Ph=\sum_{j=1}^4 \mathcal{A}_j \mathcal{P}_j,
\end{align}
where $\mathcal{P}_j=(\mathcal{P}_{j,0},\mathcal{P}_{j,1},\dots,\mathcal{P}_{j,N-1})^{\st}$ with
\begin{align*}
\mathcal{P}_{j,s}=\frac{1}{s!}\partial_{k_{1}}^{s}[\rho^{(0)}_{j,1}(\delta-\omega_j(k_1))^l
(p-\omega_j(k_1))^n
(q-\omega_j(k_1))^m(a-\omega_j(k_1))^\alpha(b-\omega_j(k_1))^\beta],
\end{align*}
and $\{\mathcal{A}_j\}$ are lower triangular Toeplitz matrices of the form \eqref{A}.
In this case, Jordan-block solutions (i.e. multiple-pole solutions)
can be obtained.

\vskip 6pt

\noindent\textbf{Quasi-rational solutions:} In principle, this type of solutions can be obtained
from the Jordan-block solutions by taking limit $k_1\rightarrow 0$ (cf. \cite{NZSZ}). Thus we
take $\Ga$ as
\begin{align}
\label{Ga-Jor-0}
\Ga=(\gamma_{s,l}(0))_{N \times N} , \quad
\gamma_{s,l}(0) = \left\{
\begin{aligned}
& \frac{\partial_{k_{1}}^{s-l}\gamma_1}{(s-l)!}\bigg|_{k_1=0}, &&  s \geq l, \\
& 0,&& s < l.
\end{aligned}
\right.
\end{align}
In this case, quasi-rational solutions can be generated via
\begin{align}
\label{qrpsi}
\Ph=\sum_{j=1}^4 \mathcal{A}_j \mathcal{Q}_j,
\end{align}
in which $\mathcal{Q}_j=(\mathcal{Q}_{j,0},\mathcal{Q}_{j,1},\dots,\mathcal{Q}_{j,N-1})^{\st}$ with
\begin{align*}
\mathcal{Q}_{j,s}=\frac{1}{s!}\partial_{k_{1}}^{s}[\rho^{(0)}_{j,1}(\delta-\omega_j(k_1))^{l+l_0}
(p-\omega_j(k_1))^n
(q-\omega_j(k_1))^m(a-\omega_j(k_1))^\alpha(b-\omega_j(k_1))^\beta]|_{k_1=0},
\end{align*}
where $l_0$ is either an integer large enough or a non-integer so that the derivative
$\partial_{k_{1}}^{s}(\delta-\omega_j(k_1))^{l+l_0}\neq 0$ (cf. \cite{NZSZ}).
Here $\{\mathcal{A}_j\}$ are lower triangular Toeplitz matrices.

\subsection{GD-4 (A-2) equation}\label{sec-4-2}

Soliton solutions of Casoratian type for the bilinear GD-4 (A-2) system \eqref{eq:A21-h} are
presented in the following theorem.
 Its proof will be given in later together with Theorem \ref{prop Sigm4-A21}.
\begin{thm}
\label{ss-A21}
The bilinear GD-4 (A-2) system \eqref{eq:A21-h} admits Casoratian solutions
\begin{align}
\label{ss-A21-f}
f=|\wh{N-1}|,\quad g=|\wh{N-2},N|,\quad h=|\wh{N-2},N+1|,\quad s=|\wh{N-3},N-1,N|
\end{align}
composed by the vector $\Ps$ in \eqref{Ps} with entries \eqref{eq:psij-sec-4}
\end{thm}

Let us consider $\delta$-extension of the above Casoratians and related equations.
Note that in Theorem \ref{ss-A21} the Casoratians $f, g, h$ and $s$
are the same as those in Theorem \ref{ss-B2} (see \eqref{ss-B2-f}).
This means we can introduce $f', g',  h'$ and $s'$ as in \eqref{eq:moB2'}
and define (cf. \eqref{A21-tran})
\begin{align}
\label{A21-tran'}
& u'_{0}=\frac{g'}{f'}, \quad u'_{1,0}=\frac{h'}{f'}, \quad
v'_a=\frac{\dc{f'}}{f'}, \quad s'_a=\frac{\sigma_3\dc{f'}+\dc{g'}}{f'}, \quad r'_a=\frac{\sigma_3^2\dc{f'}+\sigma_3\dc{g'}+\dc{s'}}{f'}.
\end{align}
Note also that the relations \eqref{eq:GD4-B22-d-1} and \eqref{eq:GD4-B22-d-2}
hold as well for this case.
Then we find
\begin{subequations}
\label{eq:B2d_B2}
\begin{align}
& u_{0}=u'_{0}-N\delta, \\
& u_{1,0}=u'_{1,0}-(N+1)\delta u'_{0}+\frac{(N+1)N}{2}\delta^2, \quad u_{0,1}=u'_{0,1}-(N-1)\delta  u'_{0}+\frac{N(N-1)}{2}\delta^2,\\
\label{eq:vava'}
& v_a=v'_a, \quad s_a=s'_a-(N-1)\delta v'_a, \quad r_a=r'_a-(N-2)\delta s'_a+\frac{(N-1)(N-2)}{2}v'_a.
\end{align}
\end{subequations}
As a result, we obtain the $\delta$-extension of \eqref{eq:A21-h}, which takes the form
\begin{subequations}
\label{eq:A2hdf}
\begin{align}
\label{eq:A21hd_a}
\mathcal{H}_{21}^{\delta}\equiv &f'(\sigma_3\wt{\dc{f}}'+\wt{\dc{g}}')+\sigma^-_{13}\wt{f}'\dc{f}'
-(\sigma_1 f'+g')\wt{\dc{f}}'=0,\\
\label{eq:A21hd_b}
\mathcal{H}_{22}^{\delta}\equiv &f'(\sigma_3\wh{\dc{f}}'+\wh{\dc{g}}')+\sigma^-_{23}\wh{f}'\dc{f}'
-(\sigma_2 f'+g')\wh{\dc{f}}'=0,\\
\label{eq:A21hd_c}
\mathcal{H}_{23}^{\delta}\equiv &f'(\sigma_3^2\wt{\dc{f}}'+\sigma_3\wt{\dc{g}}'+\wt{\dc{s}}')
-\sigma_1 f'(\sigma_3\wt{\dc{f}}'+\wt{\dc{g}}')+\sigma^-_{13}\wt{f}'(\sigma_3\dc{f}'+\dc{g}')-\wt{\dc{f}}'s'=0, \\
\label{eq:A21hd_d}
\mathcal{H}_{24}^{\delta}\equiv& f'(\sigma_3^2\wh{\dc{f}}'+\sigma_3\wh{\dc{g}}'+\wh{\dc{s}}')
-\sigma_2 f'(\sigma_3\wh{\dc{f}}'+\wh{\dc{g}}')+\sigma^-_{23}\wh{f}'(\sigma_3\dc{f}'+\dc{g}')-\wh{\dc{f}}'s'=0, \\
\label{eq:A21hd_e}
\mathcal{H}_{25}^{\delta}\equiv &\wt{f}'(\sigma_1 g'+h')-\wt{g}'(\sigma_1 f'+g')+f'\wt{s}'=0,\\
\label{eq:A21hd_f}
\mathcal{H}_{26}^{\delta}\equiv &\wh{f}'(\sigma_2 g'+h')-\wh{g}'(\sigma_2 f'+g')+f'\wh{s}'=0, \\
\label{eq:A21hd_g}
\mathcal{H}_{27}^{\delta}\equiv& \sigma^-_{12}(\wt{f}'\wh{f}'-f'\wth{f}')+\wt{f}'\wh{g}'-\wh{f}'\wt{g}'=0,
\\
\label{eq:A21hd_h}
\mathcal{H}_{28}^{\delta}\equiv&\big[Q(\sigma_1,\sigma_2)\dc{f}'\wth{f}'+\dc{f}'\wth{h}'
-(\sigma^+_{12}+\epsilon_1)\dc{f}'\wth{g}'+((\sigma^+_{12}+\epsilon_1)\wth{f}'-\wth{g}') (\sigma_3\dc{f}'+\dc{g}')\nn\\
& +\wth{f}'(\sigma_3^2\dc{f}'+\sigma_3\dc{g}'+\dc{s}')\big]\sigma^-_{12}
-R(\sigma_1,\sigma_3)\wh{f}'\wt{\dc{f}}'+R(\sigma_2,\sigma_3)\wt{f}'\wh{\dc{f}}'=0,
\end{align}
\end{subequations}
and the nonlinear form, namely, the $\delta$-extened GD-4 (A-2) equation \eqref{eq:A21o}, reads
\begin{subequations}
\label{eq:A21-o2}
\begin{align}
\label{eq:GD4-A21-o2-ex_a}
A_{11}^{\delta}\equiv&\wt{s}'_a-(\sigma_1+u'_{0})\wt{v}'_{a}+\sigma^-_{13}v'_{a}=0,\\
\label{eq:GD4-A21-o2-ex_b}
A_{12}^{\delta}\equiv&\wh{s}'_a-(\sigma_2+u'_0)\wh{v}'_a+\sigma^-_{23} v'_a=0, \\
\label{eq:GD4-A21-o2-ex_c}
A_{13}^{\delta}\equiv& (\wt{r}'_a-\sigma_1\wt{s}'_a)/\wt{v}'_a
+\sigma^-_{13}s'_a/\wt{v}'_a-(\wh{r}'_a-\sigma_2\wh{s}'_a)/\wh{v}'_a-\sigma^-_{23}s'_a/\wh{v}'_a=0, \\
\label{eq:GD4-A21-o2-ex_d}
A_{14}^{\delta}\equiv&2 \wth{r}'_a-\sigma^+_{12}\wth{s}'_a+\sigma^-_{13}\wh{s}'_a
 +\sigma^-_{23}\wt{s}'_a-\wth{v}'_a[\sigma_1\wt{u}'_0+\sigma_2\wh{u}'_0 \nn\\
& -(\sigma^+_{12} -\wt{u}'_0-\wh{u}'_0)u'_0-2u'_{1,0}]=0, \\
\label{eq:GD4-A21-o2-ex_e}
A_{15}^{\delta}\equiv&\wh{u}'_{1,0}-\wt{u}'_{1,0}-(\sigma^-_{12}+\wh{u}'_0-\wt{u}'_0)\wth{u}'_0
+\sigma_1\wh{u}'_0-\sigma_2\wt{u}'_0=0, \\
\label{eq:GD4-A21-o2-ex_f}
A_{16}^{\delta}\equiv&\big[Q(\sigma_1,\sigma_2)+\wth{u}'_{1,0}-(\sigma^+_{12}+\epsilon_1)\wth{u}'_{0}
+((\sigma^+_{12}+\epsilon_1-\wth{u}'_{0})s'_a+r'_a)/v'_a\big] \nn \\
&\cdot (\sigma^-_{12}+\wh{u}'_0-\wt{u}'_0)-(R(\sigma_1,\sigma_3)\wt{v}'_a-R(\sigma_2,\sigma_3)\wh{v}'_a)/v'_a=0.
\end{align}
\end{subequations}
We may also recover Hietarinta's version (cf. Eq.\eqref{eq:A2-111}):
\begin{subequations}
\label{eq:A2-111'}
\begin{align}
\label{eq:A2-ex_a'}
& \wt{y}'=z'\wt{x}'-x', \quad \wh{y}'=z'\wh{x}'-x', \\
\label{eq:A2-ex_b'}
& \wt{\eta}'\wh{x}'-\wh{\eta}'\wt{x}'=y'(\wt{x}'-\wh{x}'), \\
\label{eq:A2-ex_bb'}
& \wh{\xi}'-\wt{\xi}'=(\wh{z}'-\wt{z}')\wth{z}', \\
\label{eq:A2-ex_eta-th'}
& \wth{\eta}'=\frac{(\wt{y}'\wh{z}'-\wh{y}'\wt{z}')z'-\xi'(\wt{y}'-\wh{y}')}{z'(\wt{z}'-\wh{z}')},\\
\label{eq:A2-ex_c'}
& \eta'=-x'\wth{\xi}'+y'\wth{z}'-\epsilon_1(y'-x'\wth{z}')-\epsilon_2x'
+\frac{P(\sigma_1,\sigma_3)\,\wt{x}'-P(\sigma_2,\sigma_3)\,\wh{x}'}{\wh{z}'-\wt{z}'},
\end{align}
\end{subequations}
which is linked to \eqref{eq:A21-o2} via the transformation
\begin{subequations}
\label{eq:xyz-A2'}
\begin{align}
& v'_a=x'/x'_{a}, \quad u'_0=z'-z'_0, \quad s'_a=(y'-v'_ay'_a)/x'_a, \\
& u'_{1,0}=\xi'-z'_0u'_0-\xi'_0, \quad r'_a =(\eta'-z'_0y'+\xi'_0x')/x'_a,
\end{align}
\end{subequations}
where
\begin{subequations}
\label{eq:xyz-0-A2'}
\begin{align}
& x'_a=(\sigma_1-\sigma_3)^{-n}(\sigma_2-\sigma_3)^{-m}c'_0, \quad z'_0=-\sigma_1n-\sigma_2m-c'_1, \quad y'_a=x'_az'_0, \\
& \xi'_0= \left((z'_0)^2+(n\sigma_1^2+m\sigma_2^2+c'_2)\right)/2+c'_3,
\end{align}
\end{subequations}
and $\{c'_i\}$ are constants. The CAC property of \eqref{eq:A2-111'} can be checked as well.

The more general Casoratian solutions of the $\delta$-extended bilinear form \eqref{eq:A2hdf} are
presented in the following theorem.

\begin{thm}
\label{prop Sigm4-A21}
The functions
\begin{align}
\label{eq:ss-A21d}
f'=|\wh{N-1}|,\quad g'=|\wh{N-2},N|,\quad h'=|\wh{N-2},N+1|, \quad s'=|\wh{N-3},N-1,N|
\end{align}
solve the bilinear equation \eqref{eq:A2hdf}, provided that the column vector $\Ph$ is defined by
the equation set  \eqref{CES-B2}.
\end{thm}

\begin{proof}
Equations \eqref{eq:A21hd_a}, \eqref{eq:A21hd_b} and \eqref{eq:A21hd_g}
have the  form as same as the lattice BSQ case (cf.\cite{NZSZ}),
which can be proved along the same line.
Equations \eqref{eq:A21hd_e} and \eqref{eq:A21hd_f} have been proved in Sec.\ref{sec-3-2}.
Therefore, here we just prove the Casoratian solutions of the bilinear equations \eqref{eq:A21hd_c} and \eqref{eq:A21hd_h},
while we will skip \eqref{eq:A21hd_d} since it is the symmetric version of \eqref{eq:A21hd_c}.
Some formulae presented in Appendix \ref{gd4A2:formulae} will be used in the proof.

We prove \eqref{eq:A21hd_c} in the following form
\begin{equation}
\label{eq:A21hd_c-ut}
\dt{f}'(\sigma_3^2\dc{f}'+\sigma_3\dc{g}'+\dc{s}')-\sigma_1\dt{f}'(\sigma_3\dc{f}'+\dc{g}')
+\sigma^-_{13}f'(\sigma_3\dtc{f}'+\dtc{g}')-\dc{f}'\dt{s}'=0,
\end{equation}
which is a down-tilde-shifted version of the original one. For $\dt{f}'$, $\dc{f}'$, $\dt{s}'$, $\sigma_3\dc{f}'+\dc{g}'$,
$\sigma^-_{13}(\sigma_3\dtc{f}'+\dtc{g}')$ and $\sigma_3^2\dc{f}'+\sigma_3\dc{g}'+\dc{s}'$,
we use \eqref{A21f-i} with $i=1,3$, \eqref{A21s-i} with $i=1$, \eqref{A21-sigi-Eifg} with $i=3$,
\eqref{A21-sigi3-Eifgdc} with $(i,j)=(1,3)$, \eqref{A21-Eifgs} with $i=3$, respectively. Then we have
\begin{align*}
\dt{H}_{23}^{\delta}=&|\wh{N-1}||\wh{N-4},N-2,\dc{\Ph}(0),\dt{\Ph}(0)|
-|\wh{N-2},\dc{\Ph}(0)||\wh{N-4},N-2,N-1,\dt{\Ph}(0)| \\
&+|\wh{N-2},\dt{\Ph}(0)||\wh{N-4},N-2,N-1,\dc{\Ph}(0)|
-\sigma_1(|\wh{N-1}||\wh{N-3},\dc{\Ph}(0),\dt{\Ph}(0)| \\
&-|\wh{N-2},\dc{\Ph}(0)||\wh{N-3},N-1,\dt{\Ph}(0)|
+|\wh{N-2},\dt{\Ph}(0)||\wh{N-3},N-1,\dc{\Ph}(0)|),
\end{align*}
which is zero in light of \eqref{plu-r-1} with
\begin{align*}
& \bM=(\wh{N-4},N-2),\quad \ba=\Ph(N-3), \quad \bb=\Ph(N-1), \quad \bc=\dc{\Ph}(0), \quad \bd=\dt{\Ph}(0), \\
& \bM=(\wh{N-3}),\quad \ba=\Ph(N-2),\quad \bb=\Ph(N-1), \quad \bc=\dc{\Ph}(0),\quad \bd=\dt{\Ph}(0).
\end{align*}

We next prove \eqref{eq:A21hd_h}. For this purpose, we first reformulate its down-tilde-hat-shifted version
\begin{align*}
\dth{\mathcal{H}}_{28}^{\delta}
=&\sigma^-_{12}\big[f'\big((\sigma_{1}\sigma^+_{12}
+\sigma_{2}\sigma^+_{23}+\sigma_{3}\sigma^+_{13})\dthc{f}'
+\sigma^+_{123}\dthc{g}'+\dthc{s}'\big)-(g'-\epsilon_1f')(\sigma^+_{123}\dthc{f}'+\dthc{g}')  \\
& +(h'-\epsilon_1g'+\epsilon_2f')\dthc{f}'\big]-(R(\sigma_1,\sigma_3)\dt{f}'\dhc{f}'
-R(\sigma_2,\sigma_3)\dh{f}'\dtc{f}').
\end{align*}
Multiplying both sides by $\sigma^-_{13}\sigma^-_{23}$ and identifying the formulas
\eqref{A21:f-dthc}-\eqref{A21:fgs-dthc} with $(i,j,\kappa)=(1,2,3)$ to calculate
$\sigma^{-}_{123}\dthc{f}'$, $\sigma^{-}_{123}(\sigma^+_{123}\dthc{f}'+\dthc{g}')$ and
$\sigma^{-}_{123}\big((\sigma_{1}\sigma^+_{12}+\sigma_{2}\sigma^+_{23}
+\sigma_{3}\sigma^+_{13})\dthc{f}'+\sigma^+_{123}\dthc{g}'+\dthc{s}'\big)$.
Furthermore,
\eqref{A21-r-fdthc} with $(i,j,\kappa)=(1,2,3)$ and the relation \eqref{sig-S} are used to
manipulate the component $\sigma^-_{13}\sigma^-_{23}(R(\sigma_1,\sigma_3)\dt{f}'\dhc{f}'-R(\sigma_2,\sigma_3)\dh{f}'\dtc{f}')$.
Thus we know
\begin{align*}
& \sigma^-_{13}\sigma^-_{23}\dth{\mathcal{H}}_{28}^{\delta}  \\
=~& (-1)^{N}\big[ f'|\wh{N-6},N-4,N-3, \dt{\Ph}(0),\dh{\Ph}(0),\dc{\Ph}(0)|-(g'-\epsilon_1f')|
\wh{N-5},N-3, \dt{\Ph}(0), \\
&\dh{\Ph}(0),\dc{\Ph}(0)| + (h'-\epsilon_1g'+\epsilon_2f')|\wh{N-4}, \dt{\Ph}(0), \dh{\Ph}(0), \dc{\Ph}(0)|\big] \\
&-(S(\sigma_1)\sigma^-_{23}\dt{f}'\dhc{f}'- S(\sigma_2)\sigma^-_{13}\dh{f}'\dtc{f}'
+ S(\sigma_3)\sigma^-_{12} \dc{f}'\dth{f}'),
\end{align*}
where the function $S$ is defined as in \eqref{2.18}.
In addition, for $S(\sigma_1)\dt{f}'$, $S(\sigma_2)\dh{f}'$ and $S(\sigma_3)\dc{f}'$, equation
\eqref{B2:fd} with $i=1,2,3$ are used.
To deal with $f'|\wh{N-6},N-4,N-3, \dt{\Ph}(0),\dh{\Ph}(0),\dc{\Ph}(0)|$, \eqref{A21-fN-6}
with $(i,j,\kappa)=(1,2,3)$ are applied,
and \eqref{A21-fN-5}-\eqref{A21-hN-4} are adopted in the same way. To proceed,
we use \eqref{A21-gamma4} with $i=1,2,3$ to calculate $\text{tr}(\bK)\dt{f}'$, $\text{tr}(\bK)\dh{f}'$
and $\text{tr}(\bK)\dc{f}'$,
respectively. With the aid of \eqref{A21-sigij-Eijf} with $(i,j)=(1,2),~(i,j)=(1,3)$ and $(i,j)=(2,3)$, respectively,
we can replace $\sigma^{-}_{12}\dth{f}'$, $\sigma^{-}_{13}\dtc{f}'$ and $\sigma^{-}_{23}\dhc{f}'$.
Therefore,
the following result can be derived finally
\begin{align*}
&(-1)^{N-1}\big[\text{tr}(\bK)(\sigma^-_{12}\dth{f}'\dc{f}'-\sigma^-_{13}\dtc{f}'\dh{f}'
+\sigma^-_{23}\dhc{f}'\dt{f}')
+\sigma^-_{13}\sigma^-_{23}\dth{\mathcal{H}}_{28}^{\delta}\big]  \\
=~&|\wh{N-3},\dh{\Ph}(0),\dt{\Ph}(0)||\wh{N-3},\bK\Ph(N-2), \dc{\Ph}(0)|
-|\wh{N-3},\dc{\Ph}(0),\dt{\Ph}(0)|  \\
&\cdot |\wh{N-3},\bK\Ph(N-2),\dh{\Ph}(0)|
+|\wh{N-3},\dc{\Ph}(0),\dh{\Ph}(0)||\wh{N-3},\bK\Ph(N-2), \dt{\Ph}(0)|,
\end{align*}
which vanishes in light of \eqref{plu-r-1}, in which
\[\bM=(\wh{N-3}),\quad \ba=\dh{\Ph}(0),\quad \bb=\dt{\Ph}(0), \quad \bc=\bK\Ph(N-2), \quad \bd=\dc{\Ph}(0).\]
Thus from \eqref{A21-r-fdthc} with $i=1,2,3$, we conclude that
$$\sigma^{-}_{13}\sigma^{-}_{23}\dth{\mathcal{H}}_{28}^{\delta}
=-\text{tr}(\bK)\big(\sigma^{-}_{12}\dth{f'}\dc{f'}-\sigma^{-}_{13}\dtc{f'}\dh{f'}
+\sigma^{-}_{23}\dhc{f'}\dt{f'}\big)=0,$$
which completes the proof.

\end{proof}

Finally, as for one-soliton and two-soliton solutions,
we briefly provide them through the following Remark.

\begin{remark}\label{R-1}
Since the column vector $\Ph$ used to define the Caosratians in \eqref{eq:ss-A21d}
is still formulated by the equation set \eqref{CES-B2},
it will  share the same expressions as listed in Sec.\ref{sec-4-1-3}.
Solutions can be given through \eqref{A21-tran'}.
For one-soliton solution of the system \eqref{eq:A21-o2}, $u'_{0}$ and $u'_{1,0}$ are the same as \eqref{B2-1ss},
and in addition,
\begin{align}
\label{A21-oss}
v'_a=\bigg(\sum_{j=1}^4\frac{\rho_{1,j}}{\sigma_3+\zeta_{1,j}}\bigg)/\tau_1,\quad s'_a=1, \quad r'_a=\sigma_3,
\end{align}
where $\tau_1$, $\rho_{i,j}$ and $\zeta_{i,j}$ are defined as in \eqref{3.30}.

For two-soliton solutions, $f'$, $g'$, $h'$ and $s'$ have been provided in \eqref{B2d-fghs-2SS},
and  we also have
\begin{subequations}
\begin{align}
\label{A2d-df-2SS}
& \dc{f}'
=\sum_{i=1}^2(-1)^{i+1}\bigg(\sum_{j=1}^4\frac{\rho_{i,j}}{\sigma_3+\zeta_{i,j}}\bigg)
\cdot \bigg(\sum_{j=1}^4\frac{\zeta_{i+1,j}\rho_{i+1,j}}{\sigma_3+\zeta_{i+1,j}}\bigg), \\
\label{A2d-sg3df-dg-2SS}
& \sigma_3\dc{f}'+\dc{g}'=\sum_{i=1}^2(-1)^{i+1}
\bigg(\sum_{j=1}^4\frac{\rho_{i,j}}{\sigma_3+\zeta_{i,j}}\bigg)
\cdot\bigg(\sum_{j=1}^4\zeta_{i+1,j}\rho_{i+1,j}\bigg).
\end{align}
\end{subequations}
One can substitute them into \eqref{A21-tran'} to get solutions.
It is also notable that in this case $\sigma^2_{3}\dc{f}'+\sigma_3\dc{g}'+\dc{s}'=f'$, which implies
$r'_a=1$.
\end{remark}

\subsection{Alternative GD-4 (A-2) equation}\label{sec-4-3}

Similar to the GD-4 (A-2) equation, for the alternative  GD-4 (A-2) equation,
let us first present its soliton solutions in Casoratians.
The proof will be included in a more general case  for Theorem \ref{theo-Sigm4-A22d}.

\begin{thm}
\label{ss-A22}
The functions
\begin{align}
\label{ss-A22-f}
f=|\wh{N-1}|,\quad g=|\wh{N-2},N|,\quad h=|\wh{N-2},N+1|, \quad s=|\wh{N-3},N-1,N|
\end{align}
generate   soliton solutions to the bilinear  alternative  GD-4 (A-2) system \eqref{eq:A2'-b0},
where the Casoratians are composed by the vector $\Ps$ in \eqref{Ps} with entries \eqref{eq:psij-sec-4}.
\end{thm}

To get $\delta$-extended equations, we   introduce $f', g',  h'$ and $s'$ as in \eqref{eq:moB2'}
and (cf. Eq.\eqref{A22-tran})
\begin{align}
\label{A22-tran'}
u'_{0}=\frac{g'}{f'}, \quad u'_{0,1}=\frac{s'}{f'}, \quad w'_b=\frac{\d{f}'}{f'},
\quad t'_b=\frac{\d{g}'-\sigma_4\d{f}'}{f'}, \quad z'_b=\frac{\d{h}'-\sigma_4\d{g}'+\sigma_4^2\d{f}'}{f'}.
\end{align}
Then, making use of  \eqref{eq:dand realtions} we obtain
the $\delta$-extended form of \eqref{eq:A2'-b0}
\begin{subequations}
\label{eq:A22hdf}
\begin{align}
\label{eq:A22hd_a}
\mathcal{H}_{31}^{\delta}\equiv &\wt{f}'(\sigma_1\d{f}'+\d{g}')-\sigma^{-}_{14}f'\wtd{f}'-(\sigma_4\wt{f}'+\wt{g}')\d{f}'=0,\\
\label{eq:A22hd_b}
\mathcal{H}_{32}^{\delta}\equiv &\wh{f}'(\sigma_2\d{f}'+\d{g}')-\sigma^{-}_{24}f'\whd{f}'-(\sigma_4\wh{f}'+\wh{g}')\d{f}'=0,\\
\label{eq:A22hd_c}
\mathcal{H}_{33}^{\delta}\equiv& \wt{f}'(\sigma_4^2\d{f}'-\sigma_4\d{g}'+\d{h}')+\sigma_1\wt{f}'(\d{g}'-\sigma_4\d{f}')
-\sigma^{-}_{14}f'(\wtd{g}'-\sigma_4\wtd{f}')-\d{f}'\wt{h}'=0,\\
\label{eq:A22hd_d}
\mathcal{H}_{34}^{\delta}\equiv&  \wh{f}'(\sigma_4^2\d{f}'-\sigma_4\d{g}'+\d{h}')+\sigma_2\wh{f}'(\d{g}'-\sigma_4\d{f}')
-\sigma^{-}_{24}f'(\whd{g}'-\sigma_4\whd{f}')-\d{f}'\wh{h}'=0,\\
\label{eq:A22hd_e}
\mathcal{H}_{35}^{\delta}\equiv&  \wt{f}'(\sigma_1 g'+h')-\wt{g}'(\sigma_1 f'+g')+f'\wt{s}'=0,\\
\label{eq:A22hd_f}
\mathcal{H}_{36}^{\delta}\equiv & \wh{f}'(\sigma_2 g'+h')-\wh{g}'(\sigma_2 f'+g')+f'\wh{s}'=0,\\
\label{eq:A22hd_g}
\mathcal{H}_{37}^{\delta}\equiv& \sigma^{-}_{12}(\wt{f}'\wh{f}'-f'\wth{f}')+\wt{f}'\wh{g}'-\wh{f}'\wt{g}'=0,\\
\label{eq:A22hd_h}
\mathcal{H}_{38}^{\delta}\equiv&\big[Q(\sigma_1,\sigma_2)f'\wthd{f}'
+s'\wthd{f}'+(\sigma^+_{12}+\epsilon_1)g'\wthd{f}'-((\sigma^+_{12}+\epsilon_1)f'+g')
(\wthd{g}'-\sigma_4\wthd{f}')\nn\\
&+f'(\wthd{h}'-\sigma_4\wthd{g}'+\sigma_4^2\wthd{f}')\big]\sigma^{-}_{12}
-R(\sigma_1,\sigma_4)\wt{f}'\whd{f}'+R(\sigma_2,\sigma_4)\wh{f}'\wtd{f}'=0,
\end{align}
\end{subequations}
and relations
\begin{subequations}
\label{A22o-A22d}
\begin{align}
& u_{0}=u_{0}'-N\delta, \quad u_{0,1}=u_{0,1}'-(N-1)\delta u_{0}'+\frac{N(N-1)}{2}\delta^2, \\
\label{wbwb'}
&  w_b=w_b', \quad t_b=t_b'-(N+1)\delta w_b', \quad z_b=z_b'-(N+2)\delta t_b'+\frac{(N+2)(N+1)}{2}\delta^2 w_b'.
\end{align}
\end{subequations}
The latter recovers the $\delta$-extended form of the alternative GD-4 (A-2) equation \eqref{eq:A22o}:
\begin{subequations}
\label{eq:A22-o2}
\begin{align}
\label{eq:GD4-A22-o2-ex_a}
A_{21}^{\delta}\equiv & t'_b-\sigma^{-}_{14}\wt{w}'_b+(\sigma_1-\wt{u}'_0)w'_b=0,
\\
\label{eq:GD4-A22-ex_b}
A_{22}^{\delta}\equiv &  t'_b-\sigma^{-}_{24}\wh{w}'_b+(\sigma_2-\wh{u}'_0)w'_b=0, \\
\label{eq:GD4-A22-ex_c}
A_{23}^{\delta}\equiv& (\wh{z}'_b+\sigma_1\wh{t}'_b)/\wh{w}'_b-\sigma^{-}_{14}\wth{t}'_b/\wh{w}'_b
-(\wt{z}'_b+\sigma_2\wt{t}'_b)/\wt{w}'_b+\sigma^{-}_{24}\wth{t}'_b/\wt{w}'_b=0, \\
\label{eq:GD4-A22-ex_d}
A_{24}^{\delta}\equiv& \wh{u}'_{0,1}-\wt{u}'_{0,1}-(\sigma^{-}_{12}+\wh{u}'_0-\wt{u}'_0)u'_0
+\sigma_1\wt{u}'_0-\sigma_2\wh{u}'_0=0, \\
\label{eq:GD4-A22-ex_e}
A_{25}^{\delta}\equiv &  2z'_b-\sigma^{-}_{14}\wt{t}'_b-\sigma^{-}_{24}\wh{t}'_b
+\sigma^{+}_{12}t'_b-w'_b((\sigma^{+}_{12}+\wt{u}'_0
+\wh{u}'_0)\wh{\wt{u}}'_0-2\wh{\wt{u}}'_{0,1}
-\sigma_2\wt{u}'_0-\sigma_1\wh{u}'_0)=0, \\
\label{eq:GD4-A22-ex_f}
A_{26}^{\delta}\equiv& \big[Q(\sigma_1,\sigma_2)+u'_{0,1}+(\sigma^{+}_{12}+\epsilon_1)u'_{0}
- \big((\sigma^{+}_{12}+\epsilon_1+u'_{0})\wth{t}'_b-\wth{z}'_b\big)/\wth{w}'_b\big]\nn\\
& \cdot(\sigma^{-}_{12}+\wh{u}'_0-\wt{u}'_0)-(R(\sigma_1,\sigma_4)\wh{w}'_b
-R(\sigma_2,\sigma_4)\wt{w}'_b)/\wth{w}'_b=0.
\end{align}
\end{subequations}
One may also obtain   Hietarinta's from
\begin{subequations}
\label{eq:A2-222'}
\begin{align}
\label{eq:A2-ex-N4-1 a'}
& y'=\wt{z}' x'-\wt{x}', \quad y'=\wh{z}'x'-\wh{x}',\\
\label{eq:A2-ex-N4-1 b'}
& \wt{\eta}'\wh{x}'-\wh{\eta}'\wt{x}'=\wth{y}'(\wt{x}'-\wh{x}'), \\
\label{eq:A2-ex-N4-1 bb'}
& \wh{\xi}'-\wt{\xi}'=(\wh{z}'-\wt{z}')z', \\
\label{eq:alt-A2-ex_eta'}
& \wth{\xi}'=\frac{-\eta(\wh{x}'-\wt{x}')+y'(\wh{y}'-\wt{y}')+ \wt{x}'\wh{y}'-\wh{x}'\wt{y}'}{x'(\wh{x}'-\wt{x}')},\\
\label{eq:A2-ex-N4-1 c'}
&\wth{\eta}'=-\wth{x}'\xi'+\wth{y}'z'+\epsilon_1(\wth{y}'-\wth{x}'z')-\epsilon_2\wth{x}'
- \frac{P(\sigma_1,\sigma_4)\wh{x}'-P(\sigma_2,\sigma_4)\wt{x}'}{\wh{z}'-\wt{z}'},
\end{align}
\end{subequations}
through the transformation
\begin{subequations}
\label{eq:xyz-A2-1'}
\begin{align}
& w'_b=x'/x'_b, \quad u'_0=z'-z'_0, \quad t'_b=(y'-w'_by'_b)/x'_b, \\
& u'_{0,1}=\xi'-z'_0u'_0-\xi'_0, \quad z'_b=(\eta'-z'_0y'+x'\xi'_0)/x'_b,
\end{align}
\end{subequations}
with
\begin{subequations}
\label{eq:xyz-0-A2-1'}
\begin{align}
& x'_b =(-\sigma_1+\sigma_4)^{n}(-\sigma_2+\sigma_4)^{m}c_0, \quad z'_0=-\sigma_1n-\sigma_2m-c'_1, \quad y'_b=x'_bz'_0,\\
& \xi'_0 = \left((z'_0)^2-(n\sigma_1^2+m\sigma_2^2+c'_2)\right)/2-c'_3,
\end{align}
\end{subequations}
and constants $\{c'_i\}$.

Casoratian solutions with more general setting of the bilinear system \eqref{eq:A22hdf}
can be described below.

\begin{thm}
\label{theo-Sigm4-A22d}
The functions
\begin{align}
\label{eq:ss-A22d}
f'=|\wh{N-1}|,\quad g'=|\wh{N-2},N|,\quad h'=|\wh{N-2},N+1|, \quad s'=|\wh{N-3},N-1,N|
\end{align}
solve the bilinear $\delta$-extended  alternative GD-4 (A-2) system \eqref{eq:A22hdf},
provided that the column vector $\Ph$
is defined by equation set \eqref{CES-B2}.
\end{thm}

\begin{proof}
Among the equations in \eqref{eq:A22hdf}, $\mathcal{H}_{35}^{\delta}$, $\mathcal{H}_{36}^{\delta}$
and $\mathcal{H}_{37}^{\delta}$ already appeared in Sec.\ref{sec-4-1-2}.
Here we just prove \eqref{eq:A22hd_a} and \eqref{eq:A22hd_c}.
We skip \eqref{eq:A22hd_h} as its  verification
is quite similar to the one for \eqref{eq:A21hd_h}.
Formulae given in Appendix \ref{A2:formulae} will be used in the proof.

For \eqref{eq:A22hd_a}, its  down-tilde shift   reads
\begin{align}
\label{H31-del-dt}
\mathcal{\dt{H}}_{31}^{\delta}\equiv f'(\sigma_1\dt{\d{f}}'
+\dt{\d{g}}')-\sigma^{-}_{14}\dt{f}'\d{f}'-(\sigma_4 f'+g')\dt{\d{f}}'.
\end{align}
For $f',~\d{\dt{f}}',~\sigma^{-}_{14}\dt{f}',~\sigma_4 f'+g',
~\sigma_1\dt{\d{f}}'+\dt{\d{g}}'$, we use \eqref{A22 f-f dot}, \eqref{A22 dEf} with $i=1$, \eqref{A22 p-b f-i}
with $i=1$, \eqref{A22 bf g-dot}, \eqref{A22 dotfg-i} with $i=1$. Then we recognize
\begin{align*}
&\mathcal{\dt{H}}_{31}^{\delta}=|\d{\wh{N-2}},\Ph(0)||\d{\wh{N-3}},\d{\Ph}(N-1),\d{\dt{\Ph}}(0)|
- |\d{\wh{N-1}}||\d{\wh{N-3}},\Ph(0),\d{\dt{\Ph}}(0)|  \\
&\qquad\quad +|\d{\wh{N-2}},\d{\dt{\Ph}}(0)||\d{\wh{N-3}},\Ph(0),\d{\Ph}(N-1)|,
\end{align*}
which is zero in light of identity \eqref{plu-r-1} with
\[\bM=(\d{\wh{N-3}}), \quad \ba=\d{\Ph}(0),\quad \bb=\Ph(0), \quad \bc=\d{{\Ph}}(N-1), \quad \bd=\d{\dt{{\Ph}}}(0).\]

For the bilinear equation \eqref{eq:A22hd_c}, we still consider its down-tilde version, i.e.,
\begin{align}
\dt{\mathcal{H}}_{33}^{\delta}\equiv
f'(\sigma_4^2\dt{\d{f}}'-\sigma_4\dt{\d{g}}'+\dt{\d{h}}')+\sigma_1f'(\dt{\d{g}}'-\sigma_4\dt{\d{f}}')
-\sigma^{-}_{14}\dt{f}'(\d{g}'-\sigma_4\d{f}')-\d{\dt{f}}'h'=0,
\end{align}
where $\d{f}'$ is defined as \eqref{H31-del-dt},
$f'$, $h'$  and $\d{g}'$ are provided by formulas \eqref{A22 f-f dot},
\eqref{A22-h}, and \eqref{A22-dg}, respectively. For $\d{\dt{f}}'$, $\dt{\d{g}}'$, $\dt{\d{h}}'$  and
$\sigma^{-}_{14}\dt{f}'$, we use \eqref{A22 dEf} with $i=1$, \eqref{A22-dEg}
with $i=1$, \eqref{A22-dEh} with $i=1$, and \eqref{A22 p-b f-i}
with $i=1$, respectively. Then we get
\begin{align*}
\dt{\mathcal{H}}_{33}^{\delta}=&|\d{\wh{N-2}},\Ph(0)||\d{\wh{N-3}},\d{\d{\Ph}}(N-1),\d{\dt{\Ph}}(0)|
- |\d{\wh{N-2}},\d{\d{\Ph}}(N-1)||\d{\wh{N-3}},\Ph(0),\d{\dt{\Ph}}(0)| \\
&+|\d{\wh{N-2}},\d{\dt{\Ph}}(0)||\d{\wh{N-3}},\Ph(0),\d{\d{\Ph}}(N-1)|
+2\sigma_4(|\d{\wh{N-2}},\Ph(0)||\d{\wh{N-3}},\d{\dt{\Ph}}(0),\d{\Ph}(N-1)|  \\
&-|\d{\wh{N-2}},\d{\dt{\Ph}}(0)||\d{\wh{N-3}},\Ph(0),\d{\Ph}(N-1)|
+|\d{\wh{N-1}}||\d{\wh{N-3}},\Ph(0),\d{\dt{\Ph}}(0)|),
\end{align*}
which turns out to be zero in light of \eqref{plu-r-1}, in which
\begin{align*}
& \bM=(\d{\wh{N-3}}),\quad \ba=\d{\Ph}(N-2),\quad \bb=\Ph(0), \quad \bc=\d{\d{\Ph}}(N-1), \quad \bd=\d{\dt{\Ph}}(0), \\
& \bM=(\d{\wh{N-3}}),\quad \ba=\d{\Ph}(N-2),\quad \bb=\Ph(0), \quad
\bc=\d{\dt{\Ph}}(0),\quad \bd=\d{\Ph}(N-1).
\end{align*}

Thus the proof is completed.

\end{proof}

For explicit examples of solutions, we have the following remark.

\begin{remark}\label{R-2}
The functions
\eqref{eq:ss-A22d} with $\Ph$ given by \eqref{psi-jd-sec-4}, \eqref{Ph-Jor} and \eqref{qrpsi}
provide soliton solutions, Jordan-block solutions and quasi-rational solutions for the $\delta$-extended
alternative GD-4 (A-2) equation \eqref{eq:A22-o2}, respectively.
Solutions are given through \eqref{A22-tran'}.
 For  one-soliton solution,
in addition to $u'_{0}$ and $u'_{1,0}$ as listed in \eqref{B2-1ss}, we also have
\begin{align}
\label{wb-1ss}
& w'_b=\tau^{[1]}_1/\tau_1+\sigma_4, \quad t'_b=\tau^{[2]}_1/\tau_1-\sigma_4^2, \quad z'_b=\tau^{[3]}_1/\tau_1+\sigma_4^3.
\end{align}
Here and below for notations one should refer to \eqref{3.30}.

For two-soliton solutions, based on $f'$, $g'$, $h'$ and $s'$ listed in \eqref{B2d-fghs-2SS}, we also have
\begin{subequations}
\begin{align}
\label{A22d-df-2SS}
& \d{f}'=\sum_{i=1}^2(-1)^{i+1}\bigg(\sum_{j=1}^4(\sigma_4+\zeta_{i,j})\rho_{i,j}\bigg)
\cdot \bigg(\sum_{j=1}^4(\sigma_4+\zeta_{i+1,j})\zeta_{i+1,j}\rho_{i+1,j}\bigg), \\
\label{A22d-dg-2SS}
& \d{g}'=\sum_{i=1}^2(-1)^{i+1}\bigg(\sum_{j=1}^4(\sigma_4+\zeta_{i,j})\rho_{i,j}\bigg)
\cdot\bigg(\sum_{j=1}^4(\sigma_4+\zeta_{i+1,j})\zeta^2_{i+1,j}\rho_{i+1,j}\bigg),\\
\label{A22d-dh-2SS}
& \d{h}'=\sum_{i=1}^2(-1)^{i+1}\bigg(\sum_{j=1}^4(\sigma_4+\zeta_{i,j})\rho_{i,j}\bigg)
\cdot\bigg(\sum_{j=1}^4(\sigma_4+\zeta_{i+1,j})\zeta^3_{i+1,j}\rho_{i+1,j}\bigg),
\end{align}
\end{subequations}
which provide two-soliton solutions through \eqref{A22-tran'}.

\end{remark}

\subsection{GD-4 (C-3) equation}\label{sec-4-4}

For the bilinear GD-4 (C-3) system \eqref{eq:C3-b0}, its
soliton solutions in terms of  Casoratians are presented below.

\begin{thm}
\label{ss-C3}
The functions
\begin{align}
\label{ss-C3-f}
f=|\wh{N-1}|,\quad g=|\wh{N-2},N|
\end{align}
generate  soliton solutions to the bilinear GD-4 (C-3) system \eqref{eq:C3-b0},
where the Casoratians are composed by the vector $\Ps$ in \eqref{Ps} with entries \eqref{eq:psij-sec-4}
\end{thm}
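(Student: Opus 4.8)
The plan is to verify each of the eight bilinear equations of \eqref{eq:C3-b0} for the Casoratians $f=|\wh{N-1}|$ and $g=|\wh{N-2},N|$ assembled from the plane-wave entries \eqref{eq:psij}, along the same lines as the GD-4 (B-2) and (A-2) verifications. Three ingredients drive everything: the entries shift diagonally, so that the forward shifts in $n,m,\alpha,\beta$ multiply the $j$-th summand of $\psi_s$ by $(p-\omega_j)$, $(q-\omega_j)$, $(a-\omega_j)$, $(b-\omega_j)$ and the backward shift $\dc{\psi}_s$ divides it by $(a-\omega_j)$; the four roots of the quartic \eqref{e-curve-4} furnish a scalar recurrence in the $l$-direction for every column (the diagonal, $\delta=0$ instance of the last relation in the CES \eqref{CES-B2}); and each shifted Casoratian can be rewritten, via the column-shift formulas collected in the appendices, as a determinant differing from $f$ or $g$ in a single column, so that the identity to be checked collapses onto the Pl\"ucker relation \eqref{plu-r-1}.

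First I would dispose of the equations that are verbatim copies of already-settled ones. Comparing coefficients term by term, $\mathcal{H}_{43}$, $\mathcal{H}_{44}$ in \eqref{eq:C3-b0-ex_c}--\eqref{eq:C3-b0-ex_d} are identical to $\mathcal{H}_{21}$, $\mathcal{H}_{22}$ of the GD-4 (A-2) system, while $\mathcal{H}_{45}$, $\mathcal{H}_{46}$ in \eqref{eq:C3-b0-ex_e}--\eqref{eq:C3-b0-ex_f} are identical to $\mathcal{H}_{31}$, $\mathcal{H}_{32}$ of the alternative GD-4 (A-2) system. Both pairs are exactly the soliton Casoratians of \eqref{ss-A21-f} and \eqref{ss-A22-f}, proved in Theorems \ref{ss-A21} and \ref{ss-A22} (equivalently, the $\delta=0$ specializations of Theorems \ref{prop Sigm4-A21} and \ref{theo-Sigm4-A22d}), so nothing new is required. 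This leaves only $\mathcal{H}_{41}$, $\mathcal{H}_{42}$, $\mathcal{H}_{47}$ and $\mathcal{H}_{48}$.

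The identities $\mathcal{H}_{41}$, $\mathcal{H}_{42}$ in \eqref{eq:C3-b0-ex_a}--\eqref{eq:C3-b0-ex_b} and $\mathcal{H}_{47}$ in \eqref{eq:C3-b0-ex_g} involve $f$ alone and are three-term bilinear relations of discrete-KP (Hirota--Miwa) type: $\mathcal{H}_{47}$ operates in the directions $(n,m,\beta)$ with parameters $(p,q,b)$, while $\mathcal{H}_{41}$, $\mathcal{H}_{42}$ operate in $(n,\beta,\alpha)$ and $(m,\beta,\alpha)$ with $\alpha$ entering backward; in each case the three products carry the same total shift, matching the gauge-balanced form \eqref{eq:HB}. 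For each I would apply a backward shift (for instance $\dd{f}$ together with $\dt{f}$ or $\dh{f}$) to lower the orders, re-express the three products by the single-column shift formulas, and then recognize \eqref{plu-r-1} for a suitable choice of $\bM,\ba,\bb,\bc,\bd$; here only the diagonal shift rule is needed, not the dispersion recurrence.

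The main obstacle is $\mathcal{H}_{48}$ in \eqref{eq:C3-b0-ex_h}, the exact counterpart of the long $\mathcal{H}_{28}$ and $\mathcal{H}_{38}$ verifications, because it carries the full quartic structure through the coefficients $p_b$, $q_b$, $a_b$ (see \eqref{pa}) and through $\alpha_3$. Here the plan is to pass to an appropriately down-shifted version, clear the denominators by $(p-q)(p-a)(q-a)$, and convert $p_b,q_b,a_b$ into the dispersion polynomials $R,Q,S,P$ of \eqref{2.18} using the identity \eqref{sig-S}. Expanding every term by its matching appendix formula reduces the left-hand side to a sum of determinant products plus a residual scalar trace contribution; the determinant products cancel by \eqref{plu-r-1}, and the trace contributions cancel against one another, giving $\mathcal{H}_{48}=0$. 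This final cancellation of the trace terms, mirroring the bookkeeping used for $\mathcal{H}_{28}$, is the delicate step on which the whole verification hinges.
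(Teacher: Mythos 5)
Your overall decomposition is exactly the paper's: equations \eqref{eq:C3-b0-ex_c}--\eqref{eq:C3-b0-ex_f} coincide verbatim with $\mathcal{H}_{21}$, $\mathcal{H}_{22}$, $\mathcal{H}_{31}$, $\mathcal{H}_{32}$ and are inherited from the GD-4 (A-2) results; equations \eqref{eq:C3-b0-ex_a}, \eqref{eq:C3-b0-ex_b}, \eqref{eq:C3-b0-ex_g} are three-term identities in $f$ alone that reduce to \eqref{plu-r-1} after a backward shift (the paper proves the $\delta$-version of \eqref{eq:C3-b0-ex_g} this way and refers the other two to \cite{NZSZ}); and the paper itself obtains Theorem \ref{ss-C3} as the $\delta=0$, diagonal-recurrence specialization of the CES-based Theorem \ref{theo-Sigm4-C1}, so your direct soliton verification is the same computation in substance.

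The gap is the mechanism you prescribe for the last equation, $\mathcal{H}_{48}$. Unlike $\mathcal{H}_{28}$ and $\mathcal{H}_{38}$, which involve only three lattice directions and do collapse onto the three-term Pl\"ucker relation \eqref{plu-r-1}, the equation $\mathcal{H}_{48}$ mixes shifts in all four directions $n,m,\alpha,\beta$. After down-shifting and expanding by the appendix formulas, one is left with sums in which \emph{four} distinguished columns, $\Ph(0)$, $\d{\dt{\Ph}}(0)$, $\d{\dh{\Ph}}(0)$, $\d{\dc{\Ph}}(0)$, are distributed over two determinant factors; such four-term sums are not instances of \eqref{plu-r-1}, and the paper must invoke the higher Pl\"ucker relation \eqref{plu-r-2} with $k=4$ (applied twice, with $\bP_1=(\d{\wh{N-4}},\d{\Ph}(N-2),\bK\d{\Ph}(N-3))$ and $\bP_2=(\d{\wh{N-3}},\bK\d{\Ph}(N-2))$). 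Likewise, the ``trace contributions'' do not cancel against one another pairwise: they assemble into $\text{tr}(\bK)$ (in your soliton setting, $\sum_s\gamma_s$ for the diagonal $l$-recurrence) multiplying the four-term combination
\begin{align*}
\sigma^{-}_{123}f'\d{\dthc{f}}'-\sigma^{-}_{124}\d{\dc{f}}'\dth{f}'
+\sigma^{-}_{134}\d{\dh{f}}'\dtc{f}'-\sigma^{-}_{234}\d{\dt{f}}'\dhc{f}',
\end{align*}
and the vanishing of this combination is itself a four-column Laplace-expansion identity, recorded in the paper as \eqref{C3-fbf0} --- again of $k=4$ type, not derivable by a single application of \eqref{plu-r-1}. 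So as written, your final cancellation step would not close; to complete it you need Proposition \ref{cor-2-3-1} with $k=4$ together with the identity \eqref{C3-fbf0}, after which your plan goes through exactly as in the paper's proof of Theorem \ref{theo-Sigm4-C1}.
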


This theorem will be proved later together with the $\delta$-extended case.
For the $\delta$-extension of the GD-4 (C-3) equation,
note that there are only $f$ and $g$ which are involved in the bilinear form.
For $f'$ and $g'$ defined as in \eqref{eq:moB2'},
making use of the relations in \eqref{fafd},
from \eqref{eq:C3-b0} we get the $\delta$-extended bilinear form
\begin{subequations}
\label{eq:C3hdf}
\begin{align}
\label{eq:C3hd_a}
& \mathcal{H}_{41}^\delta\equiv \sigma^{-}_{13}\wt{f}'\d{\dc{f}}'-\sigma^{-}_{14}f'\wtd{\dc{f}}'
+ \sigma^{-}_{34}\d{f}'\wt{\dc{f}}'=0,\\
\label{eq:C3hd_b}
& \mathcal{H}_{42}^\delta\equiv \sigma^{-}_{23}\wh{f}'\d{\dc{f}}'-\sigma^{-}_{24}f'\whd{\dc{f}}'+\sigma^{-}_{34}\d{f}'\wh{\dc{f}}'=0, \\
\label{eq:C3hd_c}
& \mathcal{H}_{43}^{\delta}\equiv f'(\sigma_3\wt{\dc{f}}'+\wt{\dc{g}}')
+\sigma^{-}_{13}\wt{f}'\dc{f}'-(\sigma_1 f'+g')\wt{\dc{f}}'=0,
\\
\label{eq:C3hd_d}
& \mathcal{H}_{44}^{\delta}\equiv f'(\sigma_3\wh{\dc{f}}'+\wh{\dc{g}}')
+\sigma^{-}_{23}\wh{f}'\dc{f}'-(\sigma_2 f'+g')\wh{\dc{f}}'=0,\\
\label{eq:C3hd_e}
& \mathcal{H}_{45}^{\delta}\equiv \wt{f}'(\sigma_1\d{f}'+\d{g}')
-\sigma^{-}_{14}f'\wtd{f}'-(\sigma_4\wt{f}'+\wt{g}')\d{f}'=0, \\
\label{eq:C3hd_f}
& \mathcal{H}_{46}^{\delta}\equiv \wh{f}'(\sigma_2\d{f}'+\d{g}')
-\sigma^{-}_{24}f'\whd{f}'-(\sigma_4\wh{f}'+\wh{g}')\d{f}'=0, \\
\label{eq:C3hd_g}
& \mathcal{H}_{47}^{\delta}\equiv\sigma^{-}_{14}\wtd{f}'\wh{f}'-\sigma^{-}_{24}\whd{f}'\wt{f}'
- \sigma^{-}_{12}\wth{f}'\d{f}'=0, \\
\label{eq:C3hd_h}
& \mathcal{H}_{48}^\delta\equiv (\sigma_3\dc{f}'+\dc{g}')\wthd{f}'-\dc{f}'(\wthd{g}'-\sigma_4\wthd{f}')
- \frac{R(\sigma_1,\sigma_4)}{\sigma^{-}_{12}\sigma^{-}_{13}}\whd{f}'\wt{\dc{f}}'
+ \frac{R(\sigma_2,\sigma_4)}{\sigma^{-}_{12}\sigma^{-}_{23}}\wtd{f}'\wh{\dc{f}}' \\
&\qquad\quad +(\sigma^{+}_{12}+\epsilon_1)\dc{f}'\wthd{f}'
- \frac{R(\sigma_3,\sigma_4)}{\sigma^{-}_{13}\sigma^{-}_{23}}f'\wthd{\dc{f}}'=0,
\end{align}
\end{subequations}
and meanwhile, from
\begin{equation}
\label{C3-tran'}
v'_a=\frac{\dc{f'}}{f'},\quad w'_b=\frac{\d{f'}}{f'}, \quad s'_a=\frac{\sigma_3\dc{f'}+\dc{g'}}{f'}, \quad t'_b=\frac{\d{g}'-\sigma_4\d{f'}}{f'},
\quad  S'_{a,b}=\frac{1}{\sigma_4-\sigma_3}\frac{\d{\dc{f'}}}{f'},
\end{equation}
we get \eqref{eq:vava'}, \eqref{wbwb'} and  $S_{a,b}=S'_{a,b}$,
which lead to the $\delta$-extended form of the GD-4 (C-3) equation \eqref{eq:C3o2}:
\begin{subequations}
\label{eq:C3-o2}
\begin{align}
\label{eq:GD4-C3-o2-ex_a}
C_{31}^{\delta}\equiv& \sigma^{-}_{13}S'_{a,b}-\sigma^{-}_{14}\wt{S}'_{a,b}-\wt{v}'_a w'_b=0,\\
\label{eq:GD4-C3-o2-ex_b}
C_{32}^{\delta}\equiv& \sigma^{-}_{23}S'_{a,b}-\sigma^{-}_{24}\wh{S}'_{a,b}-\wh{v}'_a w'_b=0,\\
\label{eq:GD4-C3-o2-ex_c}
C_{33}^{\delta}\equiv& \wt{s}'_a/\wt{v}'_a-\wh{s}'_a/\wh{v}'_a
-\sigma^{-}_{12}+v'_a(\sigma^{-}_{13}/\wt{v}'_a-\sigma^{-}_{23}/\wh{v}'_a)=0, \\
\label{eq:GD4-C3-o2-ex_d}
C_{34}^{\delta}\equiv&  \wt{t}'_b/\wt{w}'_b-\wh{t}'_b/\wh{w}'_b
-\sigma^{-}_{12}+\wh{\wt{w}}'_b(\sigma^{-}_{14}/\wh{w}'_b-\sigma^{-}_{24}/\wt{w}'_b)=0, \\
\label{eq:GD4-C3-o2-ex_e}
C_{35}^{\delta}\equiv& v'_{a}\wh{\wt{t}}'_b-s'_a\wh{\wt{w}}'_b+w'_b\frac{\frac{R(\sigma_1,\sigma_4)}{\sigma^{-}_{13}}
 \wh{w}'_b\wt{v}'_a
- \frac{R(\sigma_2,\sigma_4)}{\sigma^{-}_{23}}\wt{w}'_b\wh{v}'_a}{\sigma^{-}_{14}
\wt{w}'_b-\sigma^{-}_{24}\wh{w}'_b}
-(\sigma^{+}_{12}+\epsilon_1)v'_a\wh{\wt{w}}'_b\nn\\
&-\frac{\sigma^{-}_{34}R(\sigma_3,\sigma_4)}{\sigma^{-}_{13}\sigma^{-}_{23}}\wh{\wt{S}}'_{a,b}=0.
\end{align}
\end{subequations}
The corresponding Hietarinta's form reads
\begin{subequations}
\label{eq:xyz-MSBSQ'}
\begin{align}
\label{eq:xyz-MSBSQ-a'}
& x'-\wt{x}'=\wt{y}'z', \quad x'-\wh{x}'=\wh{y}'z', \\
\label{eq:xyz-MSBSQ-b'}
& (\wt{y}'_1+y')\wh{y}'=(\wh{y}'_1+y')\wt{y}', \quad (\wh{z}'_1-\wth{z}')\wt{z}'
=(\wt{z}'_1-\wth{z}')\wh{z}', \\
\label{eq:xyz-MSBSQ-c'}
& y'\wth{z}'_1-y'_1\wth{z}'=
z'\frac{P(\sigma_1,\sigma_4)\wh{z}'\wt{y}'-P(\sigma_2,\sigma_4)\wt{z}'\wh{y}'}{\wh{z}'-\wt{z}'}
+\epsilon_1y'\wth{z}'
+P(\sigma_3,\sigma_4)\wth{x}',
\end{align}
\end{subequations}
which is connected with \eqref{eq:C3-o2} via the transformation
\begin{subequations}
\label{eq:MSBSQ-tran}
\begin{align}
& S'_{a,b}=\big((\sigma_1-\sigma_3)/(\sigma_1-\sigma_4)\big)^n \big((\sigma_2-\sigma_3)/(\sigma_2-\sigma_4)\big)^m x', \\
& v'_a=(\sigma_1-\sigma_3)^n(\sigma_2-\sigma_3)^m y', \quad w_b=(\sigma_1-\sigma_4)^{-n}(\sigma_2-\sigma_4)^{-m} z', \\
& s'_a=(\sigma_1-\sigma_3)^n(\sigma_2-\sigma_3)^m(y'_1-z'_0y'), \\
& t'_b=(\sigma_1-\sigma_4)^{-n}(\sigma_2-\sigma_4)^{-m}(z'_1-z'_0z'),
\end{align}
\end{subequations}
where $z'_0=-\sigma_1n-\sigma_2m-c_{1}$ with a constant $c_1$.
The CAC property of \eqref{eq:xyz-MSBSQ'} can be checked as well.

The more general  Casoratian solutions for the $\delta$-extended bilinear equation \eqref{eq:C3-o2}
are given in the following.

\begin{thm}
\label{theo-Sigm4-C1}
The functions
\begin{subequations}
\label{eq:ss-A22d-C3}
\begin{align}
& f'=|\wh{N-1}|,\quad g'=|\wh{N-2},N|
\end{align}
\end{subequations}
solve the bilinear equation \eqref{eq:C3-o2}, provided that the column vector $\Ph$ satisfies
the equation set \eqref{CES-B2}.
\end{thm}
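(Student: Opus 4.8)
The plan is to verify, one by one, the eight bilinear equations in the system \eqref{eq:C3hdf}, from which the assertion about the $\delta$-extended GD-4 (C-3) equation \eqref{eq:C3-o2} follows through the transformation \eqref{C3-tran'}. Throughout I would substitute the Casoratian expressions for $f'$ and $g'$, use the shift formulae collected in the appendices (analogous to those in Appendices \ref{gd4:formulae}, \ref{gd4A2:formulae}, \ref{A2:formulae}) to rewrite the various shifts of $f'$, $g'$, $\dc{f}'$ and $\d{f}'$ as single Casoratians whose last columns are backward-shifted vectors furnished by the CES \eqref{CES-B2}, and then collapse each identity to the Pl\"ucker relation \eqref{plu-r-1}.

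First I would dispose of the equations already handled elsewhere. The pair $\mathcal{H}_{43}^\delta$, $\mathcal{H}_{44}^\delta$ is literally identical to $\mathcal{H}_{21}^\delta$, $\mathcal{H}_{22}^\delta$ of the GD-4 (A-2) system \eqref{eq:A2hdf}, which are of lattice-BSQ type and were proved along the line of \cite{NZSZ}; the pair $\mathcal{H}_{45}^\delta$, $\mathcal{H}_{46}^\delta$ coincides with $\mathcal{H}_{31}^\delta$, $\mathcal{H}_{32}^\delta$ of the alternative GD-4 (A-2) system \eqref{eq:A22hdf} established in Theorem \ref{theo-Sigm4-A22d}. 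This leaves $\mathcal{H}_{41}^\delta$ (together with its mirror $\mathcal{H}_{42}^\delta$), the three-term relation $\mathcal{H}_{47}^\delta$, and the difficult $\mathcal{H}_{48}^\delta$.

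For $\mathcal{H}_{41}^\delta$ and $\mathcal{H}_{47}^\delta$, which involve only $f'$ and its shifts in the $n$-, $m$-, $\alpha$- and $\beta$-directions, I would pass to a suitably back-shifted form to lower the shift orders and invoke the $\sigma_1,\sigma_2,\sigma_3,\sigma_4$-direction relations of \eqref{CES-B2} to express each of $\wt{\dc{f}}'$, $\d{\dc{f}}'$, $\wtd{\dc{f}}'$, $\wtd{f}'$, $\whd{f}'$, $\wth{f}'$, $\d{f}'$ as a Casoratian with the appropriate final column. Since no $g'$-terms intervene, the three resulting determinant products assemble directly into \eqref{plu-r-1} for a suitable $\bM$ and choice of the four free columns, yielding zero; the coefficients $\sigma^-_{13},\sigma^-_{14},\sigma^-_{34}$ (resp. $\sigma^-_{14},\sigma^-_{24},\sigma^-_{12}$) appear exactly as the cofactor signs require. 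The equation $\mathcal{H}_{42}^\delta$ then follows from $\mathcal{H}_{41}^\delta$ by the symmetry exchanging the $n$- and $m$-directions together with $\sigma_1\leftrightarrow\sigma_2$, so it need not be treated separately.

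The main obstacle is $\mathcal{H}_{48}^\delta$, whose structure parallels $\mathcal{H}_{28}^\delta$ and $\mathcal{H}_{38}^\delta$ but now mixes all four lattice directions and carries the three coefficients $R(\sigma_1,\sigma_4)$, $R(\sigma_2,\sigma_4)$, $R(\sigma_3,\sigma_4)$ with $\sigma^-$ denominators. The plan is to clear these denominators by multiplying through by $\sigma^-_{13}\sigma^-_{23}$ (and the requisite powers of $\sigma_i$), to rewrite each $R$ via \eqref{sig-S} as a difference $S(\sigma_i)-S(\sigma_j)$, and then to use the appendix formulae together with the definition \eqref{eq:Qt} to collapse the triple shift $\wthd{\dc{f}}'$ and the products $\whd{f}'\wt{\dc{f}}'$, $\wtd{f}'\wh{\dc{f}}'$ into Casoratians. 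As in the proof of $\mathcal{H}_{28}^\delta$, the last (matrix) relation of \eqref{CES-B2} will contribute a term $\text{tr}(\bK)$ multiplying three determinant products of the form $\sigma^-_{12}\dth{f}'\dc{f}'-\sigma^-_{13}\dtc{f}'\dh{f}'+\sigma^-_{23}\dhc{f}'\dt{f}'$; I expect these to assemble into a single instance of \eqref{plu-r-1} (with $\bM=(\wh{N-3})$ and free columns given by the $\dc{}$-, $\d{}$-, $\wt{}$- and $\wh{}$-backshifted vectors), so that the trace terms cancel and $\mathcal{H}_{48}^\delta=0$. Keeping careful track of the signs and of which $\sigma^-$ factor multiplies each determinant product is the genuinely delicate bookkeeping in this final step.
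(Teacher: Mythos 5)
Your disposal of the first six equations coincides with the paper's: \eqref{eq:C3hd_c}--\eqref{eq:C3hd_d} are $\mathcal{H}_{21}^{\delta}$--$\mathcal{H}_{22}^{\delta}$, \eqref{eq:C3hd_e}--\eqref{eq:C3hd_f} are $\mathcal{H}_{31}^{\delta}$--$\mathcal{H}_{32}^{\delta}$, the paper refers \eqref{eq:C3hd_a}--\eqref{eq:C3hd_b} to \cite{NZSZ} rather than reproving them, and its verification of $\mathcal{H}_{47}^{\delta}$ is exactly your scheme (down-tilde-hat-bullet shift, the formulas \eqref{A21f-i} and \eqref{A21-sigij-Eijf}, then the Pl\"ucker relation \eqref{plu-r-1}).

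The genuine gap is in your plan for $\mathcal{H}_{48}^{\delta}$, which you model on $\mathcal{H}_{28}^{\delta}$; that template does not fit. A first symptom is the prefactor: multiplying by $\sigma^{-}_{13}\sigma^{-}_{23}$ does not clear the denominators in \eqref{eq:C3hd_h}, since two of the three $R$-terms carry $\sigma^{-}_{12}$; the paper instead multiplies $\dth{\mathcal{H}}_{48}^{\delta}$ by $\sigma^{-}_{1234}$, the product of all six pairwise differences. The structural reason is that, unlike $\mathcal{H}_{28}^{\delta}$ (which involves only the $n$-, $m$- and $\alpha$-directions), every term of \eqref{eq:C3hd_h} carries both the $\alpha$-shift and the $\beta$-shift, so after down-shifting all \emph{four} lattice directions enter on an equal footing. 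Splitting $R(\sigma_i,\sigma_4)\sigma^{-}_{i4}=S(\sigma_i)-S(\sigma_4)$ via \eqref{sig-S} then produces the four functions $S(\sigma_1),\dots,S(\sigma_4)$, each multiplying a different product of down-shifted $f'$s, and the contribution of the matrix relation in \eqref{CES-B2} is the \emph{four}-term combination
\begin{align*}
\text{tr}(\bK)\big(\sigma^{-}_{123}\,\d{\dthc{f}}'f'
-\sigma^{-}_{124}\,\dth{f}'\d{\dc{f}}'
+\sigma^{-}_{134}\,\dtc{f}'\d{\dh{f}}'
-\sigma^{-}_{234}\,\dhc{f}'\d{\dt{f}}'\big),
\end{align*}
not the three-term one you predict. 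Its vanishing is the four-column Laplace identity \eqref{C3-fbf0}, not an instance of \eqref{plu-r-1}.

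Correspondingly, the main determinantal identity does not collapse to the three-term Pl\"ucker relation either: the paper has to invoke the generalized relation \eqref{plu-r-2} with $k=4$, applied twice, with $\bP_1=(\d{\wh{N-4}},\d{\Ph}(N-2),\bK\d{\Ph}(N-3))$, $\bP_2=(\d{\wh{N-3}},\bK\d{\Ph}(N-2))$ and the four free columns $\Ph(0)$, $\d{\dt{\Ph}}(0)$, $\d{\dh{\Ph}}(0)$, $\d{\dc{\Ph}}(0)$. This is precisely why Proposition \ref{cor-2-3-1} is stated in the Preliminaries for general $k$ and not only for $k=3$. A verification that, as in your proposal, only ever invokes \eqref{plu-r-1} cannot close this final step.
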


\begin{proof}
We only prove \eqref{eq:C3hd_g} and \eqref{eq:C3hd_h}.
In fact, some equations in the bilinear form \eqref{eq:C3hdf}
have been verified (Eqs.\eqref{eq:C3hd_c}-\eqref{eq:C3hd_f} are $\mathcal{H}_{21}^{\delta}$,
$\mathcal{H}_{22}^{\delta}$, $\mathcal{H}_{31}^{\delta}$ and $\mathcal{H}_{32}^{\delta}$, respectively),
and Eqs.\eqref{eq:C3hd_a} and \eqref{eq:C3hd_b}   have appeared in \cite{NZSZ}
and can be proved along the same line.
Formulae given in Appendix \ref{gd4A2:formulae} and Appendix \ref{C3:formulae}
will be used in the proof.

For \eqref{eq:C3hd_g}, after a down-tilde-hat-bullet shift we obtain
\begin{align}
\sigma^{-}_{14}\dtd{f}'\dh{f}'-\sigma^{-}_{24}\dhd{f}'\dt{f}'-\sigma^{-}_{12}\dth{f}'\dd{f}'=0.
\end{align}
By applying formulas \eqref{A21f-i} with $i=1,~2,~4$ to get $\dt{f}'$, $\dh{f}'$, $\dd{f}'$ and \eqref{A21-sigij-Eijf}
with $(i,j)=(1,2),~(i,j)=(1,4),~(i,j)=(2,4)$ to get $\sigma^{-}_{12}\dth{f}'$, $\sigma^{-}_{14}\dtd{f}'$, $\sigma^{-}_{24}\dhd{f}'$, then
we derive
\begin{align*}
&\dthd{\mathcal{H}}_{47}^{\delta}=(-1)^{N}(|\wh{N-2},\dh{\Ph}(0)||\wh{N-3},\dt{\Ph}(0),\dd{\Ph}(0)|
- |\wh{N-2},\dt{\Ph}(0)||\wh{N-3},\dh{\Ph}(0),\dd{\Ph}(0)| \\
&\quad\quad\quad+|\wh{N-2},\dd{\Ph}(0)||\wh{N-3},\dh{\Ph}(0),\dt{\Ph}(0)|).
\end{align*}
The application of \eqref{plu-r-1} leads  this to be zero, where
\begin{align*}
\bM=(\wh{N-3}), \quad \ba=\Ph(N-2), \quad \bb=\dh{\Ph}(0), \quad \bc=\dt{\Ph}(0), \quad \bd=\dd{\Ph}(0).
\end{align*}

For \eqref{eq:C3hd_h},
the proof is a bit of complicated. We first use \eqref{sig-S} with $(i,j)=(1,4)$, $(i,j)=(2,4)$ and $(i,j)=(3,4)$
to handle the term in \eqref{eq:C3hd_h}: $$\sigma^{-}_{1234}\bigg(-\frac{R(\sigma_1,\sigma_4)}{\sigma^{-}_{12}\sigma^{-}_{13}}\d{\dt{f}}'\dhc{f}'
+ \frac{R(\sigma_2,\sigma_4)}{\sigma^{-}_{12}\sigma^{-}_{23}}\d{\dh{f}}'\dtc{f}'
- \frac{R(\sigma_3,\sigma_4)}{\sigma^{-}_{13}\sigma^{-}_{23}}\dth{f}'\d{\dc{f}}'\bigg).$$
Then
$\sigma^{-}_{1234}\dth{\mathcal{H}}_{48}^{\delta}$ can be rewritten as follows
\begin{align*}
\sigma^{-}_{1234}\dth{\mathcal{H}}_{48}^{\delta}
=&\sigma^{-}_{1234}\bigg[\d{f}'(\dthc{g}'
+\sigma^{+}_{1234}\dthc{f}')+(\epsilon_1\d{f}'-\d{g}')\dthc{f}'\bigg]  \\
&-(S(\sigma_1)-S(\sigma_4))\sigma^{-}_{234}\dhc{f}'\d{\dt{f}}'
+(S(\sigma_2)-S(\sigma_4))\sigma^{-}_{134}\dtc{f}'\d{\dh{f}}'
-(S(\sigma_3)-S(\sigma_4))\sigma^{-}_{124}\dth{f}'\d{\dc{f}}'.
\end{align*}
Subsequently, for $\sigma^{-}_{1234}\dthc{f}'$, $\sigma^{-}_{1234}(\dthc{g}'+\sigma^{+}_{1234}\dthc{f}')$
and
$S(\sigma_4)(\sigma^{-}_{234}\dhc{f}'\d{\dt{f}}'-\sigma^{-}_{134}\dtc{f}'\d{\dh{f}}'
+\sigma^{-}_{124}\dth{f}'\d{\dc{f}}')$,
one can refer to \eqref{C3-sig1234-dthf}, \eqref{C3-sig1234-dthgf} and \eqref{C3-fbf0}. Then we have
\begin{align*}
&\sigma^{-}_{1234}\dth{\mathcal{H}}_{48}^{\delta} \\
=~&\d{f}'|\d{\wh{N-6}},\d{\Ph}(N-4),\d{\dt{\Ph}}(0),\d{\dh{\Ph}}(0),\d{\dc{\Ph}}(0),\Ph(0)|
+(\epsilon_1\d{f}'-\d{g}')|\d{\wh{N-5}},\d{\dt{\Ph}}(0),\d{\dh{\Ph}}(0),\d{\dc{\Ph}}(0),\Ph(0)| \\
&-S(\sigma_1)\sigma^{-}_{234}\d{\dt{f}}'\dhc{f}'+S(\sigma_2)\sigma^{-}_{134}\d{\dh{f}}'\dtc{f}'
- S(\sigma_3)\sigma^{-}_{124}\d{\dc{f}}'\dth{f}'+S(\sigma_4)\sigma^{-}_{123}\d{\dthc{f}}'f'.
\end{align*}
Moreover, we use \eqref{C3-fN-6} to deal with $\d{f'}|\d{\wh{N-6}},\d{\Ph}(N-4),\d{\dt{\Ph}}(0),\d{\dh{\Ph}}(0),\d{\dc{\Ph}}(0),\Ph(0)|$,
and employ \eqref{C3-fN5} and \eqref{C3-gN-5} to calculate $(\epsilon_1\d{f}'-\d{g}')|\d{\wh{N-5}},\d{\dt{\Ph}}(0),\d{\dh{\Ph}}(0),\d{\dc{\Ph}}(0),\Ph(0)|$.
The further operations (\eqref{C3-sigi-j-f} with $i=1,2,3,4$ are used to obtain $S(\sigma_1)\d{\dt{f}}'$,
$S(\sigma_2)\d{\dh{f}}'$,  $S(\sigma_3)\d{\dc{f}}'$,  $S(\sigma_4)f'$; \eqref{G4-C3} with $i=1,2,3,4$
are utilized to derive $\text{tr}(\bK)\d{\dt{f}}'$, $\text{tr}(\bK)\d{\dh{f}}'$, $\text{tr}(\bK)\d{\dc{f}}'$, $\text{tr}(\bK)f'$;
\eqref{C3-sigij4-Eijf} with $(i,j)=(1,2),~(i,j)=(1,3),~(i,j)=(2,3)$ are utilized to
get $\sigma^{-}_{124}\dth{f}'$, $\sigma^{-}_{134}\dtc{f}'$, $\sigma^{-}_{234}\dhc{f}'$,
and \eqref{C3-sig123-dfdthc} is applied to get $\sigma^{-}_{123}\d{\dthc{f}}'$)
yield
\begin{align}
& \text{tr}(\bK)(\sigma^{-}_{123}\d{\dthc{f}}'f'-\sigma^{-}_{124}\dth{f}'\d{\dc{f}}'
+ \sigma^{-}_{134}\dtc{f}'\d{\dh{f}}'-\sigma^{-}_{234}\dhc{f}'\d{\dt{f}}')
- \sigma^{-}_{1234}\dth{\mathcal{H}}_{48}^{\delta}\nn\\
=~&|\d{\wh{N-4}},\d{\dt{\Ph}}(0),\d{\dh{\Ph}}(0),\d{\dc{\Ph}}(0)|\nn \\
& ~~~ \cdot \left (|\d{\wh{N-4}},\d{\Ph}(N-2),\bK\d{\Ph}(N-3),\Ph(0)|
     -|\d{\wh{N-3}},\bK\d{\Ph}(N-2),\Ph(0)| \right)\nn\\
&-|\d{\wh{N-4}},\Ph(0),\d{\dh{\Ph}}(0),\d{\dc{\Ph}}(0)|\nn \\
& ~~~ \cdot \left(|\d{\wh{N-4}},\d{\Ph}(N-2),\bK\d{\Ph}(N-3),\d{\dt{\Ph}}(0)|
      -|\d{\wh{N-3}},\bK\d{\Ph}(N-2),\d{\dt{\Ph}}(0)|\right) \nn\\
& +|\d{\wh{N-4}},\Ph(0),\d{\dt{\Ph}}(0),\d{\dc{\Ph}}(0)|\nn \\
& ~~~ \cdot \left(|\d{\wh{N-4}},\d{\Ph}(N-2),\bK\d{\Ph}(N-3),\d{\dh{\Ph}}(0)|
 -|\d{\wh{N-3}},\bK\d{\Ph}(N-2),\d{\dh{\Ph}}(0)|\right)\nn \\
& -|\d{\wh{N-4}},\Ph(0),\d{\dt{\Ph}}(0),\d{\dh{\Ph}}(0)|\nn \\
& ~~~ \cdot \left(|\d{\wh{N-4}},\d{\Ph}(N-2),\bK\d{\Ph}(N-3),
 \d{\dc{\Ph}}(0)|-|\d{\wh{N-3}},\bK\d{\Ph}(N-2),\d{\dc{\Ph}}(0)|\right) \nn \\
=~&0,
\end{align}
where we have utilized \eqref{plu-r-2} with $k=4$,
$\bP_1=(\d{\wh{N-4}},\d{\Ph}(N-2),\Ga\d{\Ph}(N-3))$, $\ba_1=\Ph(0)$, $\ba_2=\d{\dt{\Ph}}(0)$,
$\ba_3=\d{\dh{\Ph}}(0)$, $\ba_4=\d{\dc{\Ph}}(0)$, $\bQ_1=(\d{\wh{N-4}})$ and $\bP_2=(\d{\wh{N-3}},\Ga\d{\Ph}(N-2))$,
$\bb_1=\Ph(0)$, $\bb_2=\d{\dt{\Ph}}(0)$, $\bb_3=\d{\dh{\Ph}}(0)$, $\bb_4=\d{\dc{\Ph}}(0)$, $\bQ_2=(\d{\wh{N-4}})$.
In addition in light of \eqref{C3-fbf0},
we know that $\mathcal{H}_{48}^{\delta}=0$.
Thus we have completed the proof for all bilinear equations.

\end{proof}

Note that the only distinction between the systems of GD-4 (C-3) \eqref{eq:xyz-MSBSQ}
and its alternative form \eqref{eq:xyz-MSBSQ-1} lies in the final equation
(also see Eqs.(3.29d) and (3.30) in \cite{Tela}). Here we skip the discussion
for the alternative form since both of
the GD-4 (C-3) type equations share the same dependent variables.

Finally, we mention the following about one- and two-soliton solutions.

\begin{remark}\label{R-3}
One-soliton and two-soliton solution formulae for variables $v'_{a}$, $w'_b$,
$s'_a$ and $t'_b$ have been indicated in Remarks \ref{R-1}  and \ref{R-2}.
Therefore, here we just list one-soliton and two-soliton formula
of the variable $S'_{a,b}$ which is defined by $f$. Notations can be referred to \eqref{3.30}.
The one-soliton solution is
\begin{align}
\label{C3-oss}
S'_{a,b}=\bigg(\sum_{j=1}^4\frac{\sigma_4+\zeta_{1,j}}{\sigma_3+\zeta_{1,j}}\rho_{1,j}\bigg)/\tau_1,
\end{align}
and the two-soliton solution is determined by (cf.\eqref{C3-tran'})
\begin{align}
\label{C3-ddcf-2ss}
&\d{\dc{f'}}=\sum_{i=1}^2(-1)^{i+1}
\bigg(\sum_{j=1}^4\frac{\sigma_4+\zeta_{i,j}}{\sigma_3+\zeta_{i,j}}\rho_{i,j}\bigg)
\cdot\bigg(\sum_{j=1}^4\frac{\sigma_4+\zeta_{i+1,j}}{\sigma_3+\zeta_{i+1,j}}\zeta_{i+1,j}\rho_{i+1,j}\bigg).
\end{align}

\end{remark}

\section{Conclusions} \label{sec-6}

In this paper we investigated bilinearization and derivation of Casoratian solutions
for all the lattice GD-4 type equations that arose from the direct linearization scheme \cite{Tela},
including GD-4 type (A-2), (B-2) and (C-3) equations.
These equations can be presented in different forms,
for example, both \eqref{eq:B222} and \eqref{eq:B2o} are for the GD-4 (B-2) equation,
up to a set of point transformations \eqref{B2-tr}.
Taking the GD-4 (B-2) equation as an example,
we started from equation \eqref{eq:B2o},
introduced the transformation \eqref{eq:tr-GD-4} and got its bilinear form \eqref{eq:B2-h1}.
Soliton solutions were given in terms of Casoratian in Theorem \ref{ss-B2}.
By replacing the plane wave factor $\psi_s$ with $\phi_s$
which contains a parameter $\delta$,
we introduced new Casoratians $f', g', h', \cdots$ as in \eqref{eq:moB2'},
which are composed by $\phi_s$.
It is shown that the new and old Casoratians are connected as shown in
the relations presented in \eqref{eq:dand realtions}.
This then  enabled us to get the new bilinear forms \eqref{eq:B2-h2}
and new nonlinear forms \eqref{eq:B2-2d}.
The later is the $\delta$-extended GD-4 (B-2) equation (see also \eqref{eq:B222'}).
Then we proved in Theorem \ref{prop Sigm4-B2} that  the $\delta$-extended bilinear equation
set \eqref{eq:B2-h2}  allows more general Casoratian solutions,
of which the composing column vector $\Ph$ is determined by the equation set \eqref{CES-B2}.
The obtained solutions include not only solitons but also Jordan block (or multiple-pole solutions)
and quasi-rational solutions, as shown in Sec.\ref{sec-4-1-3}.
Then, along this line for the GD-4 (B-2) equation,
we also investigated the GD-4 (A-2) equation and its alternative from and
the GD-4 (C-3) equation.
We have derived their bilinear forms and $\delta$-extensions and Casoratian solutions.
It is also notable that the $\delta$-extended equations are all consistent around cube.

As we have mentioned before,
compared with the continuous and differential-difference case,
the discrete GD type equations play  important roles
in the research of (1+1)-dimensional discrete integrable systems,
because not only their bilinear forms but also their nonlinear forms
are difficult to obtain as reductions from those of the lattice KP equation.
We hope the results we obtained in this paper
can bring more insights on the lattice GD-4 type equations
as well as (1+1)-dimensional discrete integrable systems.

\vskip 20pt
\subsection*{Data availability statement}
Data sharing not applicable to this article as no datasets were generated or analysed during the current study.

\subsection*{Conflict of interest statement }

The authors declare that they have no conflict of interest.

\subsection*{Acknowledgments}
This project is supported by the National Natural Science Foundation of China (Grant Nos. 12071432 and 12271334)
and the Natural Science Foundation of Zhejiang Province (Grant No. LZ24A010007).

\vskip 20pt

\appendix

\section{Casoratian formulas for the GD-4 (B-2) equation}
\label{gd4:formulae}

In this section and also the succeeding three sections,
we collect some shift formulae
for the Casoratians defined in \eqref{eq:moB2}
where the column vector $\Ph$ is subject to the equation set \eqref{CES-B2}.
For convenience, we introduce some shorthand notations
\begin{align}\label{E}
E_1\Ph\doteq \dt{\Ph},\quad E_2\Ph\doteq \dh{\Ph},\quad E_3\Ph\doteq \dc{\Ph},\quad E_4\Ph\doteq \dd{\Ph},
\end{align}
and denote $\Ph(j)\doteq \Ph(n,m,\alpha,\beta, l=j)$.
In addition, for the notations like $\sigma_i$, etc, one can refer to \eqref{2.18}.
Some basic shift formulae are (for $i=1,2,3,4$)
\begin{subequations}
\label{Formula-BI}
\begin{align}
\label{B2:fd}
&\sigma_i^{j} E_i f'=(-1)^{N+1-j}|\wh{N-2},E_i\Ph(j)|, ~~~0\leq j\leq N-1,\\
\label{fd-N}
&\sigma_i^{N} E_i f'- f'=-|\wh{N-2},E_i\Ph(N)|, \\
\label{eq:fg-mu}
&-\sigma_i^{N-2} E_i \big(g' + \sigma_i f' \big) = |\wh{N-3}, N-1, E_i \Ph(N-2)|, \\
\label{eq:hv-mu}
&-\sigma_i^{N-2} E_i \big(\mu'+\sigma_i h' \big) = |\wh{N-3}, N+1, E_i \Ph(N-2)|, \\
\label{eq:gs-mu}
&\sigma_i^{N-3}E_i s' = |\wh{N-4}, N-2, N-1, E_i \Ph(N-3)|+|\wh{N-3}, N-1, E_i \Ph(N-2)|, \\
\label{eq:ssita-mu}
& \sigma_i^{N-3} E_i (\theta'+\sigma_i s') = |\wh{N-4}, N-2, N, E_i\Ph(N-3)|+|\wh{N-3}, N, E_i \Ph(N-2)|, \\
\label{fd-N+1}
&\sigma_i^{N+1} E_i f'-\sigma_i f'+g'=|\wh{N-2},E_i\Ph(N+1)|, \\
\label{fd-N+2}
&\sigma_i^{N+2} E_i f'-\sigma_i^2 f'+\sigma_i g'-h'=-|\wh{N-2},E_i\Ph(N+2)|, \\
\label{eq:f-th}
& \sigma_1^{N-2}\sigma_2^{N-2}\sigma^-_{12}\dth{f}'= |\wh{N-3}, \dh{\Ph}(N-2),\dt{\Ph}(N-2)|, \\
\label{eq:fg-th}
&\sigma_1^{N-2}\sigma_2^{N-2}\sigma^-_{12}(\dth{g}'+\sigma^+_{12}\dth{f}')=
\sigma_2^{N-2}\dh{f}'-\sigma_1^{N-2}\dt{f}'+|\wh{N-4}, N-2, \dh{\Ph}(N-2),\dt{\Ph}(N-2)|,
\\
\label{eq:fgs-th}
&\sigma_1^{N-2}\sigma_2^{N-2}\sigma^-_{12}(\dth{s}'+\sigma^+_{12}\dth{g}'
+(\sigma^{+^2}_{12}-\sigma_1\sigma_2)\dth{f}') \nn \\
&= \sigma_1\sigma_2^{N-2}\dh{f}'-\sigma_1^{N-2}\sigma_2\dt{f}'+|\wh{N-5}, N-3, N-2,\dh{\Ph}(N-2),\dt{\Ph}(N-2)|, \\
\label{eq:fgws-th}
&\sigma_1^{N-2}\sigma_2^{N-2}\sigma^-_{12}\big(\dth{\nu}'+\sigma^+_{12}\dth{s}'
+(\sigma^{+^2}_{12}-\sigma_1\sigma_2)\dth{g}'
+\sigma^+_{12}(\sigma_1^2+\sigma_2^2)\dth{f}'\big) \nn \\
&=\sigma_1^2 \sigma_2^{N-2} \dh{f}'-\sigma_1^{N-2}\sigma_2^2 \dt{f}'+|\wh{N-6}, N-4, N-3,N-2,\dh{\Ph}(N-2),\dt{\Ph}(N-2)|.
\end{align}
\end{subequations}
These formulae can be derived along the line of \cite{HZ-Bi}.
In addition, we also have the following formulas
\begin{subequations}
\label{Formula-B2}
\begin{align}
& f'|\wh{N-6},N-4,N-3,N-2,\dh{\Ph}(N-2),\dt{\Ph}(N-2)|\nn \\
& ~~ =\sigma_1^{N-2}\dt{f'}|\wh{N-6},N-4,N-3,N-2,N-1,\dh{\Ph}(N-2)| \nn \\
& \qquad -\sigma_2^{N-2}\dh{f'}|\wh{N-6},N-4,N-3,N-2,N-1,\dt{\Ph}(N-2)|,
\label{eq:B2fN-6}
\\
& f'|\wh{N-5},N-3,N-2,\dh{\Ph}(N-2),\dt{\Ph}(N-2)|\nn \\
& ~~ =\sigma_1^{N-2}\dt{f'}|\wh{N-5},N-3,N-2,N-1,\dh{\Ph}(N-2)| \nn\\
& \qquad -\sigma_2^{N-2}\dh{f'}|\wh{N-5},N-3,N-2,N-1,\dt{\Ph}(N-2)|,
\label{eq:B2fN-5}\\
& f'|\wh{N-4},N-2,\dh{\Ph}(N-2),\dt{\Ph}(N-2)|\nn \\
& ~~=\sigma_1^{N-2}\dt{f'}|\wh{N-4},N-2,N-1,\dh{\Ph}(N-2)|
-\sigma_2^{N-2}\dh{f'}|\wh{N-4},N-2,N-1,\dt{\Ph}(N-2)|,
\label{eq:fN-4}\\
&f'|\wh{N-3},\dh{\Ph}(N-2),\dt{\Ph}(N-2)|\nn \\
& ~~ =\sigma_1^{N-2}\dt{f'}|\wh{N-3},N-1,\dh{\Ph}(N-2)|
 -\sigma_2^{N-2}\dh{f'}|\wh{N-3},N-1,\dt{\Ph}(N-2)|,
\label{eq:fN-3}\\
& g'|\wh{N-5},N-3,N-2,\dh{\Ph}(N-2),\dt{\Ph}(N-2)|\nn \\
&~~ =\sigma_1^{N-2}\dt{f'}|\wh{N-5},N-3,N-2,N,\dh{\Ph}(N-2)| \nn\\
& \qquad-\sigma_2^{N-2}\dh{f'}|\wh{N-5},N-3,N-2,N,\dt{\Ph}(N-2)|,
\label{eq:gN-5} \\
& g'|\wh{N-4},N-2,\dh{\Ph}(N-2),\dt{\Ph}(N-2)| \nn \\
& ~~ =\sigma_1^{N-2}\dt{f'}|\wh{N-4},N-2,N,\dh{\Ph}(N-2)|
 -\sigma_2^{N-2}\dh{f'}|\wh{N-4},N-2,N,\dt{\Ph}(N-2)|,
 \label{eq:gN-4}\\
& g'|\wh{N-3},\dh{\Ph}(N-2),\dt{\Ph}(N-2)| \nn \\
& ~~ =\sigma_1^{N-2}\dt{f'}|\wh{N-3},N,\dh{\Ph}(N-2)|
 -\sigma_2^{N-2}\dh{f'}|\wh{N-3},N,\dt{\Ph}(N-2)|,
\label{eq:gN-3}\\
& h'|\wh{N-4},N-2,\dh{\Ph}(N-2),\dt{\Ph}(N-2)| \nn \\
& ~~ =\sigma_1^{N-2}\dt{f'}|\wh{N-4},N-2,N+1,\dh{\Ph}(N-2)|
-\sigma_2^{N-2}\dh{f'}|\wh{N-4},N-2,N+1,\dt{\Ph}(N-2)|,
\label{eq:hN-4}\\
\label{eq:hN-3}
& h'|\wh{N-3},\dh{\Ph}(N-2),\dt{\Ph}(N-2)|\nn \\
& ~~=\sigma_1^{N-2}\dt{f'}|\wh{N-3},N+1,\dh{\Ph}(N-2)|
 -\sigma_2^{N-2}\dh{f'}|\wh{N-3},N+1,\dt{\Ph}(N-2)|, \\
\label{eq:vN-3}
& \mu'|\wh{N-3},\dh{\Ph}(N-2),\dt{\Ph}(N-2)|\nn \\
& ~~ =\sigma_1^{N-2}\dt{f'}|\wh{N-3},N+2,\dh{\Ph}(N-2)|
 -\sigma_2^{N-2}\dh{f'}|\wh{N-3},N+2,\dt{\Ph}(N-2)|,  \\
\label{Ga4 B2}
&-\text{tr}(\bK)\sigma_i^{N-2}E_i f'=-|\wh{N-6},N-4,N-3,N-2,N-1,E_i \Ph(N-2)|\nn\\
& \qquad +|\wh{N-5},N-3,N-2,N,E_i \Ph(N-2)|+|\wh{N-4},N-2,N+1,E_i \Ph(N-2)|\nn\\
& \qquad +|\wh{N-3},N+2,E_i \Ph(N-2)|-|\wh{N-2},E_i \Ph(N+2)|\nn \\
& \qquad  -\epsilon_1\Bigl (|\wh{N-5},N-3,N-2,N-1,E_i \Ph(N-2)|-|\wh{N-4},N-2,N,E_i \Ph(N-2)| \nn\\
& \qquad  +|\wh{N-3},N+1,E_i \Ph(N-2)|+|\wh{N-2},E_i \Ph(N+1)|\Bigr)\nn\\
& \qquad -\epsilon_2\Bigl (|\wh{N-4},N-2,N-1,E_i\Ph(N-2)|
-|\wh{N-3},N,E_i \Ph(N-2)|-|\wh{N-2},E_i \Ph(N)|\Bigr)\nn\\
& \qquad -\epsilon_3\Bigl(|\wh{N-3},N-1,E_i \Ph(N-2)|+|\wh{N-2},E_i \Ph(N-1)|\Bigr).
\end{align}
\end{subequations}
Here $\text{tr}(\bK)$ stands for the trace of matrix $\bK$.
Equations \eqref{eq:B2fN-6}-\eqref{eq:vN-3} are due to the identity \eqref{plu-r-1}. 
For example,
equation \eqref{eq:B2fN-6} is the identity \eqref{plu-r-1} with
$$\bM=(\wh{N-6},N-4,N-3,N-2), ~\ba=\Ph(N-5),~ \bb=\Ph(N-1),~\bc=\dh{\Ph}(N-2),~\bd=\dt{\Ph}(N-2).$$
To prove the formula \eqref{Ga4 B2}, we need to make use of Proposition \ref{thm-2-3-1}.
For convenience, let us introduce the following notation to denote the shift in $l$-direction 
(cf.\eqref{E} and \eqref{shifts}):
\begin{align}\label{E}
E_5\Ph\doteq \underline{\Ph} \, ,
\end{align}
under which we can rewrite the relation \eqref{bKPh} as
\begin{equation}
\bK\Ph=\left(E_5^{-4}-\epsilon_1 E_5^{-3}+ \epsilon_2 E_5^{-2} -\epsilon_3 E_5^{-1}\right ) \Ph ~. 
\end{equation}
Recall Proposition \ref{thm-2-3-1}, where we take 
\begin{equation}\label{Omega-ij}
\Omega_{i,j}\equiv E_5^{-4}-\epsilon_1 E_5^{-3}+ \epsilon_2 E_5^{-2} -\epsilon_3 E_5^{-1} 
\end{equation}
and $\Xi=\big(\wh{N-2},E_i\Ph(N-2)\big)$.
Then, the r.h.s. of equation \eqref{id-w-2} yields
$\text{tr}(\bK)|\wh{N-2},E_i\Ph(N-2)|$,
i.e., $-\text{tr}(\bK)\sigma_i^{N-2}E_i f'$ in light of \eqref{B2:fd} with $j=N-1$,
while from the l.h.s. of equation \eqref{id-w-2} we get nothing but the r.h.s. of  \eqref{Ga4 B2}.
Thus we obtain the formula \eqref{Ga4 B2}.

\section{Casoratian formulas for the GD-4 (A-2) equation}
\label{gd4A2:formulae}

The basic shift formulas are
\begin{subequations}
\label{Formula-a21b}
\begin{align}
\label{A21f-i}
& E_i f'=(-1)^{N-1}|\wh{N-2},E_i\Ph(0)|,\\
\label{A21s-i}
& E_i s'=(-1)^{N-1}(|\wh{N-4},N-2,N-1,E_i\Ph(0)|-\sigma_i|\wh{N-3},N-1,E_i\Ph(0)|), \\
\label{A21-sigij-Eijf}
&\sigma^{-}_{ij}E_iE_jf'=|\wh{N-3},E_j\Ph(0),E_i\Ph(0)|,\\
\label{A21-sigi-Eifg}
& E_i(\sigma_i f'+ g')=(-1)^{N-1}|\wh{N-3},N-1,E_i\Ph(0)|,
\\
\label{A21-sigi3-Eifgdc}
&\sigma^-_{ij} E_i E_j(\sigma_j f'+g')=|\wh{N-4},N-2,E_j\Ph(0),E_i{\Ph}(0)|-\sigma_i|\wh{N-3},E_j\Ph(0),E_i{\Ph}(0)|,\\
\label{A21-Eifgs}
& E_i(\sigma_i^2 f'+\sigma_i g'+s')=(-1)^{N-1}|\wh{N-4},N-2,N-1,E_i \Ph(0)|, \\
\label{A21:f-dthc}
&\sigma^-_{ij\kappa}E_iE_jE_\kappa{f}'=(-1)^{N}|\wh{N-4},E_i\Ph(0),E_j\Ph(0),E_\kappa \Ph(0)|,\\
\label{A21:fg-dthc}
&\sigma^-_{ij\kappa} E_iE_jE_\kappa(g'+\sigma^+_{ij\kappa}f')=(-1)^{N}|\wh{N-5},N-3,E_i\Ph(0),E_j\Ph(0),E_\kappa \Ph(0)|,\\
\label{A21:fgs-dthc}
&\sigma^-_{ij\kappa}E_iE_jE_\kappa\big[(\sigma_i\sigma^+_{ij}+\sigma_j\sigma^+_{j\kappa}
+\sigma_\kappa\sigma^+_{i\kappa})f'+\sigma^+_{ij\kappa}g'+s'\big]\nn\\
& \qquad =(-1)^{N}|\wh{N-6},N-4,N-3,E_i\Ph(0),E_j\Ph(0),E_\kappa \Ph(0)|,
\end{align}
\end{subequations}
where \eqref{A21f-i} is actually the equation \eqref{B2:fd} with $j=N-1$, 
and equation \eqref{A21-sigij-Eijf} is the generalized version of
\eqref{eq:f-th}. 
The other equations can be verified in a way  similar to those for the equations \eqref{Formula-BI}. 
Some more formulas are
\begin{subequations}
\label{Formula-A21m}
\begin{align}
\label{A21-fN-6}
& f'|\wh{N-6},N-4,N-3,E_i\Ph(0),E_j\Ph(0),E_\kappa\Ph(0)|\nn \\
& ~~=-\sigma^-_{ij}E_iE_jf'|\wh{N-6},N-4,N-3, N-2,N-1,E_\kappa\Ph(0)|\nn\\
& \qquad+\sigma^-_{ik}E_iE_\kappa f'|\wh{N-6},N-4,N-3,N-2,N-1,E_j\Ph(0)| \nn\\
& \qquad -\sigma^-_{jk}E_jE_\kappa f'|\wh{N-6},N-4,N-3,N-2,N-1,E_i\Ph(0)|, \\
\label{A21-fN-5}
& f'|\wh{N-5},N-3,E_i\Ph(0),E_j\Ph(0),E_\kappa \Ph(0)| \nn \\
& ~~ =-\sigma^-_{ij}E_iE_jf'|\wh{N-5},N-3,N-2,N-1,E_\kappa \Ph(0)| \nn\\
& \qquad+\sigma^-_{i\kappa}E_iE_\kappa f'|\wh{N-5},N-3,N-2,N-1,E_j\Ph(0)|\nn \\
&\qquad -\sigma^-_{j\kappa}E_jE_\kappa f'|\wh{N-5}, N-3,N-2,N-1,E_i\Ph(0)|, \\
\label{A21-fN-4}
&f'|\wh{N-4},E_i\Ph(0),E_j\Ph(0),E_\kappa \Ph(0)| \nn \\
& ~~ =-\sigma^-_{ij}E_iE_jf'|\wh{N-4},N-2,N-1,E_\kappa \Ph(0)|
+\sigma^-_{i\kappa}E_iE_\kappa f'|\wh{N-4},N-2,N-1,E_j\Ph(0)| \nn \\
& \qquad  -\sigma^-_{j\kappa}E_jE_\kappa f'|\wh{N-4},N-2,N-1,E_i\Ph(0)|,
\\
\label{A21-gN-5}
&g'|\wh{N-5},N-3,E_i\Ph(0),E_j\Ph(0),E_\kappa \Ph(0)|\nn\\
& ~~ =-\sigma^-_{ij}E_iE_jf'|\wh{N-5},N-3,N-2,N, E_\kappa \Ph(0)|
+\sigma^-_{i\kappa}E_iE_\kappa f'|\wh{N-5},N-3,N-2,N,E_j\Ph(0)|\nn \\
&\qquad -\sigma^-_{j\kappa}E_jE_\kappa f'|\wh{N-5},N-3, N-2,N,E_i\Ph(0)|,\\
\label{A21-gN-4}
& g'|\wh{N-4},E_i\Ph(0),E_j\Ph(0),E_\kappa \Ph(0)|\nn\\
& ~~ =
-\sigma^-_{ij}E_iE_jf'|\wh{N-4},N-2,N,E_\kappa\Ph(0)|
+\sigma^-_{i\kappa}E_iE_\kappa f'|\wh{N-4},N-2,N,E_j\Ph(0)| \nn \\
& \qquad -\sigma^-_{j\kappa}E_jE_\kappa f'|\wh{N-4},N-2,N,E_i\Ph(0|, \\
\label{A21-hN-4}
& h'|\wh{N-4},E_i\Ph(0),E_j\Ph(0),E_\kappa\Ph(0)|\nn \\
& ~~ = -\sigma^-_{ij}E_iE_jf'|\wh{N-4},N-2,N+1,E_\kappa\Ph(0)|
+\sigma^-_{i\kappa} E_iE_\kappa f'|\wh{N-4},N-2,N+1,E_j\Ph(0)|\nn \\
& \qquad
-\sigma^-_{j\kappa}E_jE_\kappa f'|\wh{N-4},N-2,N+1,E_i\Ph(0|, \\
\label{A21-r-fdthc}
& E_\kappa f'\sigma^-_{ij}E_iE_jf'-E_jf' \sigma^-_{i\kappa}E_iE_\kappa f'
+E_if'\sigma^-_{j\kappa}E_jE_\kappa f'\nn\\
& ~~ =(-1)^{N-1}\Bigl (|\wh{N-2},E_\kappa \Ph(0)||\wh{N-3},E_j\Ph(0),E_i\Ph(0)|\nn \\
&\qquad -|\wh{N-2},E_j\Ph(0)||\wh{N-3},E_\kappa\Ph(0),E_i\Ph(0)|\nn\\
&\qquad +|\wh{N-2},E_i\Ph(0)||\wh{N-3},E_\kappa\Ph(0),E_j\Ph(0)|\Bigr)=0, \\
\label{A21-gamma4}
&(-1)^{N-1}\text{tr}(\bK) E_i f'=\bigg[-|\wh{N-6},N-4,N-3,N-2,N-1,E_i\Ph(0)| \nn\\
&\qquad +|\wh{N-5},N-3,N-2,N,E_i\Ph(0)|-|\wh{N-4},N-2,N+1,E_i\Ph(0)| \nn\\
& \qquad +|\wh{N-3},\psi(N+2),E_i\Ph(0)|+|\wh{N-2},E_i\Ph(4)|\nn \\
& \qquad -\epsilon_1\Bigl(|\wh{N-5},N-3,N-2,N-1, E_i\Ph(0)|-|\wh{N-4},N-2,N,E_i\Ph(0)|\nn\\
&\qquad+|\wh{N-3},N+1,E_i\Ph(0)| +|\wh{N-2},E_i\Ph(3)|\Bigr)\nn \\
&\qquad +\epsilon_2\Bigl(-|\wh{N-4},N-2, N-1,E_i\Ph(0)|+|\wh{N-3},N,E_i\Ph(0)|+|\wh{N-2},E_i\Ph(2)|\Bigr)\nn\\
&\qquad -\epsilon_3\Bigl(|\wh{N-3},N-1,E_i\Ph(0)|+|\wh{N-2},E_i\Ph(1)|\Bigr)\bigg].
\end{align}
\end{subequations}
Equations \eqref{A21-fN-6}-\eqref{A21-hN-4} are due to the identity \eqref{plu-r-2} with $k=4$ rather than the identity \eqref{plu-r-1}.
For example,  equation \eqref{A21-fN-6} corresponds to
\begin{align*}
&\bP=(\wh{N-6},N-4,N-3,N-2,N-1),~~\bQ=(\wh{N-6},N-4,N-3),\\
& \ba_1=\Ph(N-5),~~\ba_2=E_i\Ph(0),~~\ba_3=E_j\Ph(0),~~\ba_4=E_{\kappa}\Ph(0).
\end{align*}
Equation \eqref{A21-r-fdthc} is a case of   \eqref{plu-r-1} where
\[\bM=(\wh{N-3}),~~ \ba=\Ph(N-2), ~~\bb=E_{\kappa}\Ph(0),~~ \bc=E_j\Ph(0), ~~ \bd=E_i\Ph(0).\]
 For the verification of \eqref{A21-gamma4}, one can refer to the one for
\eqref{Ga4 B2}, while here taking $\Omega_{i,j}$ as \eqref{Omega-ij} but
$\Xi=\big(\wh{N-2},E_i\Ph(0) \Big)$ and making use of the expression \eqref{A21f-i}.

\section{Casoratian formulas for the alternative GD-4 (A-2) equation}
\label{A2:formulae}

The basic shift formulas are
\begin{subequations}
\label{Formula-a221}
\begin{align}
\label{A22 f-f dot}
& f'=(-1)^{N-1}|\d{\wh{N-2}},\Ph(0)|,\\
\label{A22-h}
&h'=(-1)^{N-2}(|\d{\wh{N-3}},\Ph(0),\d{\d{\Ph}}(N-1)|-2\sigma_4|\d{\wh{N-3}},\Ph(0),\d{\Ph}(N-1)|)+\sigma_4^2 f',\\
\label{A22-dg}
& \d{g}'=|\d{\wh{N-2}},\d{\d{\Ph}}(N-1)|-\sigma_4 \d{f}',\\
\label{A22 dEf}
&E_i\d{f}'=(-1)^{N-1}|\d{\wh{N-2}},E_i\d{\Ph}(0)|,\\
\label{A22-dEg}
&E_i \d{g}'=(-1)^{N-1}|\d{\wh{N-3}},\d{\Ph}(N-1),E_i\d{\Ph}(0)|-\sigma_i E_i\d{f}',\\
\label{A22-dEh}
&E_i\d{h}'=(-1)^{N-2}(|\d{\wh{N-3}},E_i\d{\Ph}(0),\d{\d{\Ph}}(N-1)|
-\sigma_4|\d{\wh{N-3}},E_i\d{\Ph}(0),\d{\Ph}(N-1)|)-\sigma_iE_i\d{g}',\\
\label{A22 p-b f-i}
&\sigma^{-}_{i4}E_i f'=|\d{\wh{N-3}},\Ph(0),E_i \d{\Ph}(0)|,\\
\label{A22 bf g-dot}
& \sigma_4 f'+g'=(-1)^{N-1}|\d{\wh{N-3}},\d{\Ph}(N-1),\Ph(0)|,\\
\label{A22 dotfg-i}
& E_i(\sigma_i\d{f}'+\d{g}')=(-1)^{N-1}|\d{\wh{N-3}},\d{\Ph}(N-1),E_i\d{\Ph}(0)|.
\end{align}
\end{subequations}

Equations \eqref{A22 f-f dot}-\eqref{A22-dg} can be proved directly in light of the definition \eqref{eq:moB2}
and shift relation \eqref{sig4-dd}. \eqref{A22 dEf}-\eqref{A22-dEh} and \eqref{A22 dotfg-i} are the dot shift results of $E_if'$,
$E_ig'$, $E_ih'$ and $E_i(\sigma_i f'+g')$, where $E_if'$ and $E_i(\sigma_i f'+g')$ have been given in \eqref{A21f-i} and \eqref{A21-sigi-Eifg}, respectively.

\section{Casoratian formulas for the GD-4 (C-3) equation}
\label{C3:formulae}

The basic shift formulas are
\begin{subequations}
\label{Formula-C3b}
\begin{align}
\label{C3-sigi-j-f}
&\sigma_i^{j}E_i\d{f}'=(-1)^{N-1+j}|\d{\wh{N-2}},E_i\d{\Ph}(j)|,\quad j\leq N-1,\\
\label{C3-sigij4-Eijf}
&\sigma^{-}_{ij4}E_iE_j f'=(-1)^{N}|\d{\wh{N-4}},E_i\d{\Ph}(0),E_j\d{\Ph}(0),\Ph(0)|,\\
\label{C3-sig123-dfdthc}
&\sigma^{-}_{123}\d{\dthc{f}}'=(-1)^N|\d{\wh{N-4}},\d{\dt{\Ph}}(0),\d{\dh{\Ph}}(0),\d{\dc{\Ph}}(0)|, \\
\label{C3-sig1234-dthf}
&\sigma^{-}_{1234}\dthc{f}'=|\d{\wh{N-5}},\d{\dt{\Ph}}(0),\d{\dh{\Ph}}(0),\d{\dc{\Ph}}(0),\Ph(0)|,
\\
\label{C3-sig1234-dthgf}
&\sigma^{-}_{1234}(\dthc{g}'+\sigma^{+}_{1234}\dthc{f}')
=|\d{\wh{N-6}},\d{\Ph}(N-4),\d{\dt{\Ph}}(0),\d{\dh{\Ph}}(0),\d{\dc{\Ph}}(0),\Ph(0)|.
\end{align}
\end{subequations}
Equations \eqref{C3-sigi-j-f} and \eqref{C3-sig123-dfdthc} are the dot shift of \eqref{B2:fd}
and \eqref{A21:f-dthc} (with $i=1,j=2,\kappa=3$), respectively.
Proofs of the remaining ones can be manipulated similarly to the relations \eqref{A21:f-dthc} and \eqref{A21:fg-dthc}.
Some more formulas are
\begin{subequations}
\label{Formula-c32}
\begin{align}
\label{C3-fN-6}
&\d{f}'|\d{\wh{N-6}},\d{\Ph}(N-4),\d{\dt{\Ph}}(0),\d{\dh{\Ph}}(0),\d{\dc{\Ph}}(0),\Ph(0)| \nn \\
& ~~ =(-1)^{N}\Bigl[\sigma^{-}_{123}\d{\dthc{f}}'|\d{\wh{N-6}},\d{\Ph}(N-4),\d{\Ph}(N-3),
\d{\Ph}(N-2),\d{\Ph}(N-1),\Ph(0)| \nn\\
&\qquad
-\sigma^{-}_{124}\dth{f}'|\d{\wh{N-6}},\d{\Ph}(N-4),\d{\Ph}(N-3),
\d{\Ph}(N-2),\d{\Ph}(N-1),\d{\dc{\Ph}}(0)|\nn\\
&\qquad
+\sigma^{-}_{134}\dtc{f}'|\d{\wh{N-6}},\d{\Ph}(N-4),\d{\Ph}(N-3),\d{\Ph}(N-2),
\d{\Ph}(N-1),\d{\dh{\Ph}}(0)|\nn\\
&\qquad
-\sigma^{-}_{234}\dhc{f}'
|\d{\wh{N-6}},\d{\Ph}(N-4),\d{\Ph}(N-3),\d{\Ph}(N-2),\d{\Ph}(N-1),\d{\dt{\Ph}}(0)|\Bigr],
 \\
\label{C3-fN5}
& \d{f}'|\d{\wh{N-5}},\d{\dt{\Ph}}(0),\d{\dh{\Ph}}(0),\d{\dc{\Ph}}(0),\Ph(0)|\nn \\
& ~~ =(-1)^{N}\Bigl[
\sigma^{-}_{123}\d{\dthc{f}}'|\d{\wh{N-5}},\d{\Ph}(N-3),\d{\Ph}(N-2),\d{\Ph}(N-1),\Ph(0)| \nn\\
&\qquad-\sigma^{-}_{124}\dth{f}'|\d{\wh{N-5}},\d{\Ph}(N-3),\d{\Ph}(N-2),\d{\Ph}(N-1),\d{\dc{\Ph}}(0)|\nn\\
&\qquad  +\sigma^{-}_{134}\dtc{f}'
\d{\wh{N-5}},\d{\Ph}(N-3),\d{\Ph}(N-2),\d{\Ph}(N-1),\d{\dh{\Ph}}(0)|\nn \\
&\qquad -\sigma^{-}_{234}\dhc{f}'|\d{\wh{N-5}},\d{\Ph}(N-3),
\d{\Ph}(N-2),\d{\Ph}(N-1),\d{\dt{\Ph}}(0)|\Bigr],
\\
\label{C3-gN-5}
&\d{g}'|\d{\wh{N-5}},\d{\dt{\Ph}}(0),\d{\dh{\Ph}}(0),\d{\dc{\Ph}}(0),\Ph(0)| \nn \\
& ~~ =(-1)^{N}\Bigl[\sigma^{-}_{123}\d{\dthc{f}}'|\d{\wh{N-5}},\d{\Ph}(N-3),\d{\Ph}(N-2),\d{\Ph}(N),\Ph(0)|
\nn\\
&\qquad -\sigma^{-}_{124}\dth{f}'|\d{\wh{N-5}},\dot{\Ph}(N-3),\d{\Ph}(N-2),\d{\Ph}(N),\d{\dc{\Ph}}(0)|\nn \\
&\qquad +\sigma^{-}_{134}\dtc{f}'|\d{\wh{N-5}},\d{\Ph}(N-3),
\d{\Ph}(N-2),\d{\Ph}(N),\d{\dh{\Ph}}(0)| \nn \\
& \qquad -\sigma^{-}_{234}\dhc{f}'|\d{\wh{N-5}},\d{\Ph}(N-3),\d{\Ph}(N-2),\d{\Ph}(N),\d{\dt{\Ph}}(0)|\Bigr],
\\
\label{G4-C3}
&(-1)^{N-1}\text{tr}(\bK)E_i\d{f}'\nn\\
&=-|\d{\wh{N-6}},\d{\Ph}(N-4),\d{\Ph}(N-3),\d{\Ph}(N-2),\d{\Ph}(N-1),E_i \d{\Ph}(0)| \nn \\
& \qquad +|\d{\wh{N-5}},\d{\Ph}(N-3), \d{\Ph}(N-2), \d{\Ph}(N),E_i\d{\Ph}(0)|\nn \\
& \qquad -|\d{\wh{N-4}},\d{\Ph}(N-2),\d{\Ph}(N+1),E_i\d{\Ph}(0)|
+|\d{\wh{N-3}}, \d{\Ph}(N+2),E_i\d{\Ph}(0)|+|\d{\wh{N-2}},E_i\d{\Ph}(4)|\nn \\
& \qquad -\epsilon_1\Bigl(|\d{\wh{N-5}},\d{\Ph}(N-3),\d{\Ph}(N-2),\d{\Ph}(N-1), E_i\d{\Ph}(0)| \nn \\
& \qquad -|\d{\wh{N-4}},\d{\Ph}(N-2),\d{\Ph}(N),E_i\d{\Ph}(0)|+|\d{\wh{N-3}},\d{\Ph}(N+1),E_i\d{\Ph}(0)|
+|\d{\wh{N-2}},E_i\d{\Ph}(3)|\Bigr)\nn \\
& \qquad +\epsilon_2\Bigl(-|\d{\wh{N-4}},\d{\Ph}(N-2),\d{\Ph}(N-1),E_i\d{\Ph}(0)|
+|\d{\wh{N-3}},\d{\Ph}(N),E_i\d{\Ph}(0)|+|\d{\wh{N-2}},E_i\d{\Ph}(2)|\Bigr)\nn \\
& \qquad -\epsilon_3\Bigl(|\d{\wh{N-3}},\d{\Ph}(N-1),E_i\d{\Ph}(0)|+|\d{\wh{N-2}},E_i\d{\Ph}(1)|\Bigr), \\
\label{C3-fbf0}
&\sigma^{-}_{123}f'\d{\dthc{f}}'-\sigma^{-}_{124}\d{\dc{f}}'{\dth{f}'}
+\sigma^{-}_{134}
\d{\dh{f}}'{\dtc{f}'}-\sigma^{-}_{234}\d{\dt{f}}'{\dhc{f}'}\nn \\
& ~~ =-(|\d{\wh{N-2}},\Ph(0)||\d{\wh{N-2}},
\dt{\d{\Ph}}(0),\dh{\d{\Ph}}(0),\dc{\d{\Ph}}(0)|\nn\\
&\qquad-|\d{\wh{N-2}},\d{\dt{\Ph}}(0)||\d{\wh{N-2}},\Ph(0),\dh{\d{\Ph}}(0),\dc{\d{\Ph}}(0)|
+|\d{\wh{N-2}},\d{\dh{\Ph}}(0)||\d{\wh{N-2}},\Ph(0),\dt{\d{\Ph}}(0),\dc{\d{\Ph}}(0)|\nn\\
&\qquad-|\d{\wh{N-2}},\d{\dc{\Ph}}(0)||\d{\wh{N-2}},\Ph(0),\dt{\d{\Ph}}(0),\dh{\d{\Ph}}(0)|)=0.
\end{align}
\end{subequations}
Note that equations \eqref{C3-fN-6}-\eqref{C3-gN-5} are due to the identity \eqref{plu-r-2} with $k=5$. For instance,
equation \eqref{C3-fN-6} is obtained from \eqref{plu-r-2} with $k=5$ and
\begin{align*}
& \bP=(\d{\wh{N-6}},\d{\Ph}(N-4),\d{\Ph}(N-3), \d{\Ph}(N-2),\d{\Ph}(N-1)), 
~~\bQ=(\d{\wh{N-6}},\d{\Ph}(N-4)),\\
& \ba_1=\d{\dt{\Ph}}(0),~~ \ba_2=\d{\dh{\Ph}}(0),~~ 
 \ba_3=\d{\dc{\Ph}}(0),~~ \ba_4=\Ph(0),~~ \ba_5=\d{\Ph}(N-5).
\end{align*}

\vskip 20pt

{\small
}

\end{document}